\newcommand{\bea}{\begin{eqnarray}}
\newcommand{\eea}{\end{eqnarray}}
\newcommand{\be}{\begin{equation}}
\newcommand{\ee}{\end{equation}}
\theoremstyle{plain}
\newtheorem{theorem}{Theorem}
\newtheorem{corollary}[theorem]{Corollary}
\newtheorem{proposition}[theorem]{Proposition}
\newtheorem{lemma}[theorem]{Lemma}
\theoremstyle{definition}
\newtheorem{definition}[theorem]{Definition}
\newcommand{\beas}{\begin{eqnarray*}}
\newcommand{\eeas}{\end{eqnarray*}}
\numberwithin{equation}{section}
\numberwithin{theorem}{section}
\begin{document}

\centerline{\Large {\bf A time-dependent energy-momentum method}}
\vskip 0.5cm

\centerline{ J. de Lucas  and B.M. Zawora}
\vskip 0.5cm

\centerline{ Department of Mathematical Methods in Physics, University of Warsaw,}
\medskip
\centerline{ul. Pasteura 5, 02-093 Warszawa, Poland}
\medskip

\vskip 1cm

\begin{abstract}
    We devise a generalisation of the energy momentum-method for studying the stability of non-autonomous Hamiltonian systems with a Lie group of Hamiltonian symmetries. A generalisation of the relative equilibrium point notion to a non-autonomous realm is provided and studied. Relative equilibrium points of a class of non-autonomous Hamiltonian systems are described via foliated Lie systems, which opens a new field of application of such systems of differential equations. We reduce non-autonomous Hamiltonian systems via the Marsden--Weinstein theorem and we provide conditions ensuring the stability of the projection of relative equilibrium points to the reduced space. As a byproduct, a geometrical extension of notions and results from Lyapunov stability theory on linear spaces to  manifolds is provided. As an application, we study a class of mechanical systems, the hereafter called almost-rigid bodies, which covers rigid bodies as a particular instance. 
\end{abstract}

\bigskip\noindent
\textit{{\bf MSC 2010:} 34A26, 34A05, 34A34 (primary) 70H05, 70H14 (secondary)}

\medskip\noindent
\textit{{\bf PACS numbers:} 02.30.Hq, 11.10.Ef, 02.40.-k}

\medskip\noindent
\textit{{\bf Key words:} energy-momentum method, foliated Lie system, integrable system, Lyapunov integrability,  relative equilibrium point 
}

\section{Introduction}

Symplectic geometry has a fruitful history of applications to classical mechanics \cite{AM78,Go80,GS90}. Its origin can be traced back to the pioneering works by Lagrange, who carefully analysed  the rotational motion of mechanical systems  \cite{La09}. 

Toward the end of the XXth century, the Marsden--Weinstein reduction theorem \cite{MW74} was devised so as to describe the reduction of Hamiltonian systems on a symplectic manifold admitting a certain Lie group of symmetries of the Hamiltonian of the system and the symplectic form of the manifold. This theorem, an improvement of previous ideas by Lie, Smale, and Cartan \cite{MW01}, led to relevant applications in classical mechanics as well as many extensions to other types of geometric structures \cite{Al89,CFZ07,MW86}.

Let $\Phi:G\times P\rightarrow P$ be a Lie group action having a family of Hamiltonian fundamental vector fields relative to a symplectic form $\omega$ on $P$, i.e. a {\it Hamiltonian Lie group action}, and leaving invariant $h\in C^\infty(P)$. Weinstein and Marsden used $\Phi$ and $\omega$ to define the so-called {\it momentum map} ${\bf J}:P\rightarrow \mathfrak{g}^*$, where $\mathfrak{g}^*$ is the dual to the Lie algebra, $\mathfrak{g}$, of $G$. By assuming ${\bf J}$ to be {\it equivariant} \cite[p. 279]{AM78} relative to $\Phi$ and the coadjoint action, Marsden and Weinstein reduced the Hamiltonian problem $h$ on $P$ to a problem in the space of orbits $P_\mu:={\bf J}^{-1}(\mu)/G_\mu$, for a regular point $\mu\in \mathfrak{g}^*$ of ${\bf J}$, relative to the isotropy subgroup $G_\mu\subset G$ of $\mu$ acting freely and properly on ${\bf J}^{-1}(\mu)$. Remarkably, $P_\mu$ admits a canonically defined symplectic form, $\omega_\mu$, while the Hamiltonian system $h$ on $P$ leads to a new one  on $P_\mu$ given by the unique function $k_\mu$ such that $k_\mu\circ \pi_\mu:=h$ on ${\bf J}^{-1}(\mu)$, where $\pi_\mu:{\bf J
}^{-1}(\mu)\rightarrow P_\mu$ is the quotient map.

The Hamiltonian system $k_\mu$ on $P_\mu$ has {\it equilibrium points}, i.e. stable points relative to the evolution given by the Hamilton equations for $k_\mu$ in $P_\mu$, that are the projection of not necessarily equilibrium points of $h$ on $P$, the referred to as {\it relative equilibrium points} of $h$ relative to $\Phi$ \cite{AM78,MS88}. It is interesting to study the properties of the solutions to the Hamilton equations of $h$ that project onto equilibrium points of $k_\mu$. It is also relevant to study the stability of the Hamilton equations for $k_\mu$ close to its equilibrium points. The energy-momentum method was developed to study these problems, which are autonomous  \cite{MS88}. Instead of analysing straightforwardly the reduced system on $P_\mu$, the energy-momentum method studies the Hamiltonian problem on $P_\mu$ via the properties of the initial function $h$ on $P$, which is easier as it avoids, among other difficulties, the necessity of constructing $P_\mu$ and $k_\mu$ explicitly (cf. \cite{MS88}). 

There have been several generalisations of the energy-momentum method as well as some improvements and many applications of the developed theories (see \cite{LMRS90,OPR05,SLM91,ST92,TWP19} and references therein). In this work, we present a time-dependent generalisation of the energy-momentum method on symplectic manifolds. The Marsden--Weinstein theorem can also be applied to a time-dependent function $h:\mathbb{R}\times P\rightarrow \mathbb{R}$ that is invariant relative to a Hamiltonian Lie group action $\Phi$ with respect to a  symplectic form $\omega$ on $P$ (cf. \cite{MW74}). We here suggest a definition of a relative equilibrium point for $h$ relative to $\Phi$. We also study the structure of the space of relative equilibrium points in $P$. 

Our work proves that the dynamics of $h$ on its space of relative equilibrium points can be described, in certain cases, through {\it foliated Lie systems} \cite{CGM00}. The work \cite{CGM00} details  the potential application of   foliated Lie systems in integrable Hamiltonian systems and other rather theoretical  examples. Our work shows another potential field of application of foliated Lie systems.

The stability of the Hamilton equations for $k_\mu$, obtained through the reduction of $h:\mathbb{R}\times P\rightarrow \mathbb{R}$ via the Marsden--Weinstein theorem, close to its equilibrium points is addressed by studying the properties of $h$ and some of its restrictions and reductions to ${\bf J}^{-1}(\mu)$ or $P_\mu$. Our theory retrieves quite easily the results of the classical energy-momentum method, which deals with autonomous Hamiltonian systems. Our time-dependent energy-momentum method requires the use of time-dependent Lyapunov stability theory \cite{Kh87,Vi02}, which is much more involved than standard techniques employed in the energy-momentum method. To illustrate this fact, one can compare Lemma \ref{Lemm:MegaLemma}, Theorems \ref{Th:StabilityCon} and \ref{Th:StabilityCon2} with the standard results in \cite{MS88}. As a byproduct, our work also extends some results of the Lyapunov stability theory on $\mathbb{R}^n$ to manifolds. It is worth stressing that the extension of Lyapunov theory to manifolds has drawn attention just during the last fifteen years (cf. \cite{Mo11} and references therein).

As a potential application, we study an orbiting mechanical system that, as a particular case, retrieves the rigid body and the standard theory that can be found, for instance, in the classical work \cite{MS88}. Due to the many applications of the energy-momentum method and their generalisations \cite{TWP19}, our results may have numerous potential applications. 

The work goes as follows. Section 2 details a  generalisation of some fundamental results on Lyapunov stability on $\mathbb{R}^n$ to manifolds. Section 3 describes some basic notions on  symplectic manifolds and the conventions to be used hereafter. Section 3 also gives some generalisations to the $t$-dependent realm of results on autonomous Hamiltonian systems. Section 4 generalises the notion of relative equilibrium point to time-dependent Hamiltonian systems. Section 5 studies the relation between the manifold of relative equilibrium points and foliated Lie systems. Section 6 analyses the stability of trajectories around equilibrium points of non-autonomous Hamiltonian systems. Section 7 links the properties of stable points in $P_\mu$ and their associated relative equilibrium points in ${\bf J}^{-1}(\mu)$. Section 8 details an example of our theory. Finally, our results are summarised and an outlook of further research is presented in Section 9.  

\section{Fundamentals on the Lyapunov stability of non-autonomous systems}

From now on, and if not otherwise stated, we assume all structures to be smooth, real, and globally defined. This stresses the key ideas of our presentation. Additionally, manifolds are assumed to be finite-dimensional and connected. 

Let us provide an adaptation of some basic results on the Lyapunov stability theory on $\mathbb{R}^n$  \cite{Ha67,Kh87,MLS94,Vi02} to manifolds. This will allow us to use this theory to study differential equations on manifolds. It is worth stressing that, as far as we know, the idea of extending Lyapunov theory to manifolds is very recent and only a couple of works on the topic  have been published so far (see \cite{Mo11} and references therein). It will be simple to see that our approach retrieves the standard Lyapunov theory when restricted to problems on a Euclidean space $\mathbb{R}^n$. Our final aim is to apply these techniques to studying the stability of the Hamilton equations of reduced $t$-dependent Hamiltonian systems by the Marsden--Weinstein theorem \cite{MW74} close to its equilibrium points. 

To generalise Lyapunov theory on linear spaces to manifolds, one has to find a substitute for the norm on linear spaces, which is extensively used in Lyapunov theory on linear spaces. This norm comes from a Euclidean metric on linear spaces. Let us show how to extend this structure to manifolds. Euclidean metrics give rise to Riemannian metrics. In fact, any manifold $P$ admits a Riemannian metric due to the existence of partitions of unity \cite{BC64}. Let us assume $P$ to be endowed with a Riemannian metric $g$. Then, the  distance between two points $x_1,x_2\in P$, let us say $d(x_1,x_2)$, is given by 
\[d(x_1,x_2):=\!\!\!\!\!\!\inf_{\substack{\tiny \gamma:[0,1]\rightarrow P\\\gamma(0)=x_1,\gamma(1)=x_2}}\!\!\!\!\!\!{\rm length}(\gamma), 
\]
where ${\rm length}(\gamma)$ is the length of a curve $\gamma$ in $P$ relative to $g$ (see \cite{Le09}). Let $B_{r,x_e}$ be the ball of radius $r$ around $x_e\in P$ relative to the distance induced by $g$, namely $B_{r,x_e}:=\{x\in P\,:\,d(x,x_e)<r\}$ with $r>0$. It can be proved that the topology induced by a Riemannian metric on $P$ is the same as the topology of the manifold $P$ \cite{KN96}. Then, given a point $x\in P$, every map on $P$ containing $x$ gives a homomorphism (in a topological sense) to an open subset of $\mathbb{R}^n$. Hence, on an open coordinate neighbourhood of $x\in P$, the topology of the manifold is equivalent to the topology of an open subset in $\mathbb{R}^n$ given by the standard norm in $\mathbb{R}^n$. Consequently, topological properties on an open coordinate neighbourhood of each $x\in P$ can be studied by using the norm on $\mathbb{R}^n$.

Hereafter, $t$ stands for the physical time.
Let $X:(t,x)\in \mathbb{R}\times P\mapsto X(t,x)\in TP$ be a $t$-dependent vector field on $P$, namely a $t$-parametric family of vector fields $X_t:x\in P\mapsto X(t,x)\in TP$ on $P$ with $t\in \mathbb{R}$ (see \cite{LS20} for details). 
Let us consider the following non-autonomous dynamical system
\begin{equation} 
\label{eq:1}
\frac{dx}{dt}=X(t,x), \qquad \forall x\in P,\qquad \forall t\in \mathbb{R},
\end{equation}
where $X$ is assumed to be smooth and then  (\ref{eq:1}) satisfies the conditions of the Theorem of existence and uniqueness of solutions \cite[Theorem 2.1.2]{AM78}.

Let $\bar{\mathbb{R}}:=\mathbb{R}_+\cup\{0\}$ be the space of non-negative real numbers. We hereafter write $I_{t'}:=[t',\infty[$ for any $t'\in \mathbb{R}$ and $I_{-\infty}:=\mathbb{R}$. A point $x_e\in P$ is an {\it equilibrium point} of (\ref{eq:1}) if $X(t,x_e)=0$ for every $t\in \mathbb{R}$. An equilibrium point $x_e$ is {\it stable} from $t^0\in \mathbb{R}$ if, for every $t_0\in I_{t^0}$ and any ball $B_{\epsilon,x_e}$, there exists a ball of radius $\delta(t_0,\epsilon)$, namely $B_{\delta(t_0,\epsilon),x_e}$, such that every solution $x(t)$ to (\ref{eq:1}) with $x(t_0)\in B_{\delta(t_0,\epsilon),x_e}$ satisfies that $x(t)\in B_{\epsilon,x_e}$ for all time $t\in I_{t_0}$. If $t^0$ is not hereafter explicitly detailed, we assume that $t^0=-\infty$. An equilibrium point $x_e\in P$ is {\it uniformly stable} from $t^0\in \mathbb{R}$ if for every $\epsilon>0$, one can choose $\delta(t_0,\epsilon)$, with $t_0\in I_{t^0}$, to be independent of $t_0$. An equilibrium point is {\it unstable} from $t^0$ if it is not stable from $t^0$.

An equilibrium point $x_e$ is {\it asymptotically stable} from $t^0$ if $x_e$ is stable and for every $t_0\in I_{t^0}$ there exists an open neighbourhood $B_{r(t_0),x_e}$ of $x_e$ such that every solution $x(t)$ to (\ref{eq:1}) with $x(t_0)\in B_{r(t_0),x_e}$ converges to $x_e$. Moreover, $x_e$ is {\it uniformly asymptotically stable} from $t^0$ if it is asymptotically stable and $r(t_0)$ can be chosen to be independent of $t_0\in I_{t^0}$ and the convergence to $x_e$ is uniform relative to $x$ in $B_{r,x_e}$ and $t\in I_{t^0}$ (for more details, see \cite[p. 140]{Vi02}).

\begin{definition}
\label{lpdf}
A continuous function $M:I_{t^0}\times P\rightarrow\mathbb{R}$ is a {\it locally positive definite function} ({\it lpdf}) {\it at an equilibrium point $x_e$ from $t^0\in \mathbb{R}$} if, for some $r>0$ and some continuous, strictly increasing function $\alpha:\bar{\mathbb{R}}\rightarrow\mathbb{R}$ with $\alpha(0)=0$, one has that 
\[
    M(t,x_e)=0,\quad M(t,x)\geq \alpha(d(x,x_e)),\quad\forall t\in I_{t^0},\quad \forall x\in B_{r,x_e}.
\]
\end{definition}

\begin{definition}
\label{DecrescentFunction}
A continuous function $M:I_{t^0}\times P\rightarrow\mathbb{R}$ is {\it  decrescent at an equilibrium point $x_e$} from $t^0\in \mathbb{R}$ if, for some $s >0$ and some continuous, strictly increasing function $\beta:\bar{\mathbb{R}}\rightarrow\mathbb{R}$ with $\beta(0)=0$, is fulfilled that 
\[
    M(t,x)\leq \beta(d(x,x_e)),\quad\forall t\in I_{t^0},\quad \forall x\in B_{s,x_e}.
\]
\end{definition}
Although  Definitions \ref{lpdf} and \ref{DecrescentFunction} concern a continuous function $M$, and so is in the literature \cite{Ha67,Kh87,MLS94,Vi02}, it is more relevant for our purposes to consider that  $M(t,x)$ is a $C^1$ function. That is why we hereafter assume that $M$ is $C^1$.
We define $\dot M:I_{t^0}\times P\rightarrow \mathbb{R}$ to be a function so that  $\dot{M}(\hat t,\hat x)$, for $(\hat t,\hat x)\in I_{t^0}\times P$, is the time derivative of $M(t,x(t))$ at $t=\hat t$ along the particular solution $x(t)$  of \eqref{eq:1} with initial condition $x(\hat t)=\hat x$, i.e. 
\begin{equation}\label{Eq:Mdot}
\dot{M}(\hat t,\hat x):=\frac{d}{dt}\bigg|_{t=\hat t}M(t,x(t))=\frac{\partial M}{\partial t}(\hat t,\hat x)+\sum_{i=1}^{\dim P}\frac{\partial M}{\partial x^i}(\hat t,\hat x) X^i(\hat t,\hat x),
\end{equation}
where $\{x^1,\ldots,x^{\dim P}\}$ is  a local coordinate system in $P$ around $\hat x$ and $X^1,\ldots,X^{\dim P}$ are the coordinates of $X$ in the basis of vector fields associated with the given local  coordinates.

Above definitions are significant to understand Theorem \ref{Th:BasicTheoremOfLyapunov}, which allows us to determine the stability of (\ref{eq:1}) by studying the properties of an appropriate associated function.

For the sake of completeness and clarity, we shall write down an extension to manifolds of some classical results for linear spaces \cite{Vi02} given by the following theorems.

\begin{theorem}
\label{Mstable}
An equilibrium point $x_e\in P$ of the system (\ref{eq:1}) is stable from $t^0$ if there exists a  lpdf $C^1$-function $M:I_{t^0}\times P\rightarrow \mathbb{R}$ from $t^0\in \mathbb{R}$ and a constant $r>0$ such that
\[
\dot{M}(t,x)\leq 0,\qquad \forall t\in I_{t^0},\quad \forall x\in B_{r,x_e}.
\]
\end{theorem}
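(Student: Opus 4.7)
The plan is to mimic the classical Lyapunov stability argument on $\mathbb{R}^n$, localising everything to a relatively compact geodesic ball around $x_e$ so that the Riemannian distance $d$ plays the role of the Euclidean norm, and so that the lpdf lower bound $\alpha(d(x,x_e))$ substitutes for the usual norm-based bound. Fix $t_0\in I_{t^0}$ and $\epsilon>0$. After possibly shrinking $\epsilon$, one may assume $\overline{B_{\epsilon,x_e}}$ is compact and contained in $B_{r,x_e}$ as well as in the neighbourhood on which $M$ satisfies $M(t,x)\geq \alpha(d(x,x_e))$; such an $\epsilon$ exists because, as noted in the text, small Riemannian balls are homeomorphic to Euclidean balls and hence have compact closure.

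On the geodesic sphere $S_\epsilon:=\{x\in P:d(x,x_e)=\epsilon\}$, the lpdf property forces $M(t,x)\geq \alpha(\epsilon)>0$ for every $t\in I_{t^0}$. Since $M(t_0,\cdot)$ is continuous and $M(t_0,x_e)=0$, I can pick $\delta=\delta(t_0,\epsilon)\in(0,\epsilon)$ such that $M(t_0,x)<\alpha(\epsilon)$ for all $x\in B_{\delta,x_e}$. Now take any solution $x(t)$ of \eqref{eq:1} with $x(t_0)\in B_{\delta,x_e}$. As long as $x(t)\in B_{r,x_e}$, the hypothesis $\dot M\leq 0$ together with \eqref{Eq:Mdot} yields $\tfrac{d}{dt}M(t,x(t))\leq 0$, so
\[
M(t,x(t))\leq M(t_0,x(t_0))<\alpha(\epsilon).
\]
If some first $t_1>t_0$ satisfied $d(x(t_1),x_e)=\epsilon$, then $x(t_1)\in S_\epsilon$ would give $M(t_1,x(t_1))\geq \alpha(\epsilon)$, a contradiction. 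Hence $x(t)\in B_{\epsilon,x_e}$ for every $t\geq t_0$ in the domain of the solution; compactness of $\overline{B_{\epsilon,x_e}}$ then rules out finite-time blow-up, yielding existence of the trajectory on all of $I_{t_0}$ and the desired stability.

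The main obstacle is not the logic, which is classical, but pinning down the interplay between the manifold topology and the Riemannian distance: one needs the equivalence between the metric topology and the original topology of $P$ (invoked earlier via \cite{KN96}) and the relative compactness of small closed balls in order both to evaluate $M$ on a well-defined sphere $S_\epsilon$ and to promote local existence of trajectories to existence on $I_{t_0}$. A secondary point worth flagging in the write-up is that $\delta$ genuinely depends on $t_0$ because continuity of $M(t_0,\cdot)$ at $x_e$ has been used at a single instant; uniformity in $t_0$ would require $M$ to be decrescent in the sense of Definition \ref{DecrescentFunction}, which explains why the theorem only yields stability rather than uniform stability.
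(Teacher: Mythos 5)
Your argument is correct and follows essentially the same route as the paper's proof: use the lpdf lower bound $\alpha(d(x,x_e))$ as a barrier, choose $\delta$ from the continuity of $M(t_0,\cdot)$ at $x_e$ so that $M(t_0,x_0)<\alpha(\mu)$ with $\mu=\min(\epsilon,r,s)$, propagate $M(t,x(t))\leq M(t_0,x_0)$ via $\dot M\leq 0$, and derive a contradiction at the first exit time from the ball. Your additional remarks on the compactness of small closed balls and on ruling out finite-time escape of the trajectory only make explicit points the paper leaves implicit.
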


\begin{proof}
Since the function $M$ is lpdf from $t^0$ by assumption, Definition \ref{lpdf} yields that there exists a continuous strictly increasing function from $t^0$, let us say $\alpha:\bar{\mathbb{R}}\rightarrow \mathbb{R}$, and a constant $
s>0$ such that
\[
\alpha(d(x,x_e))\leq M(t,x), \qquad \forall t\in I_{t^0},\quad\forall x\in B_{s,x_e}.
\]
Let us show that $x_e$ is stable from $t^0$, i.e. there exists, for any $\epsilon>0$, $t_0\in I_{t^0}$, and $t\in I_{t_0}$,  a $\delta(t_0,\epsilon)=:\delta$ such that if $x(t)$ is the particular solution of the system (\ref{eq:1}) with initial condition $x_0:=x(t_0)$, then
\[
d(x_0,x_e)<\delta \implies d(x(t),x_e)< \epsilon,\qquad \forall t\in  I_{t_0}.
\]
Let us choose $\epsilon$, $t_0$, and let $\mu:=\min(\epsilon, r, s)$. Then, there exists $\delta>0$ so that
\[
\sup_{d(x,x_e)<\delta}M(t_0,x)<\alpha(\mu).
\]
This is possible since $\alpha(\mu)>0$ and $\lim_{\delta\rightarrow 0^+}\sup_{d(x,x_e)<\delta}M(t_0,x)=0$. To show that $\delta$ guarantees the stability of $x_e$, suppose $d(x_0,x_e)<\delta$. Then, $M(t_0,x_0)\leq \sup_{d(x,x_e)<\delta}M(t_0,x)<\alpha(\mu)$.

Let us assume for the time being that $x(t)$ belongs to $B_{\mu,x_e}$ for every $t\in  I_{t_0}$. Then, $B_{\mu,x_e}\subset B_{r,x_e}$ and $\dot{M}(t,x(t))\leq 0$ and from the assumption that $M(t,x)$ is a $C^1$-function, it follows that $M(t,x(t))-M(t_0,x_0)\leq0$. Thus,
\begin{equation}
\label{M(t,x)<M(t0,x0)}
M(t,x(t))\leq M(t_0,x_0)< \alpha(\mu),\qquad \forall t\in I_{t_0}.  
\end{equation}
Since $x(t)\in B_{\mu,x_e}\subset B_{s,x_e}$ for $t\in I_{t_0}$ by assumption, we also have that
\[
\alpha(d(x(t),x_e))\leq M(t,x(t)),\qquad \forall t\in I_{t_0}. 
\]
Hence, from the last two inequalities, one obtains
\[
\alpha(d(x(t),x_e))< \alpha(\mu),\qquad \forall t\in I_{t_0}. 
\]
Since $\alpha$ is a strictly increasing function, it follows that
\begin{equation}
\label{x<e}
d(x(t),x_e)< \mu \leq \epsilon,\qquad \forall t\in I_{t_0}. 
\end{equation}
Hence, $x_e$ is a stable equilibrium under the assumption of $x(t)$ belonging to $B_{\mu,x_e}$ for every $t\in I_{t_0}$. Let us prove that this assumption always holds indeed. 

Assume that $T:=\min\{ t\in \mathbb{R}\,:\,d(x(t),x_e)\geq \mu\}$ (it is well defined, since $x(t)$ is continuous). By definition of $T$, it turns out that
\[
d(x(t),x_e)< \mu,\qquad \forall t \in [t_0,T),
\]
and, by continuity, $d(x(T),x_e)=\mu$. Since $\mu\leq r$, it follows that 
\[
\dot{M}(t,x(t))\leq 0,\qquad \forall t\in [t_0,T).
\]
Hence, from  the fact that $M$ is a $C^1$-function, one obtains
\begin{equation}
\label{M<g}
M(T,x(T))\leq M(t_0,x_0)<\alpha(\mu).
\end{equation}
However, $\mu\leq s$ and
\begin{equation}
\label{M=g}
M(T,x(T))\geq \alpha(d(x(T),x_e))=\alpha(\mu).
\end{equation}
Equations (\ref{M<g}) and (\ref{M=g}) are in contradiction, which gives that no such $T$ exists. Thus, (\ref{x<e}) is true.
\end{proof}

\begin{theorem}
\label{Muniformlystable}
An equilibrium point $x_e$ of system (\ref{eq:1}) is uniformly stable from $t^0$ if there exists a $C^1$, lpdf and also decrescent function $M:I_{t^0}\times P\rightarrow \mathbb{R}$ from $t^0$ and a constant $r>0$ such that
\[
\dot{M}(t,x)\leq 0,\qquad \forall t\in I_{t^0},\quad \forall x\in B_{r,x_e}.
\]
\end{theorem}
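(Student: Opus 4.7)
The plan is to mimic the proof of Theorem \ref{Mstable}, but upgrade the step where $\delta$ was chosen so that, thanks to the decrescent hypothesis, the resulting $\delta$ depends only on $\epsilon$ and not on $t_0$. Indeed, the obstruction to uniformity in the proof of Theorem \ref{Mstable} was the use of $\sup_{d(x,x_e)<\delta}M(t_0,x)$, a quantity whose dependence on $t_0$ is uncontrolled for a general lpdf function $M$. The decrescent property provides exactly the missing uniform upper bound on $M(t_0,\cdot)$ in a neighbourhood of $x_e$.

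Concretely, I would first invoke Definition \ref{lpdf} to obtain a continuous, strictly increasing $\alpha:\bar{\mathbb{R}}\to\mathbb{R}$ with $\alpha(0)=0$ and a radius $s>0$ such that $\alpha(d(x,x_e))\leq M(t,x)$ for all $t\in I_{t^0}$ and $x\in B_{s,x_e}$, and then invoke Definition \ref{DecrescentFunction} to obtain a continuous, strictly increasing $\beta:\bar{\mathbb{R}}\to\mathbb{R}$ with $\beta(0)=0$ and a radius $s'>0$ such that $M(t,x)\leq \beta(d(x,x_e))$ for all $t\in I_{t^0}$ and $x\in B_{s',x_e}$. Given $\epsilon>0$, set $\mu:=\min(\epsilon,r,s,s')$. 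Using the continuity of $\beta$ at $0$ together with $\alpha(\mu)>0$, choose $\delta=\delta(\epsilon)>0$ small enough that $\delta<\mu$ and $\beta(\delta)<\alpha(\mu)$. This $\delta$ depends only on $\epsilon$, which is the essential point.

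Fix any $t_0\in I_{t^0}$ and any initial datum $x_0\in B_{\delta,x_e}$, and let $x(t)$ denote the corresponding solution to \eqref{eq:1}. The decrescent bound gives
\[
M(t_0,x_0)\leq \beta(d(x_0,x_e))<\beta(\delta)<\alpha(\mu).
\]
From here, the argument follows the same trapping scheme as in the proof of Theorem \ref{Mstable}. Suppose for contradiction that $T:=\inf\{t\geq t_0:d(x(t),x_e)\geq \mu\}$ is finite; then $d(x(t),x_e)<\mu\leq r$ on $[t_0,T)$ and $d(x(T),x_e)=\mu$ by continuity. The hypothesis $\dot M\leq 0$ on $B_{r,x_e}$ combined with $M\in C^1$ yields $M(T,x(T))\leq M(t_0,x_0)<\alpha(\mu)$, while the lpdf bound at $x(T)\in B_{s,x_e}$ gives $M(T,x(T))\geq \alpha(d(x(T),x_e))=\alpha(\mu)$, a contradiction. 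Hence $x(t)\in B_{\mu,x_e}\subset B_{\epsilon,x_e}$ for every $t\in I_{t_0}$, which is precisely the uniform stability of $x_e$ from $t^0$.

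The only subtle point, and the one where the decrescent hypothesis is doing real work, is the uniform-in-$t_0$ control of $M(t_0,x_0)$ by a function of $d(x_0,x_e)$ alone; everything else is a direct transcription of the monotone-trapping argument already established for Theorem \ref{Mstable}. I do not anticipate any further technical obstacle, as the metric-ball machinery on $P$ has already been set up earlier in Section~2.
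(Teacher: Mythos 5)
Your proposal is correct and follows essentially the same route as the paper's (sketched) proof: both use the decrescent bound $M(t,x)\leq\beta(d(x,x_e))$ to choose $\delta$ with $\beta(\delta)<\alpha(\mu)$, so that $\delta$ depends only on $\epsilon$, and then run the same trapping argument as in Theorem \ref{Mstable}. You correctly identify that the decrescent hypothesis is exactly what replaces the $t_0$-dependent quantity $\sup_{d(x,x_e)<\delta}M(t_0,x)$ by a uniform bound.
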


\begin{proof}
The proof of this theorem will be only sketched, because is very similar to the proof of Theorem \ref{Mstable}. 
Since  $M$ is decrescent from $t^0$ by assumption, Definition \ref{DecrescentFunction} yields that there exists a continuous, strictly increasing function $\beta:\bar{\mathbb{R}}\rightarrow \mathbb{R}$ with $\beta(0)=0$ and a constant $s>0$ such that
\[
M(t,x)\leq \beta(d(x,x_e)),\qquad \forall t\in I_{t^0},\quad\forall x\in B_{s,x_e}.
\] 
Then, we define 
\[
 \omega(\delta):=\sup_{d(x,x_e)<\delta,\,\, t\in I_{t^0}} M(t,x).
\]
Such a function is well defined for $\delta<s$ because $M(t,x)$ is decrescent and $\omega(\delta)\leq \beta(\delta)$. Moreover, $\omega(\delta)$ is non-decreasing and 
\[
\lim_{\delta\rightarrow 0^+}\omega(\delta)=\lim_{\delta\rightarrow 0^+}\sup_{d(x,x_e)<\delta,\,\, t\in I_{t^0}}M(t,x)\leq \lim_{\delta\rightarrow 0^+}\beta(\delta)=0.
\]
Since $M$ is a ldpf function, consider the function
$\alpha:\bar{\mathbb{R}}\rightarrow \mathbb{R}$ and the constant $s_1>0$ such that
\[
\alpha(d(x,x_e))\leq M(t,x),\qquad \forall t\in I_{t^0},\quad \forall x\in B_{s_1,x_e}.
\]
Set some $\epsilon>0$. Define $\mu:=\min(\epsilon,r,s,s_1)$. Let us choose $\delta$ such that $
\beta(\delta)<\alpha(\mu).$ 
The rest of the proof is analogous to the previous theorem, including the proof that $x(t)$ stays in $B_{\mu,x_e}$ for all $t\geq t_0\geq t^0$ if $x(t_0)$ is contained in $B_{\mu,x_e}$.
\end{proof}

\begin{theorem}
The equilibrium point $x_e$ of system (\ref{eq:1}) is uniformly asymptotically stable from $t^0$ if there exists a decrescent, lpdf,  $C^1$-function $M:I_{t^0}\times P\rightarrow \mathbb{R}$ from $t^0$ such that $-\dot{M}$ is a lpdf from $t^0$.
\end{theorem}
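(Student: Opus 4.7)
The plan is to extract from the three hypotheses the three standard estimates underlying the classical Lyapunov scheme and then combine uniform stability with a uniform decay argument. Specifically, because $M$ is lpdf and decrescent, and $-\dot M$ is lpdf from $t^0$, there exist continuous strictly increasing functions $\alpha,\beta,\gamma:\bar{\mathbb{R}}\to\mathbb{R}$ vanishing at the origin and a radius $r>0$ (obtained by intersecting the balls supplied by the three definitions) such that on $I_{t^0}\times B_{r,x_e}$ one has
\[
\alpha(d(x,x_e))\leq M(t,x)\leq \beta(d(x,x_e)),\qquad \dot M(t,x)\leq -\gamma(d(x,x_e)).
\]

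First I would invoke Theorem \ref{Muniformlystable}, whose hypotheses are satisfied a fortiori, to obtain uniform stability from $t^0$: for every $\eta>0$ there is $\delta(\eta)>0$, independent of $t_0\in I_{t^0}$, with $d(x(t_0),x_e)<\delta(\eta)$ implying $d(x(t),x_e)<\eta$ for all $t\in I_{t_0}$. Shrinking $r$ if necessary so that the ball lies inside a single coordinate chart (permissible since the Riemannian topology agrees with the manifold topology) and choosing $\bar r\leq r$ small enough that $\beta(\bar r)<\alpha(r)$, the argument of Theorem \ref{Mstable} shows that any trajectory launched in $B_{\bar r,x_e}$ remains inside $B_{r,x_e}$ forever, so the three estimates persist along it and expression (\ref{Eq:Mdot}) for $\dot M$ is available throughout the evolution.

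Next I would establish uniform attractivity with basin $B_{\bar r,x_e}$. Given $\eta\in(0,\bar r]$, let $\delta(\eta)$ be furnished by uniform stability and set $T(\eta):=\beta(\bar r)/\gamma(\delta(\eta))$, a quantity manifestly independent of $t_0$. Suppose, for contradiction, that some solution $x(t)$ with $x(t_0)\in B_{\bar r,x_e}$ satisfies $d(x(t),x_e)\geq \delta(\eta)$ throughout the interval $[t_0,t_0+T(\eta)]$. Integrating $\dot M(t,x(t))\leq -\gamma(\delta(\eta))$ on this interval and invoking the decrescent bound $M(t_0,x(t_0))\leq \beta(\bar r)$ yields
\[
M(t_0+T(\eta),x(t_0+T(\eta)))\leq \beta(\bar r)-T(\eta)\,\gamma(\delta(\eta))=0,
\]
which contradicts $M\geq \alpha(d(\cdot,x_e))>0$ at a point where $d(x,x_e)\geq \delta(\eta)$. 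Hence there is some $t_1\in[t_0,t_0+T(\eta)]$ with $x(t_1)\in B_{\delta(\eta),x_e}$; uniform stability then traps the trajectory in $B_{\eta,x_e}$ for all $t\geq t_1$, and in particular for all $t\geq t_0+T(\eta)$. Combining uniform stability with this uniform rate of entry gives uniform asymptotic stability, with convergence uniform in both the initial time $t_0\in I_{t^0}$ and the initial condition $x(t_0)\in B_{\bar r,x_e}$.

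The step I expect to be the main technical obstacle is the a priori confinement of the trajectory to $B_{r,x_e}$, needed both for the estimates on $M$ and $\dot M$ to remain valid and for the coordinate expression (\ref{Eq:Mdot}) to make sense on a manifold rather than on $\mathbb{R}^n$; this requires the same minimality-of-exit-time argument used in Theorem \ref{Mstable} (introducing $T:=\min\{t:d(x(t),x_e)\geq \mu\}$ and deriving a contradiction from $\dot M\leq 0$) rather than a purely algebraic manipulation. Once this containment is secured, the remainder of the argument is a direct manifold-level transcription of the classical Massera-type proof on $\mathbb{R}^n$.
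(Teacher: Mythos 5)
Your proposal is correct and follows essentially the same Massera-type scheme as the paper's proof: uniform stability via Theorem \ref{Muniformlystable}, confinement of trajectories by the exit-time argument of Theorem \ref{Mstable}, a contradiction obtained by integrating $\dot M\leq-\gamma(\cdot)$ over an interval of length $T=\beta(\cdot)/\gamma(\cdot)$ to force entry into a small ball, and then trapping. The only cosmetic difference is that you invoke the uniform-stability conclusion to trap the trajectory after time $t_1$, whereas the paper re-derives that trapping explicitly through the chain $\alpha(d(x(t),x_e))\leq M(t,x(t))\leq M(t_1,x(t_1))\leq\beta(\kappa_2)\leq\alpha(\epsilon)$.
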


\begin{proof}
Let $x(t)$ stands for a solution of system (\ref{eq:1}) with initial condition $x(t_0)=x_0$ for some $t_0\geq t^0$. Since $-\dot{M}$ is a lpdf function, by Definition \ref{lpdf} and the assumptions of our present theorem, there exists a continuous, strictly increasing function $\gamma:\bar{\mathbb{R}}\rightarrow\mathbb{R}$, with $\gamma(0)=0$, and a constant $s>0$ such that
\[
\dot{M}(t,x)\leq -\gamma(d(x,x_e)),\qquad \forall t\in I_{t^0},\quad \forall x\in B_{s,x_e}.
\]
Since $\gamma$ is a non-negative function,
\begin{equation}
\label{nonpositivitydotM}
\dot{M}(t,x)\leq 0,\qquad \forall t\in I_{t^0},\quad \forall x\in B_{s,x_e}.
\end{equation}
Thus, $\dot{M}$ satisfies the hypothesis of Theorem \ref{Muniformlystable} and $x_e$ becomes a uniformly stable equilibrium from $t^0$. Then, what is left to prove is that for every $\epsilon>0$ and $t_0\geq t^0$ there exists $T:=T(\epsilon)$ and $\delta>0$ such that every $x(t)$ with $x(t_0)\in B_{\delta,x_e}$ satisfies that $d(x(t),x_e)<\epsilon$ for all $t\geq T+t_0$. It is sufficient to show that such a constant $\delta$ exists. The latter condition can be rewritten as follows
\begin{equation}
\label{uniformlyattractive}
\forall \epsilon>0,\quad \exists\, \delta>0,\quad \exists T>0, \quad d(x_0,x_e)<\delta \implies d(x(t),x_e)<\epsilon,\quad\forall t\geq T+t_0.
\end{equation}
The assumptions of the present theorem yield that there are functions $\alpha, \beta:\bar{\mathbb{R}}\rightarrow\mathbb{R}$ and constants $k,l>0$ such that
\begin{equation}
\label{M>alpha}
\alpha(d(x,x_e))\leq M(t,x),\qquad\forall t\in I_{t^0},\quad\forall x\in B_{k,x_e},
\end{equation} 
\begin{equation}
\label{beta>m}
M(t,x)\leq\beta(d(x,x_e)),\qquad \forall t\in I_{t^0},\quad\forall x\in B_{l,x_e}. 
\end{equation}

Let us choose $r:=\min\{k,l,s,\epsilon\}$.  Let us define positive constants $\kappa_1,\kappa_2,T$ such that
\[
        \kappa_1<\beta^{-1}(\alpha(r)),\qquad
        \kappa_2<\min\{\beta^{-1}(\alpha(\epsilon)),\kappa_1\},\qquad 
        T:=\frac{\beta(\kappa_1)}{\gamma(\kappa_2)}.
\]

Let us prove that we can set $\delta=\kappa_2$ and $T$ satisfy (\ref{uniformlyattractive}).  Recall that every particular solution $x(t)$ to (\ref{eq:1}) with $x(t_0)=:x_0\in B_{\kappa_2,x_e}$ remains inside the ball $B_{r,x_e}$ for all $t\in I_{t_0}$ and $\kappa_2$ small enough. Indeed, the reasoning of the proof is as in the previous theorems. We can assume indeed that (\ref{M>alpha}), (\ref{beta>m})  apply to $B_{\kappa_2,x_e}$.

First, let us prove that
\begin{equation}
\label{[t0,t0+t]}
    d(x_0,x_e)<\kappa_1\implies d(x(t_1),x_e)<\kappa_2,\qquad \exists\, t_1\in [t_0,t_0+T].
\end{equation}
The proof proceeds by contradiction, namely suppose that
\begin{equation}
\label{a.a.1}
    d(x_0,x_e)<\kappa_1\qquad\wedge\qquad d(x(t),x_e)\geq\kappa_2,\qquad\forall t\in [t_0,t_0+T].
\end{equation}
Using (\ref{M>alpha}), (\ref{beta>m}), and (\ref{nonpositivitydotM}) in (\ref{a.a.1}), we can obtain the following inequalities
\[
    \beta(d(x_0,x_e))<\beta(\kappa_1),\qquad \gamma(d(x(t),x_e))\geq\gamma(\kappa_2),\qquad \alpha(\kappa_2)\leq \alpha(d(x(t),x_e)), 
\]
for all $t_0<t<t_0+T$ and $x_0\in B_{\kappa_2,x_e}$. Then,
\begin{multline*}
0<\alpha(\kappa_2)\leq M(t_0+T,x(t_0+T))=M(t_0,x_0)+\int^{t_0+T}_{t_0}\dot{M}(\tau,x(\tau))d\tau\leq\\\beta(d(x_0,x_e))-\int^{t_0+T}_{t_0}\gamma(d(x(\tau),x_e))d\tau\leq \beta(\kappa_1) - T\gamma(\kappa_2)=0.
\end{multline*}
This contradiction shows that (\ref{[t0,t0+t]}) is true. To complete the proof, suppose $t>t_0+T$. Inequality (\ref{M>alpha}) holds for all $t\in I_{t_0}$ and using (\ref{[t0,t0+t]}) one can choose such $t_1\in [t_0,t_0+T]$ that $\beta(d(x(t_1),x_e))<\beta(\kappa_2)$ is satisfied. Then, using (\ref{nonpositivitydotM}), we obtain
\[
\alpha(d(x(t),x_e))\leq M(t,x(t))\leq M(t_1,x(t_1))
\]
and
\[
M(t_1,x(t_1))\leq  \beta(d(x(t_1),x_e))<\beta(\kappa_2),
\]
and finally one can combine the last two inequalities to get
\[
\alpha(d(x(t),x_e))< \beta(\kappa_2)\leq \alpha(\epsilon),
\]
which establish (\ref{uniformlyattractive}) for $\delta=\kappa_2$ and ends the proof.
\end{proof}

The following theorem summarises the last three theorems in one theorem called the basic manifold Lyapunov's theorem.

\begin{theorem}
\label{Th:BasicTheoremOfLyapunov} {\bf (The basic manifold Lyapunov's theorem \cite{Kh87,MLS94,Vi02})}
Let $M:I_{t^0}\times P\rightarrow \mathbb{R}$ be a non-negative function, let $x_e\in P$ be an equilibrium point of (\ref{eq:1}), and let $\dot M$ stand for the function (\ref{Eq:Mdot}). Then, one has the following results:
\begin{enumerate}
    \item If $M$ is $C^1$ and lpdf from $t^0$ and $\dot{M}(t,x)\leq 0$ for $x$ locally around $x_e$ and for all $t\in I_{t^0}$, then $x_e$ is  stable.
    
    \item If $M$ is $C^1$, lpdf and decrescent from $t^0$, and   $\dot{M}(t,x)\leq 0$ locally around $x_e$ and for all $t\in I_{t^0}$, then $x_e$ is uniformly  stable.
    
    \item If $M$ is $C^1$, lpdf and decrescent from $t^0$, and $-\dot{M}(t,x)$ is locally positive definite around $x_e$ and $t\in I_{t^0}$, then $x_e$ is uniformly asymptotically stable.
    
\end{enumerate}
\end{theorem}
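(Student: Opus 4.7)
The plan is straightforward: the three items of Theorem \ref{Th:BasicTheoremOfLyapunov} are nothing but the restatements of Theorems \ref{Mstable}, \ref{Muniformlystable}, and the third (unlabeled) theorem on uniform asymptotic stability proved immediately above. Hence the proof is essentially a bookkeeping exercise that pairs each hypothesis with the corresponding previously-established result and notes that the wording ``locally around $x_e$'' in the present statement is exactly the ``$\forall x\in B_{r,x_e}$ for some $r>0$'' used in the earlier formulations.

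First I would verify item (1): the hypotheses of Theorem \ref{Mstable} require $M$ to be a $C^1$ lpdf from $t^0$ and $\dot M(t,x)\leq 0$ on a ball $B_{r,x_e}$ for all $t\in I_{t^0}$, which is exactly what is assumed in item (1); so $x_e$ is stable from $t^0$. Next, for item (2), the assumptions add that $M$ is decrescent from $t^0$, which is precisely the additional hypothesis of Theorem \ref{Muniformlystable}; applying that theorem gives uniform stability from $t^0$. For item (3), the hypothesis that $-\dot M$ is lpdf from $t^0$ (together with $M$ being $C^1$, lpdf and decrescent) matches exactly the hypothesis of the uniform asymptotic stability theorem proved just before, which delivers the conclusion.

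The only ``obstacle'' is a purely notational one: making sure that ``locally around $x_e$'' in the statement of Theorem \ref{Th:BasicTheoremOfLyapunov} is interpreted as ``there exists a ball $B_{r,x_e}$ on which the stated inequality holds'', so that the hypotheses line up verbatim with those of the three component theorems. Since the previous theorems were stated and proved in exactly this form, no further argument is required, and the proof reduces to three one-line invocations.
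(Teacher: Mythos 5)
Your proposal is correct and matches the paper exactly: the paper introduces this theorem with the sentence ``The following theorem summarises the last three theorems in one theorem,'' so its proof is precisely the three one-line invocations of Theorems \ref{Mstable}, \ref{Muniformlystable}, and the unlabeled uniform asymptotic stability theorem that you describe. No further argument is needed.
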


\section{Basics on symplectic geometry}
Let us review some known facts on symplectic geometry. At the same time, we are to establish the notions and sign conventions to be used hereafter while proving some non-autonomous extensions of classical results concerning autonomous Hamiltonian systems. For details on the topics and standard results provided in this section, we refer to \cite{AM78,Ca06,Va94}. 

A {\it  symplectic manifold} is a pair $(P,\omega)$, where $P$ is a manifold and $\omega$ is a closed differential two-form on $P$ that is {\it  non-degenerate}, namely the mapping $\widehat{\omega}:TP\mapsto T^*P$ of the form $\widehat \omega(v_p):=\omega_p(v_p,\cdot)\in T^*_pP$ for every $p\in P$ and every $v_p\in T_pP$, is a diffeomorphism. We  call $\omega$ a {\it  symplectic form}.

From now on, $(P,\omega)$ stands for a symplectic manifold.
The {\it  symplectic orthogonal} of a subspace $V_p\subset T_pP$ relative to $(P,\omega)$ is defined as
$V^{\perp_\omega}_p:=\{w_p\in T_pP\,:\,\omega_p(w_p,v_p)=0,\,\forall v_p\in V_p\}$. Let us now describe a specially important case of symplectic manifold. Let $Q$ be any manifold, let $\tau:T^*Q\rightarrow Q$ be the canonical projection, and let $\langle\cdot,\cdot\rangle$ be the pairing between covectors and tangent vectors on a manifold. The {\it canonical one-form} on $T^*Q$ is defined to be the differential one-form, $\theta_Q$, on $T^*Q$ given by
\[
(\theta_Q)_{\alpha_q}(v_{\alpha_q})\!:=\!\langle \alpha_q,T_{\alpha_q}\tau(v_{\alpha_q})\rangle,\quad \forall q\in Q,\quad \forall \alpha_q\in T^*_q{Q},\quad\forall v_{\alpha_q}\in T_{\alpha_q}(T^*Q).
\]
On local adapted coordinates $\{q^i,p_i\}_{i=1,\ldots,n}$ on $T^*Q$, one has  $\theta_Q:=\sum_{i=1}^n p_idq^i$. Then, $\omega_Q:=-d\theta_Q=\sum^n_{i=1}dq^i\land dp_i$ is a symplectic form, the referred to as {\it canonical symplectic form} on $T^*Q$. The symplectic manifold $(T^*Q,\omega_Q)$ is relevant in physical applications.

Let $\mathfrak{X}(P)$ be the Lie algebra of vector fields on $P$. A vector field $X\in \mathfrak{X}(P)$ is {\it  Hamiltonian} if the contraction of $\omega$ with $X$ is an exact differential one-form, i.e. $\iota_{X}\omega=df$ for some $f\in C^{\infty}(P)$.
Then, $f$ is called a {\it  Hamiltonian function} of $X$. Since $\omega$ is non-degenerate, every   $f\in C^\infty(P)$ is the Hamiltonian function of a unique Hamiltonian vector field $X_f$.
Then, the {\it  Cartan's magic formula} \cite[p. 115]{AM78} yields $\mathcal{L}_{X_f}\omega=\iota_{X_f}d\omega+d\iota_{X_f}\omega=0$, where $\mathcal{L}_{X_f}\omega$ is the Lie derivative of $\omega$ with respect to $X_f$. 

Let us define a bracket $\{\cdot,\cdot\}:(f,g)\in C^\infty(P)\times C^\infty(P)\mapsto \omega(X_f,X_g)\in C^\infty(P)$. 
This bracket is bilinear, antisymmetric, and, since $d\omega=0$, it obeys the {\it  Jacobi identity}, which makes $\{\cdot,\cdot\}$ into a {\it  Lie bracket}. Moreover, $\{\cdot,\cdot\}$ obeys the {\it  Leibniz rule}, i.e. $\{f,gh\}=\{f,g\}h+g\{f,h\}$ for all $f,g,h\in C^\infty(P)$. Mentioned properties turn $\{\cdot,\cdot\}$ into a so-called {\it  Poisson bracket}. It can be proved that
 $X_{\{g,f\}}=[X_f,X_g]$ (see \cite[p. 194]{AM78}).

Let us recall that $\mathfrak{g}$ stands for the Lie algebra of a Lie group $G$.
The {\it  fundamental vector field} of a Lie group action $\Phi:G\times P\rightarrow P$ related to $\xi\in\mathfrak{g}$ is the vector field on $P$ given by
\[
(\xi_P)_p:=\frac{d}{dt}\bigg|_{t=0}\Phi(\exp(t\xi),p),\quad \forall p\in P.
\]
Our convention in the definition of  fundamental vector fields gives rise to an anti-morphism of Lie algebras $\xi\in \mathfrak{g}\mapsto \xi_P\in \mathfrak{X}(P)$ (cf. \cite{CGM00}).
If $\Phi$ is known from context, we will write $gp$ instead of $\Phi(g,p)$ for every $g\in G$ and $p\in P$.
By the constant rank theorem \cite[p. 48]{AM78}, the orbits of $\Phi$ are immersed submanifolds in $P$. We also define 
\[
\Phi_g:\tilde{p}\in P\mapsto g\tilde{p}\in P,\qquad  \Phi^p:\tilde{g}\in G\mapsto \tilde{g}p\in P,\qquad \forall g\in G,\quad \forall p\in P.
\]
Each 
$\Phi_g$ is a diffeomorphism for every $g\in G$. 
The {\it  isotropy subgroup} of $\Phi$ at $p\!\in\!P$ is $G_p:=\{g\in G:gp=p\}\subset G$.
Let $G p$  stand for the orbit of $p\in P$ relative to $\Phi$, i.e. $G p:=\{gp\,:\,g\in G\}$. Then, $T_{\tilde{p}}Gp\!=\!\{(\xi_P)_{\tilde{p}}:\xi\in\mathfrak{g}\}$ for each $\tilde{p}\in Gp$.  

Recall that each $g\in G$ acts as a diffeomorphism on $G$ in the following manners: 
\[L_g:h\in  G\mapsto gh\in G,\qquad R_g:h\in G\mapsto hg\in G,\qquad I_g:h\in G\mapsto ghg^{-1}\in G.
\]
We hereafter assume that $G$ acts on $\mathfrak{g}$ via the {\it  adjoint action}, namely
\begin{equation}
\label{adaction}
{\rm Ad}:(g,\xi)\in G\times\mathfrak{g}\mapsto {\rm Ad}_g\xi\in\mathfrak{g},
\end{equation} 
where ${\rm Ad}_g\xi:=(T_eI_g)( \xi)$.
The fundamental vector field of the adjoint action related to $\xi\in \mathfrak{g}$ is given by
\[
(\xi_{\mathfrak{g}})_v=\frac{d}{dt}\bigg|_{t=0}{\rm Ad}_{\exp(t\xi)}(v)=[\xi,v]=:{\rm ad}_{\xi}v,\quad  \forall v\in\mathfrak{g},
\]
where $[\cdot,\cdot]$ denotes the Lie bracket in $\mathfrak{g}$. Note that $(\xi_\mathfrak{g})_v\in T_v\mathfrak{g}$ and ${\rm ad}_\xi v\in \mathfrak{g}$ may be defined to be equal because, for every finite-dimensional vector space $V$, there exists a natural isomorphism $v\in V\simeq D_v\in T_wV$, at each $w\in V$, identifying each $v\in V$ to the tangent vector at $w$ associated with the derivative at $w$ in the direction $v$.
Let $\mathcal{S}_{\xi}$ be the orbit of the {\it  adjoint action} passing through $\xi\in\mathfrak{g}$. Then,   $T_{\nu}\mathcal{S}_{\xi}=\{(\xi_{\mathfrak{g}})_\nu\,:\,\xi\in\mathfrak{g}\}$ for every $\nu\in \mathcal{S}_\xi$. 

The Lie group $G$ also acts on $\mathfrak{g}^*$ through the {\it  coadjoint action} 
$
{\rm Ad}^*:(g,\mu)\in G\times\mathfrak{g}^*\mapsto {\rm Ad}^*_{g^{-1}}\mu\in \mathfrak{g}^*,
$
where ${\rm Ad}_{g}^{*}$ is the transpose of ${\rm Ad}_{g}$, i.e. $\langle {\rm Ad}_{g}^{*}\mu, \xi\rangle = \langle \mu,{\rm Ad}_{g}\xi\rangle$ for all $\xi\in\mathfrak{g}$,  and where $\langle\cdot,\cdot\rangle$ denotes\footnote{This symbol has already been defined on page 9 with another similar meaning, but it does not lead to misunderstanding and our convention simplifies the notation.} the duality pairing between $\mathfrak{g}^{*}$ and $\mathfrak{g}$. Then, one has that
\begin{equation}\label{Dual}
(\xi_{\mathfrak{g}^*})_\mu=\frac{d}{dt}\bigg|_{t=0}{\rm Ad}_{\exp(-t\xi)}^*\mu=-\langle\mu, [\xi,\cdot ] \rangle=-{\rm ad}_{\xi}^*\mu,\quad \forall\mu\in\mathfrak{g}^*,
\end{equation}
where it is worth stressing that ${\rm ad}^*_\xi$ is defined to be ${\rm ad}^*_\xi(\vartheta):=\vartheta\circ {\rm ad}_\xi\in \mathfrak{g}^*$ for every $\vartheta\in \mathfrak{g}^*$.
Given the coadjoint orbit of $\mu\in\mathfrak{g}^*$, i.e.  $\mathcal{O}_{\mu}:=\{{\rm Ad}_{g^{-1}}^*\mu: g\in G\}
$, we have $
T_{\nu}\mathcal{O}_{\mu}=\{ (\xi_{\mathfrak{g}^*})_\nu : \xi\in\mathfrak{g}\}$  at every $\nu\in\mathcal{O}_{\mu}$. Then, $\xi_{\mathfrak{g}}$ and $\xi_{\mathfrak{g}^*}$ are related as follows
\[
\langle (\xi_{\mathfrak{g}^*})_\nu, v\rangle=\langle -{\rm ad}_\xi^*\nu,v\rangle=-\langle\nu,(\xi_{\mathfrak{g}})_v\rangle,\qquad \forall v\in \mathfrak{g}\simeq T^*_\nu \mathfrak{g}^*, \quad \forall \nu\in \mathfrak{g}^*\simeq T^*_v\mathfrak{g}.
\]

A Lie group action $\Phi:G\times P\rightarrow P$ is {\it Hamiltonian} if its fundamental vector fields are Hamiltonian relative to $\omega$.
An {\it  equivariant momentum map} for a Lie group action $\Phi:G\times P\rightarrow P$ is a map $\mathbf{J}:P\rightarrow\mathfrak{g}^*$ such that: \begin{enumerate}
     \item   $\mathbf{J}(g p)={\rm Ad}^*_{g^{-1}}(\mathbf{J}(p)),$ for all $g\in G$ and every $p\in P$. 
     \item  $(\iota_{\xi_P}\omega)_p=d\langle \mathbf{J}(p),\xi\rangle=(dJ_\xi)_p,$ for all $\xi\in\mathfrak{g}$, every $p\in P$, and $J_\xi:p\in P \mapsto \langle\mathbf{J}(p),\xi\rangle\in\mathbb{R}$. 
 \end{enumerate}
 We obtain that 2. gives that $\Phi$ is a Hamiltonian Lie group action and
\[
(\xi_PJ_\nu)(p)=\frac{d}{dt}\bigg|_{t=0}\!\!\!\!\langle {\bf J}(\exp(t\xi)p),\nu\rangle=\frac{d}{dt}\bigg|_{t=0}\!\!\!\!\langle {\rm Ad}^*_{\exp(-t\xi)}({\bf J}(p)),\nu\rangle=J_{[\nu,\xi]}(p),
\]
for all $\xi,\nu\in \!\mathfrak{g}$ and~$p\!\in\!P$. Then, $\{J_
\nu,J_\xi\}=J_{[\nu,\xi]}$. Hence, ${\bf J}$ gives rise to a Lie algebra morphism $\nu\in \mathfrak{g}\mapsto J_\nu\in C^\infty(P)$.

Let us go back to the naturally defined, and ubiquitous in physics, structures on a cotangent bundle $T^*Q$. 
A Lie group action $\Psi:G\times Q\rightarrow Q$ induces a new Lie group action $\Phi:(g,\alpha_q)\in G\times T^*Q\mapsto \Phi_g(\alpha_q)\in T^*Q$ such that
\[
\langle\Phi_{g}(\alpha_{q}),\ v_{g q}\rangle:=\langle \alpha_{q}, T_{gq}\Psi_{g^{-1}} (v_{g  q})\rangle,\qquad\forall q\in Q,  \quad \forall {v_{gq}} \in T_{gq}Q,
\]
the so-called {\it  cotangent lift} of $\Psi$. This notion is ubiquitous in geometric mechanics and it provides easily derivable momentum maps \cite[p. 283]{AM78}. Some additional details are given in the following proposition (see also \cite[p. 283]{AM78}).

\begin{proposition}
\label{LiftSymAc}
Every Lie group action $\Psi:G\times Q\rightarrow Q$ has a cotangent lift $\Phi:G\times T^*Q\rightarrow T^*Q$
admitting an equivariant momentum map $\mathbf{J}:T^*Q\rightarrow \mathfrak{g}^{*}$ such that
\begin{equation}\label{MomMap}
J_\xi(\alpha_{q})=:\langle {\bf J}(\alpha_q), \xi\rangle,\quad
J_\xi(\alpha_{q}):=\langle\alpha_{q},(\xi_{Q})_q\rangle,\qquad \forall \alpha_q\in T_q^* Q,\quad \forall q\in Q,\quad \forall \xi\in \mathfrak{g}.
\end{equation}
\end{proposition}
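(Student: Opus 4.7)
The plan is to exploit the invariance of the canonical one-form $\theta_Q$ under the cotangent lift and then extract the momentum map from Cartan's magic formula. The argument splits naturally into three parts: (i) $\Phi$ is a left Lie group action preserving $\theta_Q$; (ii) this invariance yields the identity $\iota_{\xi_{T^*Q}}\omega_Q=dJ_\xi$; (iii) equivariance follows from the standard $\mathrm{Ad}$-relation for fundamental vector fields.

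For (i), the group action axioms $\Phi_e=\mathrm{id}_{T^*Q}$ and $\Phi_{gh}=\Phi_g\circ \Phi_h$ follow by a short duality computation from the defining identity of $\Phi_g$ together with $\Psi_{(gh)^{-1}}=\Psi_{h^{-1}}\circ \Psi_{g^{-1}}$. By construction, $\tau\circ \Phi_g=\Psi_g\circ \tau$. Using this covering relation together with the definitions of $\theta_Q$ and $\Phi_g$, one checks
\begin{equation*}
(\Phi_g^*\theta_Q)_{\alpha_q}(v)=\langle \Phi_g(\alpha_q),T_q\Psi_g(T_{\alpha_q}\tau(v))\rangle=\langle \alpha_q,T_{\alpha_q}\tau(v)\rangle=(\theta_Q)_{\alpha_q}(v),
\end{equation*}
so $\Phi_g^*\theta_Q=\theta_Q$, and hence $\Phi_g^*\omega_Q=\omega_Q$ and $\mathcal{L}_{\xi_{T^*Q}}\theta_Q=0$ for every $\xi\in \mathfrak{g}$.

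Step (ii) is then immediate from Cartan's magic formula: since $d\theta_Q=-\omega_Q$,
\begin{equation*}
0=\mathcal{L}_{\xi_{T^*Q}}\theta_Q=d\iota_{\xi_{T^*Q}}\theta_Q-\iota_{\xi_{T^*Q}}\omega_Q,
\end{equation*}
whence $\iota_{\xi_{T^*Q}}\omega_Q=d\iota_{\xi_{T^*Q}}\theta_Q$. Differentiating the covering relation $\tau\circ \Phi_g=\Psi_g\circ \tau$ at $g=e$ in the direction $\xi$ gives $T\tau\circ \xi_{T^*Q}=\xi_Q\circ \tau$, so that $(\iota_{\xi_{T^*Q}}\theta_Q)(\alpha_q)=\langle \alpha_q,(\xi_Q)_q\rangle=J_\xi(\alpha_q)$, exactly the map defined in (\ref{MomMap}).

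For (iii), the identity $g^{-1}\exp(t\xi)g=\exp(t\,\mathrm{Ad}_{g^{-1}}\xi)$ applied to the definition of $\xi_Q$ yields $T_{gq}\Psi_{g^{-1}}((\xi_Q)_{gq})=((\mathrm{Ad}_{g^{-1}}\xi)_Q)_q$. Combined with the definition of $\Phi_g$, this gives
\begin{equation*}
J_\xi(\Phi_g(\alpha_q))=\langle \Phi_g(\alpha_q),(\xi_Q)_{gq}\rangle=\langle \alpha_q,((\mathrm{Ad}_{g^{-1}}\xi)_Q)_q\rangle=\langle \mathbf{J}(\alpha_q),\mathrm{Ad}_{g^{-1}}\xi\rangle=\langle \mathrm{Ad}^*_{g^{-1}}\mathbf{J}(\alpha_q),\xi\rangle,
\end{equation*}
which is the required equivariance. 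No step is genuinely difficult; the only mild care needed is to keep the paper's sign convention in mind, but since the computation uses only the adjoint action at the group level and never a Lie bracket, the anti-morphism convention $\xi\mapsto \xi_P$ plays no role here.
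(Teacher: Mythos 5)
Your proof is correct and is precisely the standard argument that the paper itself omits and delegates to \cite{AM78}: invariance of $\theta_Q$ under the cotangent lift, Cartan's magic formula to obtain $\iota_{\xi_{T^*Q}}\omega_Q=dJ_\xi$ with $J_\xi=\iota_{\xi_{T^*Q}}\theta_Q$, and the group-level $\mathrm{Ad}$-identity for equivariance. All steps check out against the paper's sign conventions (in particular $\omega_Q=-d\theta_Q$ and $\langle \mathrm{Ad}^*_g\mu,\xi\rangle=\langle\mu,\mathrm{Ad}_g\xi\rangle$), and your closing remark that the anti-morphism convention for fundamental vector fields plays no role in the equivariance computation is accurate.
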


We hereafter assume that $\mu\in \mathfrak{g}^*$ is a regular value of ${\bf J}$. Hence, ${\bf J}^{-1}(\mu)$ is a submanifold of $P$ and $T_p({\bf J}^{-1}(\mu))=\ker(T_p {\bf J})$ for every $p\in {\bf J}^{-1}(\mu)$.
\begin{proposition} 
\label{2.2}
If $p\in \mathbf{J}^{-1}(\mu)$ for a regular $\mu\in \mathfrak{g}^*$ and $G_\mu$ is the isotropy group of $\mu$ relative to the coadjoint action of $G$, then: 
\begin{enumerate}
    \item $T_{p}(G_{\mu} p)=T_{p}(G p)\cap T_{p}(\mathbf{J}^{-1}(\mu))$,
    \item  $T_{{p}}({\bf J}^{-1}(\mu))=(T_{p}Gp)^{\perp_{\omega}}$.
\end{enumerate}
\end{proposition}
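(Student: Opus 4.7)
My plan is to prove the two parts in reverse order, since part 2 is a direct duality computation that I can reuse when thinking about part 1.

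For part 2, I would start from the characterization $T_p(\mathbf{J}^{-1}(\mu))=\ker(T_p\mathbf{J})$, which holds because $\mu$ is a regular value. A tangent vector $v_p\in T_pP$ lies in $\ker(T_p\mathbf{J})$ if and only if $\langle T_p\mathbf{J}(v_p),\xi\rangle=0$ for every $\xi\in\mathfrak{g}$. Since $J_\xi(p)=\langle\mathbf{J}(p),\xi\rangle$ by definition, this pairing equals $v_p(J_\xi)=(dJ_\xi)_p(v_p)$. Condition~2 of the momentum map definition then gives $(dJ_\xi)_p(v_p)=(\iota_{\xi_P}\omega)_p(v_p)=\omega_p((\xi_P)_p,v_p)$. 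Therefore $v_p\in\ker(T_p\mathbf{J})$ if and only if $\omega_p((\xi_P)_p,v_p)=0$ for every $\xi\in\mathfrak{g}$, which, using $T_pGp=\{(\xi_P)_p:\xi\in\mathfrak{g}\}$, is exactly the statement that $v_p\in(T_pGp)^{\perp_\omega}$.

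For part 1, I would establish both inclusions using the equivariance of $\mathbf{J}$. For $\subset$, take $v_p\in T_p(G_\mu p)$ and write $v_p=(\xi_P)_p$ with $\xi\in\mathfrak{g}_\mu$. Then $v_p\in T_pGp$ trivially. To see that $v_p\in T_p\mathbf{J}^{-1}(\mu)$, apply $\mathbf{J}$ to the curve $t\mapsto\exp(t\xi)p$ and use equivariance, giving $\mathbf{J}(\exp(t\xi)p)={\rm Ad}^*_{\exp(-t\xi)}\mu$. Differentiating at $t=0$ and invoking formula (\ref{Dual}) yields $T_p\mathbf{J}((\xi_P)_p)=-{\rm ad}^*_\xi\mu$, which vanishes precisely because $\xi\in\mathfrak{g}_\mu$. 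Hence $v_p\in\ker(T_p\mathbf{J})=T_p\mathbf{J}^{-1}(\mu)$.

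For the reverse inclusion, take $v_p\in T_pGp\cap T_p\mathbf{J}^{-1}(\mu)$; then $v_p=(\xi_P)_p$ for some $\xi\in\mathfrak{g}$, and the same computation as above gives $0=T_p\mathbf{J}(v_p)=-{\rm ad}^*_\xi\mu$, so $\xi\in\mathfrak{g}_\mu$ and therefore $v_p\in T_p(G_\mu p)$. The only subtle point, and the step I expect to require the most care, is that $\xi$ representing $v_p$ is not unique (it is defined only modulo the isotropy Lie algebra $\mathfrak{g}_p$), so I want to make explicit that membership of $v_p$ in $T_p(G_\mu p)$ only requires the existence of \emph{some} representative in $\mathfrak{g}_\mu$, which the argument does provide. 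No additional regularity or freeness hypotheses on the action at $p$ are needed for either inclusion.
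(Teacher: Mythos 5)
Your argument is correct. Note that the paper itself states Proposition \ref{2.2} without proof, treating it as the standard ``reduction lemma'' from \cite{AM78}; your two computations --- identifying $\ker(T_p\mathbf{J})$ with $(T_pGp)^{\perp_\omega}$ via $(dJ_\xi)_p=(\iota_{\xi_P}\omega)_p$, and characterising $T_p(G_\mu p)$ inside $T_pGp$ via $T_p\mathbf{J}((\xi_P)_p)=-{\rm ad}^*_\xi\mu$ --- are exactly that standard argument, and your remark that any representative $\xi$ of $v_p$ is forced into $\mathfrak{g}_\mu$ correctly disposes of the only potential subtlety.
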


Let us enunciate the Marsden--Weinstein theorem (see \cite[p. 300]{AM78} and the original work by Marsden and Weinstein \cite{MW74}).

\begin{theorem} Let $\Phi:G\times P\rightarrow P$ be a Hamiltonian Lie group action of $G$ on the symplectic manifold $(P,\omega)$ admitting an equivariant momentum map ${\bf J}:P\rightarrow \mathfrak{g}^*$. Assume that $\mu\in \mathfrak{g}^*$ is a regular point of ${\bf J}$ and $G_\mu$, the isotropy group of $\mu$ relative to the coadjoint action, acts freely and properly on ${\bf J}^{-1}(\mu)$. Let $\iota_{\mu}:{\bf J}^{-1}(\mu)\rightarrow P$ denote a natural embedding and let $\pi_\mu:{\bf J}^{-1}(\mu)\rightarrow {\bf J}^{-1}(\mu)/G_\mu=:P_\mu$ be the canonical projection onto the space of orbits of $G_\mu$ acting on ${\bf J}^{-1}(\mu)$.  There exists a unique symplectic structure $\omega_{\mu}$  on $P_{\mu}$ such that $\pi_{\mu}^*\omega_{\mu}=\iota_{\mu}^{*}\omega$.
\end{theorem}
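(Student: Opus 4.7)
The plan is to obtain $\omega_{\mu}$ by descending $\iota_{\mu}^{*}\omega$ along $\pi_{\mu}$. First, since $G_{\mu}$ acts freely and properly on ${\bf J}^{-1}(\mu)$, the quotient $P_{\mu}$ inherits a unique smooth structure making $\pi_{\mu}$ a surjective submersion whose fibres are the $G_{\mu}$-orbits; in particular, the vertical distribution of $\pi_{\mu}$ at $p\in {\bf J}^{-1}(\mu)$ is $T_{p}(G_{\mu}p) = \{(\xi_{P})_{p}:\xi\in\mathfrak{g}_{\mu}\}$. Uniqueness of $\omega_{\mu}$ is then immediate: $\pi_{\mu}$ being a surjective submersion forces $\pi_{\mu}^{*}$ to be injective on forms, so any two candidates agreeing under pullback must coincide.

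For existence, I would verify that $\iota_{\mu}^{*}\omega$ is both $G_{\mu}$-invariant and horizontal with respect to $\pi_{\mu}$, which are precisely the conditions for descent to the quotient. Invariance on the identity component of $G_{\mu}$ follows from $\mathcal{L}_{\xi_{P}}\omega = d\iota_{\xi_{P}}\omega = d(dJ_{\xi}) = 0$ (via Cartan's magic formula and the momentum map identity), and on the full group $G_{\mu}$ from equivariance of ${\bf J}$, which ensures $\Phi_{g}$ preserves ${\bf J}^{-1}(\mu)$ and hence the pulled-back form. Horizontality is the key identity: for $\xi\in\mathfrak{g}_{\mu}$ and $v_{p}\in T_{p}({\bf J}^{-1}(\mu)) = \ker T_{p}{\bf J}$,
\[
\omega_{p}\big((\xi_{P})_{p},v_{p}\big) = (dJ_{\xi})_{p}(v_{p}) = \langle T_{p}{\bf J}(v_{p}),\xi\rangle = 0.
\]
These two facts produce a well-defined two-form $\omega_{\mu}$ on $P_{\mu}$ with $\pi_{\mu}^{*}\omega_{\mu} = \iota_{\mu}^{*}\omega$. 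Closedness is automatic, since $\pi_{\mu}^{*}(d\omega_{\mu}) = d(\iota_{\mu}^{*}\omega) = \iota_{\mu}^{*}(d\omega) = 0$ and $\pi_{\mu}^{*}$ is injective on forms.

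The main obstacle is non-degeneracy of $\omega_{\mu}$, for which Proposition \ref{2.2} is designed. Suppose $\omega_{\mu}([v_{1}],[v_{2}]) = 0$ for every $[v_{2}]\in T_{\pi_{\mu}(p)}P_{\mu}$, and pick a lift $v_{1}\in T_{p}({\bf J}^{-1}(\mu))$. Then $\omega_{p}(v_{1},v_{2}) = 0$ for every $v_{2}\in T_{p}({\bf J}^{-1}(\mu))$, so
\[
v_{1}\in \big(T_{p}({\bf J}^{-1}(\mu))\big)^{\perp_{\omega}} = \big((T_{p}Gp)^{\perp_{\omega}}\big)^{\perp_{\omega}} = T_{p}Gp,
\]
by Proposition \ref{2.2}(2) together with the involutivity of the symplectic orthogonal on subspaces of the symplectic vector space $T_{p}P$. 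Hence $v_{1}\in T_{p}Gp\cap T_{p}({\bf J}^{-1}(\mu)) = T_{p}(G_{\mu}p)$ by Proposition \ref{2.2}(1), i.e.\ $v_{1}$ is vertical, and therefore $[v_{1}] = T_{p}\pi_{\mu}(v_{1}) = 0$. The delicate points to check carefully will be that $G_{\mu}$-invariance (not just invariance under its identity component) indeed holds, which rests on combining $\mathcal{L}_{\xi_{P}}\omega = 0$ with the equivariance of ${\bf J}$, and that the identifications of vertical/horizontal subspaces from Proposition \ref{2.2} are used in the right sense when contracting with $\omega$.
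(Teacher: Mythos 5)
Your proposal is correct and is essentially the standard Marsden--Weinstein argument: uniqueness from injectivity of $\pi_\mu^*$ on forms, existence by descending the basic (invariant and horizontal) form $\iota_\mu^*\omega$, closedness by naturality of $d$, and non-degeneracy from Proposition \ref{2.2} together with $(W^{\perp_\omega})^{\perp_\omega}=W$. Note that the paper itself does not prove this theorem; it only enunciates it with a reference to Abraham--Marsden and the original Marsden--Weinstein article, and your sketch is the proof found there.

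One step is justified imprecisely. You claim that invariance of $\iota_\mu^*\omega$ under all of $G_\mu$ (not just its identity component) follows because equivariance of ${\bf J}$ makes $\Phi_g$ preserve ${\bf J}^{-1}(\mu)$ ``and hence the pulled-back form.'' Preserving the submanifold does not by itself imply preserving the restricted two-form. What is actually needed is $\Phi_g^*\omega=\omega$ for every $g\in G_\mu$; the infinitesimal computation $\mathcal{L}_{\xi_P}\omega=d(dJ_\xi)=0$ only covers the identity component, so for a disconnected $G_\mu$ one must either assume the action is symplectic (as is standard in the hypotheses of this theorem, and as the paper itself uses without comment in the proof of Proposition \ref{Prop:REQonIM}, where $\Phi_g^*\omega=\omega$ is invoked for all $g\in G$) or assume $G_\mu$ connected. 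With that hypothesis made explicit, the rest of your argument goes through exactly as written.
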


\begin{definition}
A {\it $G$-invariant Hamiltonian system} is a 5-tuple $(P,\omega,h,\Phi,{\bf J})$, where $\Phi$ is a Lie group action of $G$ on $P$ with an equivariant momentum map ${\bf J}$, and $h:\mathbb{R}\times P\rightarrow \mathbb{R}$ is a real $t$-dependent function on $P$ satisfying $h(t,\Phi(g,p))=h(t,p)$ for every $g\in G$, $t\in \mathbb{R}$, and $p\in P$. 
\end{definition}

Note that  $h:\mathbb{R}\times P\rightarrow \mathbb{R}$ gives rise to a $t$-dependent vector field on $P$ of the form $X_h:\mathbb{R}\times P\rightarrow TP$ such that each vector field $X_{h_t}:p\in P\mapsto X_h(t,p)\in TP$, with $t\in \mathbb{R}$, is the Hamiltonian vector field of $h_t:p\in P\mapsto h(t,p)\in \mathbb{R}$. Then, a {\it particular solution, $p(t)$, to $(P,\omega,h,\Phi,{\bf J})$} is called a particular solution of the non-autonomous system of differential equations
\[
\frac{dp}{dt}=X_{h_t}(p)=X_h(t,p),\qquad \forall (t,p)\in \mathbb{R}\times P.
\]

From now on, $(P,\omega,h,\Phi,\mathbf{J})$ will always stand for a $G$-invariant Hamiltonian system. Proposition \ref{Prop:EvPhiT} analyses the evolution of ${\bf J}:P\rightarrow \mathfrak{g}^{*}$ under the dynamics of the $t$-dependent vector field $X_{h}$ determined by a $G$-invariant Hamiltonian system  $(P,\omega,h,\Phi,{\bf J})$.  In particular, let us briefly prove that   $\mathbf{J}:P\rightarrow \mathfrak{g}^{*}$ is conserved for the dynamics of $X_{h}$, i.e. the flow, $F:\mathbb{R}\times P\rightarrow P$, of the $t$-dependent vector field $X_h$ leaves  $\mathbf{J}$ invariant and, consequently, ${\bf J}\circ F_t={\bf J}$ for $F_t:p\in P\mapsto F(t,p)\in P$ and every $t\in\mathbb{R}$. Our proof is just an analogue of the $t$-independent case that can be found in any standard reference \cite[p. 277]{AM78}.

\begin{proposition}\label{Prop:EvPhiT} 
Let $(P,\omega,h,\Phi,\mathbf{J})$ be a $G$-invariant Hamiltonian system. Then, $\mathbf{J}$ is invariant relative to the evolution of $h$, i.e. if $F:\mathbb{R}\times P\rightarrow P$ is the flow of the $t$-dependent vector field on $P$ given by $X_h:(t,p)\in \mathbb{R}\times P\mapsto X_h(t,p)\in TP$, then
\[
\mathbf{J}(F(t,p))=\mathbf{J}(p),\qquad \forall p\in P, \quad \forall t\in \mathbb{R}.
\]
\end{proposition}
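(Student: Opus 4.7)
\medskip

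\noindent\textbf{Proof plan.} The strategy is to fix an arbitrary $\xi\in\mathfrak{g}$, reduce the claim to showing that the scalar function $J_\xi$ is constant along integral curves of $X_h$, and then use $G$-invariance of $h$ together with $\xi_P = X_{J_\xi}$ to kill the relevant derivative.

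\medskip

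\noindent First I would note that, since $\mathfrak{g}^{*}$ is a vector space, it suffices to prove $J_\xi(F(t,p))=J_\xi(p)$ for every $\xi\in\mathfrak{g}$, $p\in P$ and $t\in \mathbb{R}$. Because $F$ is the flow of the $t$-dependent vector field $X_h$, the defining property $\tfrac{d}{dt}F(t,p)=X_h(t,F(t,p))$ combined with the chain rule yields
\[
\frac{d}{dt}J_\xi(F(t,p))=\bigl(dJ_\xi\bigr)_{F(t,p)}\!\bigl(X_{h_t}(F(t,p))\bigr)=\bigl(X_{h_t}J_\xi\bigr)(F(t,p)),
\]
where the $t$-independence of $J_\xi$ removes any explicit $\partial_t$ term.

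\medskip

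\noindent The key identification is $\xi_P=X_{J_\xi}$, which is exactly condition 2 in the definition of an equivariant momentum map. Together with the Poisson bracket convention $\{f,g\}=\omega(X_f,X_g)$ recalled in the symplectic preliminaries, I would compute
\[
X_{h_t}J_\xi=\iota_{X_{h_t}}dJ_\xi=\iota_{X_{h_t}}\iota_{\xi_P}\omega=\omega(\xi_P,X_{h_t})=-\{h_t,J_\xi\}.
\]
On the other hand, $G$-invariance of $h$ means $h_t(\exp(s\xi)p)=h_t(p)$ for all $s,t\in\mathbb{R}$ and $p\in P$; differentiating at $s=0$ gives $\xi_P h_t=0$, i.e.\ $dh_t(\xi_P)=dh_t(X_{J_\xi})=\{h_t,J_\xi\}=0$. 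Hence the right-hand side above vanishes, and therefore $\tfrac{d}{dt}J_\xi(F(t,p))=0$ for all $t$.

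\medskip

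\noindent Integrating this relation from $0$ to $t$, using $F(0,p)=p$, yields $J_\xi(F(t,p))=J_\xi(p)$. Since $\xi\in\mathfrak{g}$ was arbitrary, this is equivalent to $\mathbf{J}(F(t,p))=\mathbf{J}(p)$, which is the assertion. The only step requiring care — and the closest thing to an obstacle — is the correct handling of the $t$-dependence: one must remember that the flow of a $t$-dependent vector field is not a one-parameter group, so one cannot just say $\mathcal{L}_{X_h}J_\xi=0$ implies invariance; the cleaner argument above, working pointwise with the chain rule and exploiting the fact that $J_\xi$ itself carries no $t$-dependence, circumvents this subtlety.
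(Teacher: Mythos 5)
Your proposal is correct and follows essentially the same route as the paper's proof: reduce to the scalar functions $J_\xi$, compute $\tfrac{d}{dt}J_\xi(F(t,p))=(X_{h_t}J_\xi)(F(t,p))=\{J_\xi,h_t\}(F(t,p))=-(\xi_Ph_t)(F(t,p))$ via the momentum map identity $\xi_P=X_{J_\xi}$, and conclude by the $G$-invariance of each $h_t$. Your explicit remark on handling the flow of a $t$-dependent vector field pointwise is a welcome clarification but does not change the argument.
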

\begin{proof}  Let us define $F_t:p\in P\mapsto F(t,p)\in P$ for every $t\in \mathbb{R}$. On the one hand,
\[
\frac{d}{dt}J_{\xi}(F_t)=(X_{h_t}J_{\xi})\circ F_t=\{J_{\xi},h_t\}\circ F_t=(-X_{J_{\xi}}h_t)\circ F_t=-(\xi_Ph_t)\circ F_t=0, \forall \xi\in \mathfrak{g}, \forall t\in \mathbb{R},
\]
where the last equality stems from the fact that each $h_t$, for $t\in \mathbb{R}$, is invariant by assumption relative to the fundamental vector fields of  the action of $G$ on $P$, namely, the vector fields $\xi_P$ with $\xi\in\mathfrak{g}$. Since the $J_{\xi}$ is invariant relative to the dynamics induced by $h$ for every $\xi\in \mathfrak{g}$, we get that $\mathbf{J}$ is invariant relative to the evolution in time of the Hamiltonian system determined by $h$.
\end{proof}

The ${G}$-invariance property of $h$ also yields that $F$  induces canonically a Hamiltonian flow on the reduced phase space $P_{\mu}=\mathbf{J}^{-1}(\mu)/G_{\mu}$ associated with a Hamiltonian function $k_{\mu}:\mathbb{R}\times P_{\mu}\rightarrow \mathbb{R}$ defined in a unique way via the equation $k_{\mu}(t,\pi_\mu(p))=h(t, p)$ for every $p\in {\bf J}^{-1}(\mu)$, the referred to as {\it reduced Hamiltonian}. The proof of this fact is a straightforward generalisation of its $t$-independent proof (cf \cite{AM78,MW74}). Let us prove certain facts on the geometry of the regular elements of ${\bf J}$ for $(P,\omega,h,\Phi,{\bf J})$.

\begin{theorem} If $\mu$ is a regular value for the momentum map ${\bf J}$
of $(P,\omega,h,\Phi,{\bf J})$, then every $\mu'$ belonging to the coadjoint orbit, $\mathcal{O}_\mu$, of $\mu\in \mathfrak{g}^*$ is also a regular value. If $G_\mu$ acts properly and freely in ${\bf J}^{-1}(\mu)$, then $G_{\mu'}$ acts also freely and properly on ${\bf J}^{-1}(\mu')$ for every $\mu'\in \mathcal{O}_\mu$. Finally, ${\bf J}^{-1}(\mathcal{O}_\mu)$ is a submanifold of $P$. 
\end{theorem}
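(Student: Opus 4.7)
The plan is to exploit the equivariance of the momentum map, ${\bf J}\circ\Phi_g=\mathrm{Ad}^*_{g^{-1}}\circ{\bf J}$, in order to transport everything known at $\mu$ across the coadjoint orbit by means of the diffeomorphisms $\Phi_g$ and $\mathrm{Ad}^*_{g^{-1}}$. For the first claim, write $\mu'=\mathrm{Ad}^*_{g^{-1}}\mu$; equivariance immediately yields ${\bf J}^{-1}(\mu')=\Phi_g({\bf J}^{-1}(\mu))$, so every point of ${\bf J}^{-1}(\mu')$ has the form $\Phi_g(p)$ with $p\in {\bf J}^{-1}(\mu)$. Differentiating the equivariance identity at such a $p$ gives $T_{\Phi_g(p)}{\bf J}\circ T_p\Phi_g=T_\mu\mathrm{Ad}^*_{g^{-1}}\circ T_p{\bf J}$, and since $T_p\Phi_g$ and $T_\mu\mathrm{Ad}^*_{g^{-1}}$ are both linear isomorphisms, surjectivity of $T_p{\bf J}$ transfers to surjectivity of $T_{\Phi_g(p)}{\bf J}$. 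Hence $\mu'$ is a regular value.

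For the second claim, I would first verify that $G_{\mu'}=gG_\mu g^{-1}$: a direct computation using the composition rule $\mathrm{Ad}^*_a\mathrm{Ad}^*_b=\mathrm{Ad}^*_{ba}$ shows that $h\in G_{\mu'}$ is equivalent to $g^{-1}hg\in G_\mu$, so the conjugation $c_g:k\in G_\mu\mapsto gkg^{-1}\in G_{\mu'}$ is a Lie group isomorphism. Next, for $k\in G_\mu$ and $p\in {\bf J}^{-1}(\mu)$, the relation $\Phi_{c_g(k)}\bigl(\Phi_g(p)\bigr)=\Phi_{gk}(p)=\Phi_g\bigl(\Phi_k(p)\bigr)$ says that the diffeomorphism $\Phi_g:{\bf J}^{-1}(\mu)\rightarrow {\bf J}^{-1}(\mu')$ intertwines the $G_\mu$-action on ${\bf J}^{-1}(\mu)$ with the $G_{\mu'}$-action on ${\bf J}^{-1}(\mu')$ through $c_g$. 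Since freeness and properness are invariant under such an equivariant diffeomorphism (freeness is pointwise, and properness reduces to properness of the composition $G_\mu\times {\bf J}^{-1}(\mu)\rightarrow {\bf J}^{-1}(\mu)\times {\bf J}^{-1}(\mu)$ under the homeomorphism induced by $c_g\times\Phi_g$), the $G_{\mu'}$-action on ${\bf J}^{-1}(\mu')$ is free and proper.

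For the third claim, the coadjoint orbit $\mathcal{O}_\mu$ is an immersed submanifold of $\mathfrak{g}^*$, and by the first step every $\mu'\in\mathcal{O}_\mu$ is a regular value of ${\bf J}$, so $T_p{\bf J}$ is surjective for every $p\in {\bf J}^{-1}(\mathcal{O}_\mu)$. This surjectivity makes ${\bf J}$ automatically transverse to $\mathcal{O}_\mu$ along ${\bf J}^{-1}(\mathcal{O}_\mu)$, and the transversality version of the preimage theorem then delivers ${\bf J}^{-1}(\mathcal{O}_\mu)$ as a submanifold of $P$. The only delicate point is confirming that $\mathcal{O}_\mu$ carries sufficient regularity as a submanifold of $\mathfrak{g}^*$ for the transversality argument to be applied without ambiguity; this is standard for coadjoint orbits but worth spelling out. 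One may also bypass the issue entirely by recalling that equivariance gives ${\bf J}^{-1}(\mathcal{O}_\mu)=\bigcup_{g\in G}\Phi_g({\bf J}^{-1}(\mu))=G\cdot {\bf J}^{-1}(\mu)$, which exhibits the set as the $G$-saturation of the already-known submanifold ${\bf J}^{-1}(\mu)$, and its smooth structure can then be obtained directly from the Marsden--Weinstein framework. I expect this last step (the rigorous handling of the submanifold structure of $\mathcal{O}_\mu$ and the corresponding transversality) to be the main technical obstacle, while the first two claims are essentially formal consequences of equivariance.
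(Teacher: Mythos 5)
Your proposal is correct and follows essentially the same route as the paper: equivariance of ${\bf J}$ to transport regularity along the orbit, conjugation of isotropy groups to transfer freeness and properness, and transversality of ${\bf J}$ to $\mathcal{O}_\mu$ (automatic from surjectivity of $T_p{\bf J}$) to obtain the submanifold ${\bf J}^{-1}(\mathcal{O}_\mu)$. Your extra care with the chain rule in step one and your remark on the immersed-submanifold status of $\mathcal{O}_\mu$ are reasonable refinements, but they do not change the argument.
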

\begin{proof}
If $\mu$ is a regular point of {\bf J}, then $T{\bf J}$ is a surjection on the points of ${\bf J}^{-1}(\mu)$. The equivariance of ${\bf J}$ yields that, for any $g\in G$ and $p\in {\bf J}^{-1}(\mu)$, one has that ${\bf J}(gp)={\rm Ad}^*_{g^{-1}}({\bf J}(p))$. Let us set $\mu':={\rm Ad}_{g^{-1}}^*\mu$. Hence, if $p\in {\bf J}^{-1}(\mu)$, then $gp\in {\bf J}^{-1} (\mu')$. Since $\Phi_g$ is a diffeomorphism, it follows that  
\[{\bf J}^{-1}({\rm Ad}_{g^{-1}}^*\mu)=\Phi_g({\bf J}^{-1}(\mu)),\qquad \forall g\in G,\quad\forall \mu\in {\bf J}(P).
\]
Moreover, $T_{gp}{\bf J}={\rm Ad}_{g^{-1}}^*T_{p}{\bf J}$ for every $p\in \mathbf{J}^{-1}(\mu)$ and $g\in G$. Then,  $T{\bf J}$ is a surjection on ${\bf J}^{-1}({\rm Ad}_{g^{-1}}^*\mu)$ for every $g\in G$. 

Note that $G_{{\rm Ad}^*_{g^{-1}}\mu}=I_{g}G_\mu $ for every $g\in G$ and $\mu\in {\bf J}(P)$.  Moreover, if $\Phi:G_\mu\times {\bf J}^{-1}(\mu)\rightarrow {\bf J}^{-1}(\mu)$ is free and proper, by the equivariance of $\Phi$, it follows that $\Phi:G_{\mu'}\times {\bf J}^{-1}(\mu')\rightarrow {\bf J}^{-1}(\mu')$ is free and proper also for $\mu'\in \mathcal{O}_{\mu}$. 

To prove that ${\bf J}^{-1}(\mathcal{O}_\mu)$ is a submanifold of $P$, we recall that if $f:M\rightarrow N$, $S\subset N$ is a submanifold of the manifold $N$  and ${\rm Im}\,T_{p}f+ T_{s}S=T_sN$ for every $s\in S$ and $p\in f^{-1}(s)$, we say that $f$ is {\it transversal} to $S$. Then, $f^{-1}(S)$ is a submanifold of $M$ (see \cite[p. 49]{AM78}). Since $\mu$ is a regular point of ${\bf J}$, one has that ${\rm Im}\,T_p {\bf J}=T_{{\bf J}(p)}\mathfrak{g}^*$ for every $p\in {\bf J}^{-1}(\mu)$. Consequently, ${\rm Im}\,T_p{\bf J}+T_{{\bf J}(p)}\mathcal{O}_\mu=T_{{\bf J}(p)}\mathfrak{g}^*$ for every $p\in {\bf J}^{-1}(\mathcal{O}_\mu)$. Therefore, ${\bf J}$ is transversal to $\mathcal{O}_\mu$ and ${\bf J}^{-1}(\mathcal{O}_\mu)$ is a submanifold of $P$. 
\end{proof}
\section{Relative equilibrium points}

Let us extend Poincar\'{e}'s terminology of a {\it relative equilibrium point} (see \cite[p. 306]{AM78}) for a $t$-independent Hamiltonian function to the realm of $t$-dependent Hamiltonian systems on symplectic manifolds.

\begin{definition} 
\label{RelEqPoint}
A {\it  relative equilibrium point} for  $(P,\omega,h,\Phi,{\bf J})$ is a point $z_e\in P$ such that there exists a curve $\xi(t)$ in $\mathfrak{g}$ so that
\begin{equation}\label{def:RelEquPoi}
(X_{h_t})_{z_e}=(\xi(t)_P)_{z_e},\qquad \forall t\in \mathbb{R}.
\end{equation}
\end{definition}

Definition (\ref{RelEqPoint}) reduces to the standard relative equilibrium point for autonomous systems. The following proposition explains more carefully why $z_e$ can still be called a relative equilibrium point.

\begin{proposition}\label{Prop:SingP} Every solution, $p(t)$, to $(P,\omega,h,\Phi,{\bf J})$ passing through a relative equilibrium point $z_e\in P$ with $\mu_e:={\bf J}(z_e)$, namely $p(t_0)=z_e$ for some $t_0\in \mathbb{R}$, projects onto the point $\pi_{\mu_e} (z_e)$, i.e. $\pi_{\mu_e}(p(t))=\pi_{\mu_e}(z_e)$ for every $t\in\mathbb{R}$. 
\end{proposition}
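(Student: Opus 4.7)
The plan is to construct $p(t)$ explicitly as a $G$-orbit translate of $z_e$ and then apply uniqueness of solutions. As a preliminary observation, Proposition \ref{Prop:EvPhiT} yields $\mathbf{J}(p(t)) = \mathbf{J}(z_e) = \mu_e$ for every $t \in \mathbb{R}$, so $p(t) \in \mathbf{J}^{-1}(\mu_e)$. What must be strengthened is that $p(t)$ stays in the single $G_{\mu_e}$-orbit of $z_e$, i.e.~in the fibre of $\pi_{\mu_e}$ over $\pi_{\mu_e}(z_e)$.

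With $\xi(t)$ the curve in $\mathfrak{g}$ supplied by Definition \ref{RelEqPoint}, I would consider the time-dependent ODE on $G$
\[
\dot g(t) = T_e L_{g(t)} \xi(t), \qquad g(t_0) = e,
\]
which, by standard existence and uniqueness on Lie groups, admits a unique smooth solution $g(t)$ defined on all of $\mathbb{R}$. Set $q(t) := \Phi(g(t), z_e)$. Writing $\dot g(t) = \frac{d}{ds}\big|_{s=0} g(t) \exp(s\xi(t))$ and using the action identity $\Phi(g(t)\exp(s\xi(t)), z_e) = \Phi_{g(t)}(\Phi(\exp(s\xi(t)), z_e))$, differentiation yields $\dot q(t) = T_{z_e} \Phi_{g(t)}\bigl((\xi(t)_P)_{z_e}\bigr)$. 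The relative equilibrium identity \eqref{def:RelEquPoi} then replaces $(\xi(t)_P)_{z_e}$ by $(X_{h_t})_{z_e}$. Since each $\Phi_g$ is a symplectomorphism (the action is Hamiltonian) and $h_t$ is $G$-invariant, the Hamiltonian vector field $X_{h_t}$ is $G$-invariant, giving $T_{z_e} \Phi_{g(t)}\bigl((X_{h_t})_{z_e}\bigr) = (X_{h_t})_{q(t)}$. Combined with $q(t_0) = z_e = p(t_0)$, the uniqueness of solutions of $\dot p = X_{h_t}(p)$ forces $p(t) = \Phi(g(t), z_e)$ for every $t \in \mathbb{R}$.

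To finish, equivariance of $\mathbf{J}$ combined with its conservation along $p(t)$ gives
\[
\mu_e = \mathbf{J}(p(t)) = \mathbf{J}(\Phi(g(t), z_e)) = {\rm Ad}^*_{g(t)^{-1}} \mu_e,
\]
so that $g(t) \in G_{\mu_e}$ for all $t$. Because $\pi_{\mu_e}$ identifies points of $\mathbf{J}^{-1}(\mu_e)$ lying in the same $G_{\mu_e}$-orbit, this yields $\pi_{\mu_e}(p(t)) = \pi_{\mu_e}(\Phi(g(t), z_e)) = \pi_{\mu_e}(z_e)$, as required.

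The main obstacle is conceptual rather than computational: the identity \eqref{def:RelEquPoi} holds only at the single point $z_e$, so one cannot simply integrate $\xi(t)_P$ to produce $p(t)$. The bridge from this pointwise equality to a genuine trajectory is provided by the $G$-equivariance of $X_{h_t}$, which allows the tangent vector $(X_{h_t})_{z_e}$ to be transported along the group translates $g(t) z_e$; the fact that $g(t)$ ends up in the isotropy $G_{\mu_e}$ is then recovered \emph{a posteriori} from momentum conservation, rather than built into the construction from the outset.
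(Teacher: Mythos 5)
Your argument is correct, but it takes a genuinely different route from the paper's. The paper works downstairs: it projects $X_{h_t}$ to the $t$-dependent vector field $Y_{\mu_e}=\pi_{\mu_e*}X_{h_t}$ on $P_{\mu_e}$, observes that $(X_{h_t})_{z_e}=(\xi(t)_P)_{z_e}$ with $\xi(t)\in\mathfrak{g}_{\mu_e}$ (because $T_{z_e}\mathbf{J}$ kills $(X_{h_t})_{z_e}$), so that $T_{z_e}\pi_{\mu_e}$ annihilates this vector and $\pi_{\mu_e}(z_e)$ is an equilibrium point of $Y_{\mu_e}$; uniqueness of integral curves on the quotient then finishes. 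You instead work upstairs: you integrate $\xi(t)$ to a curve $g(t)$ in $G$, push $z_e$ along it, and use the $G$-equivariance $\Phi_{g*}X_{h_t}=X_{h_t}$ (which the paper itself verifies inside the proof of Proposition \ref{Prop:REQonIM}) together with uniqueness upstairs to identify $p(t)=g(t)z_e$; membership $g(t)\in G_{\mu_e}$ then falls out of momentum conservation plus equivariance. Your route buys more: it directly yields the reconstruction $p(t)=g(t)z_e$ with $g(t)$ a curve in $G_{\mu_e}$, which the paper only extracts a posteriori in the remark preceding Corollary \ref{Cor:Eq}, and it does not lean on the projectability of $X_{h_t}$ or on solving the reduced dynamics on $P_{\mu_e}$. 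The paper's route is shorter and avoids an auxiliary ODE on $G$. Two small points you should make explicit: Definition \ref{RelEqPoint} supplies only ``a curve'' $\xi(t)$, so to invoke existence and uniqueness for $\dot g=T_eL_{g}\xi(t)$ you should note that $\xi(t)$ can be chosen at least continuous --- indeed smooth, by inverting the linear map $\xi\mapsto(\xi_P)_{z_e}$ on a complement of the isotropy algebra $\mathfrak{g}_{z_e}$ and applying the inverse to the smooth curve $t\mapsto(X_{h_t})_{z_e}$; and the identity $\Phi_{g*}X_{h_t}=X_{h_t}$ requires $\Phi_g^*\omega=\omega$, which follows from the Hamiltonian-action hypothesis when $G$ is connected and is in any case assumed and used by the paper.
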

\begin{proof}
By Proposition \ref{Prop:EvPhiT}, every solution $p(t)$ to the Hamilton equations of $h$ is fully contained within a certain submanifold $\mathbf{J}^{-1}(\mu)$. Then, $p(t)$ projects, via $\pi_{\mu_e}$, onto a curve in $P_{\mu_e}:=\mathbf{J}^{-1}(\mu_e)/G_{\mu_e}$, where $G_{\mu_e}$ is the isotropy subgroup of $\mu_e$ relative to the coadjoint action. Such a curve is a solution to the Hamiltonian system $(P_{\mu_e},\omega_{\mu_e},k_{\mu_e})$, where $k_{\mu_e}:\mathbb{R}\times P_{\mu_e}\rightarrow \mathbb{R}$ is the only $t$-dependent function on $P_{\mu_e}$ such that $k_{\mu_e}(t,  \pi_{\mu_e}(p))=h(t,p)$ for every $p\in {\bf J}^{-1}(\mu_e)$ and $t\in\mathbb{R}$. Since $z_e$ is a relative equilibrium point, it turns out that 
\[
0=T{\bf J}(X_{h_{t}})_{z_e}=T\mathbf{J}(\xi(t)_{P})_{z_e}=(\xi(t))_{\mathfrak{g}^*}(\mu),\qquad \forall t\in \mathbb{R},
\]
for some curve $\xi(t)$ in $\mathfrak{g}$. Hence, $\xi(t)\in\mathfrak{g}_{\mu_e}$ for every $t\in \mathbb{R}$.

Note that $\pi_{\mu_e}(p(t))$ is the integral curve to the $t$-dependent vector field $Y_{\mu_e}$ on $P_{\mu_e}$ given by the $t$-parametric family of vector fields on $P_{\mu_e}$ of the form $(Y_{\mu_e})_t:=\pi_{\mu_e*}(X_{h_t})$ for every $t\in \mathbb{R}$. Since $X_{h_t}=\xi(t)_P$, for a certain curve $\xi(t)$ contained in $\mathfrak{g}_{\mu_e}$, then $((Y_{\mu_e})_{t})_{\pi_{\mu_e}(z_e)}=T_{z_e}\pi_{{\mu_e}}(\xi(t)_P)_{z_e}=0$ for every $t\in \mathbb{R}$. As a consequence, $\pi_{\mu_e}(z_e)$ is an equilibrium point of $Y_{\mu_e}$ and the integral curve of the $t$-dependent vector field $Y_{\mu_e}$ passing through $\pi_{\mu_e}(z_e)$ is $\pi_{\mu_e}(z_e)$. Hence, $\pi_{\mu_e}(p(t))=\pi_{\mu_e}(z_e)$ for every $t\in \mathbb{R}$ and $p(t)\in \pi^{-1}_{\mu_e}(z_e)$ for every $t\in \mathbb{R}$. Then, the projection of every solution passing through $z_e$ is just the stability point $\pi_{\mu_e}(z_e)$ of the reduced Hamiltonian system related to $Y_{\mu_e}$ on $P_{\mu_e}$. 
\end{proof}

Proposition \ref{Prop:SingP} yields that every solution passing through a relative equilibrium point $z_e$ with ${\bf J}(z_e)=\mu_e$ satisfies that $p(t)=g(t)z_e$ for a certain curve $g(t)$ in $G_{\mu_e}$. Let us show that the converse is also true. 

\begin{proposition} If every solution $p(t)$ to $(P,\omega,h,\Phi,{\bf J})$  passing through a point $z_e\in P$, with $\mu_e:={\bf J}(z_e)$, projects onto $\pi_{\mu_e} (z_e)$, then $z_e$ is a relative equilibrium point. 
\end{proposition}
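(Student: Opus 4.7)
The strategy is to invert the construction used in Proposition \ref{Prop:SingP}: we recover a curve $\xi(t)$ in $\mathfrak{g}_{\mu_e}$ by differentiating the group element that drags $z_e$ along the Hamiltonian flow of $h$, and then verify the defining identity \eqref{def:RelEquPoi}.

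Fix an arbitrary $t_0 \in \mathbb{R}$ and let $p(t)$ be the unique solution to the Hamilton equations of $h$ satisfying $p(t_0)=z_e$. By Proposition \ref{Prop:EvPhiT}, this curve lies entirely inside $\mathbf{J}^{-1}(\mu_e)$; by the hypothesis of the present proposition, $\pi_{\mu_e}(p(t))=\pi_{\mu_e}(z_e)$ for every $t\in\mathbb{R}$, so $p(t)$ is confined to the fibre $\pi_{\mu_e}^{-1}(\pi_{\mu_e}(z_e))$, which coincides with the $G_{\mu_e}$-orbit $G_{\mu_e}\cdot z_e$. Since $G_{\mu_e}$ acts freely and properly on $\mathbf{J}^{-1}(\mu_e)$, the orbit map $\Phi^{z_e}|_{G_{\mu_e}}\colon G_{\mu_e}\to G_{\mu_e}\cdot z_e$ is a diffeomorphism; composing its inverse with $p(t)$ yields a unique smooth curve $g(t)$ in $G_{\mu_e}$ with $g(t_0)=e$ and $p(t)=g(t)\cdot z_e$ for $t$ near $t_0$.

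Setting $\xi(t_0):=\dot g(t_0)\in T_eG_{\mu_e}=\mathfrak{g}_{\mu_e}\subset\mathfrak{g}$, the chain rule together with the identity $(\eta_P)_p=T_e\Phi^p(\eta)$ for fundamental vector fields gives
\[
(X_{h_{t_0}})_{z_e}=\frac{d}{dt}\bigg|_{t=t_0}\Phi(g(t),z_e)=T_e\Phi^{z_e}(\dot g(t_0))=(\xi(t_0)_P)_{z_e}.
\]
Since $t_0$ was arbitrary, the assignment $t\mapsto\xi(t)$ defines a curve in $\mathfrak{g}$ fulfilling \eqref{def:RelEquPoi}, so $z_e$ is a relative equilibrium point.

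The only delicate ingredient is showing that the set-theoretic inclusion $p(t)\in G_{\mu_e}\cdot z_e$ can be upgraded to a \emph{smooth} factorisation $p(t)=g(t)\cdot z_e$ with $g(t)\in G_{\mu_e}$; this is precisely where the assumed freeness and properness of the $G_{\mu_e}$-action on $\mathbf{J}^{-1}(\mu_e)$ enter essentially, since they turn the orbit map into a diffeomorphism and thus guarantee that $\xi(t_0):=\dot g(t_0)$ is a well-defined element of $\mathfrak{g}_{\mu_e}$. Once this regularity issue is handled, the rest is a one-line differentiation.
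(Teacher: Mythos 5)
Your proof is correct and follows essentially the same route as the paper's: recover a curve $g(t)$ in $G_{\mu_e}$ with $p(t)=g(t)z_e$ and $g(t_0)=e$, differentiate at $t_0$ to get $(X_{h_{t_0}})_{z_e}=(\xi(t_0)_P)_{z_e}$, and let $t_0$ vary. The only difference is that you explicitly justify the smooth factorisation $p(t)=g(t)z_e$ via the freeness and properness of the $G_{\mu_e}$-action, a step the paper's proof asserts without comment.
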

\begin{proof} Let $p(t)$ be the solution to $(P,\omega,h,\Phi,{\bf J})$ passing through $z_e$ at $t=t_0$. By our assumptions, $\pi_{\mu_e}(p(t))$ projects onto $\pi_{\mu_e}(z_e)$. Consequently, there exists a curve $g(t)$ in $G_{\mu_e}$ such that $p(t)=\Phi(g(t),p(t_0))$ and $g(t_0)=e$. Therefore,
\[
 (X_{h_{t_0}})_{z_e}=\frac{dp}{dt}(t_0)= \frac{d}{dt}\bigg|_{t=t_0} (\Phi(g(t),z_e))=T_{e}\Phi_{z_e}\left(\frac{dg}{dt}(t_0)\right)=(\nu(t_0))_{P}({z_e}),
 \]
for a certain $\nu(t_0)\in \mathfrak{g}_{\mu_e}$. Since the above holds for every $t_0\in \mathbb{R}$, we obtain that $z_e$ is a relative equilibrium point.
\end{proof}

Note that if $p(t)$ is a solution to $(P,\omega,h,\Phi,{\bf J})$ and $p(t)=g(t)p$, Proposition \ref{Prop:EvPhiT} ensures that ${\bf J}(p(t))={\bf J}(p)$. Hence, the action of $g(t)$ leaves invariant the value of ${\bf J}(p)$ and it belongs to $G_{\mu_e}$ for $\mu_e={\bf J}(p)$. 
From previous results, we have the following corollary.
\begin{corollary}\label{Cor:Eq} The following two conditions are equivalent:
\begin{itemize}
    \item The point $z_e\in P$ is a relative equilibrium point of $(P,\omega,h,\Phi,{\bf J})$,
    \item Every particular solution to $(P,\omega,h,\Phi,{\bf J})$ passing through $z_e\in P$ is of the form $p(t)=g(t)z_e$ for a curve $g(t)$ in $G$.
\end{itemize}
\end{corollary}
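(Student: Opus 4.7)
The plan is to obtain both implications as essentially immediate consequences of the two propositions that precede the corollary, together with the conservation of $\mathbf{J}$ along the flow of $X_h$ (Proposition \ref{Prop:EvPhiT}).

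For the forward implication, assume $z_e$ is a relative equilibrium point and let $\mu_e:=\mathbf{J}(z_e)$. By Proposition \ref{Prop:SingP}, every particular solution $p(t)$ passing through $z_e$ satisfies $\pi_{\mu_e}(p(t))=\pi_{\mu_e}(z_e)$ for all $t\in\mathbb{R}$. Combined with Proposition \ref{Prop:EvPhiT}, which gives $p(t)\in \mathbf{J}^{-1}(\mu_e)$ for all $t$, this means $p(t)$ lies in the $G_{\mu_e}$-orbit of $z_e$ at every time. Using that $G_{\mu_e}$ acts freely on $\mathbf{J}^{-1}(\mu_e)$ and that the map $G_{\mu_e}\ni g\mapsto gz_e$ is smooth, one may lift $p(t)$ to a smooth curve $g(t)$ in $G_{\mu_e}\subset G$ with $p(t)=g(t)z_e$, which is the desired expression.

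For the converse, assume every solution $p(t)$ of the Hamilton equations of $h$ through $z_e$ is of the form $p(t)=g(t)z_e$ for some curve $g(t)$ in $G$. Pick any $t_0\in\mathbb{R}$; without loss of generality $g(t_0)=e$ (by absorbing $g(t_0)^{-1}$ into the curve and using that $\Phi_{g(t_0)^{-1}}$ sends solutions to solutions since $h$ is $G$-invariant). Differentiating at $t=t_0$ yields
\[
(X_{h_{t_0}})_{z_e}=\frac{d}{dt}\bigg|_{t=t_0}\Phi(g(t),z_e)=T_e\Phi^{z_e}\!\left(\frac{dg}{dt}(t_0)\right)=(\xi(t_0)_P)_{z_e},
\]
with $\xi(t_0):=(dg/dt)(t_0)\in\mathfrak{g}$. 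Since $t_0$ was arbitrary, the curve $\xi(t)$ in $\mathfrak{g}$ constructed this way realizes Definition \ref{RelEqPoint}, and $z_e$ is a relative equilibrium point. (Alternatively, Proposition \ref{Prop:EvPhiT} gives $\mathbf{J}(p(t))=\mu_e$, so $g(t)\in G_{\mu_e}$ and $\pi_{\mu_e}(p(t))=\pi_{\mu_e}(z_e)$; then the proposition immediately preceding the corollary applies directly.)

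Neither direction poses a real obstacle: both propositions needed have just been established. The only mild delicacy is the smoothness/existence of the curve $g(t)$ in the forward direction, which follows from the fact that $\mathbf{J}^{-1}(\mu_e)\to P_{\mu_e}$ is a principal $G_{\mu_e}$-bundle (guaranteed by the free and proper hypothesis of the Marsden--Weinstein theorem), so local sections exist and a continuous preimage of a smooth curve can be promoted to a smooth curve $g(t)$ in $G_{\mu_e}$.
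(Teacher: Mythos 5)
Your proposal is correct and follows essentially the same route as the paper: the corollary is stated there as an immediate consequence of Proposition \ref{Prop:SingP} (together with the remark after it producing the curve $g(t)$ in $G_{\mu_e}\subset G$) and of the converse proposition, whose proof you reproduce almost verbatim in your second implication. Your added remark on obtaining a smooth lift $g(t)$ via the principal $G_{\mu_e}$-bundle structure of $\mathbf{J}^{-1}(\mu_e)\rightarrow P_{\mu_e}$ is a detail the paper leaves implicit, but it does not change the argument.
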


It is remarkable that, in $t$-dependent systems, the Hamiltonian needs not be a constant of the motion since
\begin{equation}\label{Eq:EvoEnergy}
\frac{dh}{dt}=\frac{\partial h}{\partial t}+\{h,h\}=\frac{\partial h}{\partial t}.
\end{equation}
Meanwhile, Corollary \ref{Cor:Eq} ensures that for particular solutions $p(t)=g(t)z_e$, it follows that $h(t,p(t))=h(t,z_e)$. Despite that, $h$ need not be a constant of the motion  along solutions to $h$ even when passing through relative equilibrium points. It is remarkable that, since $h$ is not a constant of the motion, the analysis of the stability of solutions of the reduced Hamiltonian systems  $k_{\mu_e}$ on $P_{\mu_e}$ will be much more complicated. Indeed, as $k_{\mu_e}$ will not be in general autonomous, much of the procedures given in standard stability analysis must be substituted by more general approaches (cf. \cite{MS88}).

The following proposition allows us to characterise relative equilibrium points more easily than through previous methods.

\begin{theorem}\label{RelativeEquilibrum}
{\bf (Time-Dependent Relative Equilibrium Theorem)} A point $z_{e}\in P$ is a relative equilibrium point for $(P,\omega,h,\Phi,{\bf J})$ if and only if there exists a curve $\xi(t)$ in $\mathfrak{g}$   such that $z_{e}$  is a critical point of $h_{\xi,t}:P\rightarrow \mathbb{R}$ given by
\[h_{\xi,t}:=h_t-[J_{\xi(t)}-\langle \mu_{e},\xi(t)\rangle]=h_t-\langle \mathbf{J}-\mu_{e},\ \xi(t)\rangle
\] 
for every $t\in \mathbb{R}$ and $\mu_e:={\bf J}(z_e)$.
 \end{theorem}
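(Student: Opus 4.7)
My plan is to prove the two directions of the equivalence by exploiting the nondegeneracy of $\omega$ together with the defining properties of the Hamiltonian vector field $X_{h_t}$ and the fundamental vector field $\xi(t)_P$, both of which are Hamiltonian with known Hamiltonian functions.

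The key algebraic observation is that for the function $h_{\xi,t} = h_t - J_{\xi(t)} + \langle \mu_e, \xi(t)\rangle$ on $P$, the last summand is a constant (in $p\in P$) at each fixed $t$, so its differential on $P$ vanishes. Thus
\[
 dh_{\xi,t} = dh_t - dJ_{\xi(t)} = \iota_{X_{h_t}}\omega - \iota_{\xi(t)_P}\omega = \iota_{X_{h_t}-\xi(t)_P}\omega,
\]
where I used the defining property $\iota_{X_{h_t}}\omega = dh_t$ of the Hamiltonian vector field together with property 2 in the definition of the equivariant momentum map, namely $\iota_{\xi(t)_P}\omega = dJ_{\xi(t)}$, applied pointwise at each $t$.

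For the forward implication, assume $z_e$ is a relative equilibrium point, so that there exists a curve $\xi(t)$ in $\mathfrak{g}$ with $(X_{h_t})_{z_e} = (\xi(t)_P)_{z_e}$ for every $t\in\mathbb{R}$. Evaluating the displayed identity at $z_e$ gives $(dh_{\xi,t})_{z_e} = \iota_{(X_{h_t})_{z_e} - (\xi(t)_P)_{z_e}}\omega_{z_e} = 0$, so $z_e$ is a critical point of $h_{\xi,t}$ for each $t$. Conversely, if there is a curve $\xi(t)$ in $\mathfrak{g}$ for which $(dh_{\xi,t})_{z_e}=0$ for all $t$, then $\iota_{(X_{h_t})_{z_e} - (\xi(t)_P)_{z_e}}\omega_{z_e} = 0$, and the nondegeneracy of $\omega$ forces $(X_{h_t})_{z_e} = (\xi(t)_P)_{z_e}$ for every $t$, which by Definition \ref{RelEqPoint} makes $z_e$ a relative equilibrium point.

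There is no real obstacle here: the proof is essentially a bookkeeping exercise once one notices that adding the $t$-dependent constant $\langle \mu_e,\xi(t)\rangle$ to $h_t - J_{\xi(t)}$ has no effect on the critical-point condition. The only subtlety worth flagging is that the quantifier structure matches correctly in both directions (the same curve $\xi(t)$ witnesses both the relative equilibrium condition and the critical-point condition), but this is automatic from the argument since the two conditions are pointwise-in-$t$ equivalent via the above chain of equalities. The second form $h_t - \langle \mathbf{J} - \mu_e,\xi(t)\rangle$ in the statement is just the identity $J_{\xi(t)} - \langle \mu_e,\xi(t)\rangle = \langle \mathbf{J} - \mu_e,\xi(t)\rangle$ spelled out, so nothing extra needs to be checked.
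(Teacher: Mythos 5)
Your proof is correct and follows essentially the same route as the paper: both rest on the identity $\xi(t)_P = X_{J_{\xi(t)}}$ from the momentum map definition and the nondegeneracy of $\omega$, so that $(dh_{\xi,t})_{z_e}=0$ is equivalent to $(X_{h_t})_{z_e}=(\xi(t)_P)_{z_e}$. Your converse is in fact slightly more direct than the paper's, which detours through Corollary \ref{Cor:Eq} rather than appealing immediately to Definition \ref{RelEqPoint}, but the underlying argument is the same.
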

\begin{proof} Assume first that $z_{e}$ is a relative equilibrium point. The definition of the momentum map and Corollary \ref{Cor:Eq} yield $(X_{h_t})_{z_{e}}\!-\!(X_{J_{\xi(t)}})_{z_{e}}=0$ for every $t\in \mathbb{R}$. Since $P$ is symplectic, the latter is equivalent to $z_{e}$ being a critical point of $h_t-J_{\xi(t)}$  for every $t\in \mathbb{R}$, which is the same as being a critical point of $h_{\xi,t}$ for every $t\in \mathbb{R}$, namely $(dh_{\xi,t})_{z_e}=0$.

Conversely, assume $z_{e}$ is a critical point of $h_{\xi,t}$, then $z_{e}$ is a stationary point of the dynamical system $X_{h_t-J_{\xi(t)}}$ for every $t\in \mathbb{R}$. Hence, the evolution of every particular solution of $X_{h}$ passing through $z_e$ at time $t_0$ is of the form $g(t)z_e$ for a certain curve in $G$ with $g(t_0)=e$ and, in view of Corollary \ref{Cor:Eq}, one has that $z_e$ becomes a relative equilibrium point.
\end{proof}
In view of Theorem \ref{RelativeEquilibrum}, to find relative equilibrium points, one can consider the functions $h^e_t:(p,\xi)\in P\times\mathfrak{g}\mapsto h_t-\langle \mathbf{J}-\mu_{e},\ \xi \rangle\in \mathbb{R}$, for every $t\in \mathbb{R}$, and to look for elements $z_e\in P$ such that $(z_e,\xi(t))$ is a critical point of $h^e_t$ for each $t\in \mathbb{R}$ and a certain curve $\xi(t)$ in  $\mathfrak{g}$. Evidently, $\xi(t)$ plays here the role of a $t$-dependent Lagrange multiplier. Note that the term $\langle {\bf J}-\mu_e,\xi\rangle$ in $h^e_t$ ensures that the described relative equilibrium points  belong to ${\bf J}^{-1}(\mu_e)$.

\section{Foliated Lie systems and relative equilibrium submanifolds}

This section shows that the set of relative equilibrium points for a $G$-invariant Hamiltonian system $(P,\omega,h,\Phi,{\bf J})$ is given by a union of immersed submanifolds. Moreover, we also prove that the restriction of the original $t$-dependent Hamiltonian system to such immersed submanifolds can be described via a foliated Lie system  \cite{CGM00} assuming a certain condition on the Lie algebra of fundamental vector fields of the action of $G$ on $P$.  

\begin{proposition}\label{Prop:REQonIM} If $z_e$ is a relative equilibrium point of $(P,\omega,h,\Phi,{\bf J})$, then $\mathcal{O}_{z_e}:=Gz_e$ is an immersed submanifold of $P$ consisting of relative equilibrium points.
\end{proposition}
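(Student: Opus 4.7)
The first assertion, that $Gz_e$ is an immersed submanifold of $P$, is a standard consequence of the constant rank theorem already recalled in Section 3: every orbit of a smooth Lie group action is an immersed submanifold. So the substantive task is to show that every $p\in Gz_e$ is itself a relative equilibrium point of $(P,\omega,h,\Phi,{\bf J})$.

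To that end, I would first establish the $G$-equivariance of the Hamiltonian vector field $X_{h_t}$, namely $T_q\Phi_g (X_{h_t})_q=(X_{h_t})_{\Phi_g(q)}$ for all $g\in G$, $q\in P$, and $t\in\mathbb{R}$. Since $\Phi$ is a Hamiltonian Lie group action, one has $\mathcal{L}_{\xi_P}\omega=d\iota_{\xi_P}\omega+\iota_{\xi_P}d\omega=d(dJ_\xi)=0$ for every $\xi\in\mathfrak{g}$, so each $\Phi_g$ preserves $\omega$ (assumed throughout, and automatic for $G$ connected). Pulling back $\iota_{X_{h_t}}\omega=dh_t$ along $\Phi_g$ and using the $G$-invariance of $h_t$ together with the non-degeneracy of $\omega$ then yields the claimed equivariance of $X_{h_t}$.

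Next, I would use the classical equivariance identity for fundamental vector fields, $T_q\Phi_g(\xi_P)_q=(({\rm Ad}_g\xi)_P)_{\Phi_g(q)}$ for all $\xi\in\mathfrak{g}$, $g\in G$, $q\in P$, which follows at once from the definition of $\xi_P$ and the identity $g\exp(t\xi)=\exp(t\,{\rm Ad}_g\xi)\,g$. Combining both ingredients with the relative equilibrium hypothesis $(X_{h_t})_{z_e}=(\xi(t)_P)_{z_e}$ gives, for any $p=gz_e\in Gz_e$,
\[
(X_{h_t})_p=T_{z_e}\Phi_g\,(X_{h_t})_{z_e}=T_{z_e}\Phi_g\,(\xi(t)_P)_{z_e}=(({\rm Ad}_g\xi(t))_P)_p,\qquad\forall t\in\mathbb{R}.
\]
Thus the curve $\eta(t):={\rm Ad}_g\xi(t)$ in $\mathfrak{g}$ witnesses $p$ as a relative equilibrium point in the sense of Definition \ref{RelEqPoint}. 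Since $g\in G$ was arbitrary, this exhibits every point of $Gz_e$ as a relative equilibrium point.

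The only delicate step is the $G$-equivariance of $X_{h_t}$, which hinges on each $\Phi_g$ being a symplectomorphism; beyond that the argument is a formal computation with the fundamental vector field equivariance formula, and no further structure (such as properties of the momentum map $\mathbf{J}$ beyond what has been used) is needed.
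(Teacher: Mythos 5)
Your proposal is correct, and it shares with the paper's proof the key computational step: both derive the $G$-equivariance (indeed invariance) of the Hamiltonian vector field, $\Phi_{g*}X_{h_t}=X_{h_t}$, from $\Phi_g^*\omega=\omega$ together with the $G$-invariance of $h_t$. Where you diverge is in how this is used to conclude. The paper transports \emph{solution curves}: a solution through $gz_e$ is the $\Phi_g$-image of a solution through $z_e$, hence of the form $g\,g(t)\,g^{-1}(gz_e)$, and then Corollary \ref{Cor:Eq} (the equivalence between Definition \ref{RelEqPoint} and the ``solutions stay on the group orbit'' characterisation) is invoked. You instead argue purely pointwise: combining the equivariance of $X_{h_t}$ with the classical identity $T_q\Phi_g(\xi_P)_q=(({\rm Ad}_g\xi)_P)_{gq}$, you verify Definition \ref{RelEqPoint} at $gz_e$ directly, with the explicit generator curve ${\rm Ad}_g\xi(t)$. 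Your route is more self-contained (it does not rely on Corollary \ref{Cor:Eq}) and yields the extra information of how the generator transforms along the orbit, which is in the spirit of the identity (\ref{Eq:Mid}) used later in Section 5; the paper's route is slightly shorter given that the solution-curve characterisation has already been established. One small point in your favour: you correctly flag that $\Phi_g^*\omega=\omega$ for all $g\in G$ requires either $G$ connected or an explicit assumption, a subtlety the paper's proof passes over silently.
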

\begin{proof} Since $z_e$ is a relative equilibrium point, every solution passing through $z_e$ is of the form $z(t)=g(t)z_e$ for a certain curve $g(t)$ in $G$. Since  $h(t, \Phi_g(p))=h(t,p)$ for every $t\in\mathbb{R}$ and $p\in P$, and also $\Phi_g^*\omega=\omega$ for every $g\in G$, one obtains that 
\begin{multline*}
\iota_{X_{h_t}}\omega=dh_t\Rightarrow (\iota_Y\iota_{\Phi_{g*}X_{h_t}}\omega)(gp)=[(\Phi^*_g\omega)(X_{h_t},\Phi_{g^{-1}*}Y)](p)\\=\omega(X_{h_t},\Phi_{g^{-1}*}Y)(p)=\langle dh_t,\Phi_{g^{-1}*}Y\rangle(p)=\langle d\Phi_{g^{-1}}^*h_t,Y\rangle(gp)=\langle dh_t,Y\rangle(gp),
\end{multline*}
for every $Y\in \mathfrak{X}(P)$, $g\in G$, $p\in P$ and $t\in\mathbb{R}$.
Therefore, $\Phi_{g*}X_{h_t}=X_{h_t}$ for every $t\in \mathbb{R}$. Hence, every solution $z'(t)$ passing through $gz_e$ is such that $z(t):=g^{-1}z'(t)$ is a solution to $X_{h_t}$ passing through $z_e$. Thus, $z'(t)=gz(t)=gg(t)g^{-1}gz_e$. 
In other words, $gz_e$ is a relative equilibrium point for $(P,\omega,h,\Phi,{\bf J})$. Since $Gz_e$ is an immersed submanifold of $P$ (see \cite{Bo05}), our proposition follows.
\end{proof}

A {\it foliated Lie system} \cite{CGM00} on a manifold $P$ is a first-order system of differential equations taking the form
\[
\frac{dp}{dt}=X(t,p),\qquad \forall t\in \mathbb{R},\qquad \forall p\in P,
\]
so that 
\begin{equation}\label{eq:FLS1}
X(t,p)=\sum_{\alpha=1}^rg_\alpha (t,p)X_\alpha(p), \qquad \forall t\in \mathbb{R},\quad \forall p\in P,
\end{equation}
where $X_1,\ldots,X_r$ span an $r$-dimensional Lie algebra of vector fields, i.e.
\[
[X_\alpha,X_\beta]=\sum_{\gamma=1}^rc_{\alpha\beta}^\gamma X_\gamma,\qquad \alpha,\beta=1,\ldots,r,
\]
for certain constants $c_{\alpha\beta}^\gamma$, and the functions $g_{\alpha,t}:p\in P\mapsto g_\alpha(t,p)\in \mathbb{R}$, for every $t\in \mathbb{R}$ and $\alpha=1,\ldots,r$, are first integrals of $X_1,\ldots,X_r$. The Lie algebra $\langle X_1,\dots,X_r\rangle$ is called a {\it Vessiot--Guldberg Lie algebra} of $X$ \cite{LS20}.

Let us show how foliated Lie systems occur in the study of relative equilibrium points for $G$-invariant Hamiltonian systems.

\begin{theorem} Let $z_e$ be a relative equilibrium point for $(P,\omega,h,\Phi,{\bf J})$ and let $\mu_e:={\bf J}(z_e)$. Assume that $G_{\mu_e}$ is abelian. Then,  $X_{h_t}$ is tangent to $\mathcal{O}_{z_e}:=G z_e$ for every fixed $t\in \mathbb{R}$, and $X_{h}$ gives rise, by restriction, to a $t$-dependent vector field $X_h|_{\mathcal{O}_{z_e}}$ on $\mathcal{O}_{z_e}$. Moreover, $X_h|_{\mathcal{O}_{z_e}}$ becomes a foliated Lie system with an abelian Vessiot--Guldberg Lie algebra of dimension equal to $\dim\mathfrak{g}_{\mu_e}$.
\end{theorem}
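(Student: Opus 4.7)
The claim splits into two parts: tangency of $X_{h_t}$ to $\mathcal{O}_{z_e}$ for every $t$, and the foliated Lie system structure of the restriction.

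For tangency, I would reuse the $G$-invariance $\Phi_{g*}X_{h_t}=X_{h_t}$ established inside the proof of Proposition \ref{Prop:REQonIM}. Since $z_e$ is a relative equilibrium, $(X_{h_t})_{z_e}=(\xi(t)_P)_{z_e}\in T_{z_e}(Gz_e)$ for some curve $\xi(t)$ in $\mathfrak{g}$ which, by the argument in Proposition \ref{Prop:SingP}, can be taken inside $\mathfrak{g}_{\mu_e}$. Pushing forward by $\Phi_g$ gives $(X_{h_t})_{gz_e}=\Phi_{g*}(X_{h_t})_{z_e}\in\Phi_{g*}T_{z_e}(Gz_e)=T_{gz_e}(Gz_e)$, so $X_{h_t}$ is tangent to $\mathcal{O}_{z_e}$ and the restriction $X_h|_{\mathcal{O}_{z_e}}$ is a well-defined $t$-dependent vector field on $\mathcal{O}_{z_e}$.

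To build the Vessiot-Guldberg algebra I would introduce a right action of $G_{\mu_e}$ on $\mathcal{O}_{z_e}$ by $\Psi(gz_e,k):=gkz_e$ for $k\in G_{\mu_e}$. The key technical step, and the main obstacle of the proof, is the well-definedness of $\Psi$: if $g_1z_e=g_2z_e$ then $h:=g_2^{-1}g_1\in G_{z_e}$, and one must check $k^{-1}hk\in G_{z_e}$. First, ${\bf J}(hz_e)={\rm Ad}^*_{h^{-1}}\mu_e={\bf J}(z_e)=\mu_e$ forces $h\in G_{\mu_e}$, so $G_{z_e}\subseteq G_{\mu_e}$. Then the abelianness hypothesis gives $k^{-1}hk=h\in G_{z_e}$, as required. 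The action $\Psi$ commutes with $\Phi$, so its fundamental vector fields $Y_\xi$, for $\xi\in\mathfrak{g}_{\mu_e}$, are $G$-invariant on $\mathcal{O}_{z_e}$. Right actions induce a Lie algebra morphism $\xi\mapsto Y_\xi$, and $\mathfrak{g}_{\mu_e}$ is abelian, so $\{Y_\xi:\xi\in\mathfrak{g}_{\mu_e}\}$ forms an abelian Lie subalgebra of $\mathfrak{X}(\mathcal{O}_{z_e})$.

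Finally, I would identify the decomposition. At $z_e$ one has $Y_{\xi(t)}(z_e)=\frac{d}{ds}\big|_{s=0}\Psi(z_e,\exp(s\xi(t)))=(\xi(t)_P)_{z_e}=(X_{h_t})_{z_e}$, and since both $X_{h_t}$ and $Y_{\xi(t)}$ are $G$-invariant this equality propagates via $\Phi_{g*}$ to every $gz_e\in\mathcal{O}_{z_e}$, giving $X_{h_t}|_{\mathcal{O}_{z_e}}=Y_{\xi(t)}$. Choosing a basis $\xi_1,\ldots,\xi_r$ of $\mathfrak{g}_{\mu_e}$ and writing $\xi(t)=\sum_\alpha\lambda_\alpha(t)\xi_\alpha$ yields $X_h|_{\mathcal{O}_{z_e}}(t,p)=\sum_\alpha\lambda_\alpha(t)Y_{\xi_\alpha}(p)$. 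The coefficients depend only on $t$, hence are trivially first integrals of each $Y_{\xi_\beta}$, so the expression exhibits a foliated Lie system with an abelian Vessiot-Guldberg Lie algebra. A short check gives $\dim\langle Y_{\xi_1},\ldots,Y_{\xi_r}\rangle=\dim\mathfrak{g}_{\mu_e}-\dim\mathfrak{g}_{z_e}$, which equals $\dim\mathfrak{g}_{\mu_e}$ under the Marsden--Weinstein free-action hypothesis, since then $G_{z_e}\subseteq G_{\mu_e}$ forces $\mathfrak{g}_{z_e}=0$.
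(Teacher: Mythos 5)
Your proof is correct, and while the tangency argument and the vector fields you construct coincide with the paper's, the way you establish the foliated-Lie-system structure is genuinely different and in fact yields a stronger conclusion. The paper introduces the same fields $Y_\alpha(gz_e):=T_{z_e}\Phi_g[(\xi^{\mu_e}_\alpha)_P(z_e)]$ but then works leaf by leaf: through the computations (\ref{Eq:Mid}) and (\ref{Eq:Mid2}) it shows that on each submanifold $G_{\mu'_e}z'_e$ the $Y_\alpha$ restrict to fundamental vector fields of the abelian group $G_{\mu'_e}$ and that the coefficients $f_\alpha(t,z)$ in $X_{h_t}=\sum_\alpha f_\alpha Y_\alpha$ are constant along each such leaf, hence first integrals of the $Y_\alpha$. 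You instead package the $Y_\xi$ as fundamental vector fields of a single right $G_{\mu_e}$-action on all of $\mathcal{O}_{z_e}$ (well defined since $G_{z_e}\subseteq G_{\mu_e}$; under the standing hypothesis that $G_{\mu_e}$ acts freely on ${\bf J}^{-1}(\mu_e)$ one even has $G_{z_e}=\{e\}$, so your abelianness check is not needed), which makes the commutativity of $\langle Y_{\xi_1},\ldots,Y_{\xi_r}\rangle$ immediate and global, and you then combine $\Phi_{g*}X_{h_t}=X_{h_t}$ with agreement at the single point $z_e$ to obtain $X_{h_t}|_{\mathcal{O}_{z_e}}=Y_{\xi(t)}=\sum_\alpha\lambda_\alpha(t)Y_{\xi_\alpha}$ with coefficients depending on $t$ only, on the whole orbit and not merely on each leaf. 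Since spatially constant coefficients are trivially first integrals, this is a fortiori a foliated Lie system of the required type; it is in fact an ordinary Lie system with abelian Vessiot--Guldberg Lie algebra, which is stronger than what the theorem asserts, and your argument is correspondingly shorter. Two minor remarks: the map $\xi\mapsto Y_\xi$ attached to a right action is a morphism rather than an anti-morphism under the paper's conventions, but this is immaterial since $\mathfrak{g}_{\mu_e}$ is abelian; and your closing dimension count correctly reduces to $\mathfrak{g}_{z_e}=0$, which follows from $G_{z_e}\subseteq G_{\mu_e}$ and the freeness of the $G_{\mu_e}$-action, exactly as the paper argues when proving that the $Y_\alpha$ are pointwise linearly independent.
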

\begin{proof} From now on in this proof, we assume that $z_e'$ belongs to $\mathcal{O}_{z_e}$ and $\mu_e':={\bf J}(z_e')$. Let us prove that $X_h|_{\mathcal{O}_{z_e}}$ exists and it can be written in the form (\ref{eq:FLS1}) for certain functions $g_\alpha$, with $\alpha=1,\ldots,r$, that depend only on time on the submanifolds of the form $G_{\mu'_e}z'_e$, where $G_{\mu'_e}$ is the isotropy subgroup of the coadjoint action of $G$ at $\mu'_e$, and certain vector fields tangent to $\mathcal{O}_{z_e}$ closing on a finite-dimensional Lie algebra of vector fields. This shows that $X_h|_{\mathcal{O}_{z_e}}$ is a foliated Lie system.

Proposition \ref{Prop:REQonIM} ensures  that $z'_e$ is a relative equilibrium point. Then, every integral curve to $X_{h}$ passing through $z'_e$ takes the form $z(t)=g(t)z'_e$ for a certain curve $g(t)$ in $G$.  This shows that $X_h$ is tangent to $\mathcal{O}_{z_e}$ and it can be restricted to it. Proposition \ref{Prop:EvPhiT} yields that ${\bf J}$ is constant on integral curves of $X_{h}$. Consequently, the integral curves of $X_{h}$ passing through $z'_e$ are contained in ${\bf J}^{-1}(\mu'_e)$. Assume that $z(t_0)=z_e'$ and then $z(t)=g(t)z_e'$ for a curve $g(t)$ in $G$ with $g(t_0)=e$. Hence,
\[
0=\frac{d}{dt}\bigg|_{t=t_0}{\bf J}(z(t))=\frac{d}{dt}\bigg|_{t=t_0}{\bf J}(g(t)z'_e)=
\frac{d}{dt}\bigg|_{t=t_0}{\rm Ad}_{g(t)^{-1}}^*({\bf J}(z'_e))=[\xi(t_0)]_{\mathfrak{g}^*}(\mu'_e),\]
where $dg(t)/dt|_{t=t_0}=\xi(t_0)$. Therefore, $\xi(t_0)\in \mathfrak{g}_{\mu'_e}$.  

Let $\{\xi_1^{\mu_e},\ldots,\xi^{\mu_e}_r\}$ be a basis for $\mathfrak{g}_{\mu_e}$. By our initial assumptions, $G_{\mu_e}$ is abelian and thus $\mathfrak{g}_{\mu_e}$ is abelian too. Define the vector fields on $\mathcal{O}_{z_e}$ of the form $Y_\alpha(gz_e):=[T_{z_e}\Phi_g(\xi^{\mu_e}_\alpha)_P](z_e)$ for $\alpha=1,\ldots,r$. Note that each $Y_\alpha$ is well defined because $G_{z_e}$, the isotropy group of $G$ acting on $z_e$, is a subgroup of $G_{\mu_e}$, which acts freely on ${\bf J}^{-1}(
\mu_e)$, and then $gz_e=g'z_e$ implies $g^{-1}g'\in G_{z_e}=\{e\}$ and $g'=g$. Since the action of $G_{\mu_e}$ is assumed to be free on ${\bf J}^{-1}(\mu_e)$, the tangent vectors $Y_1(z_e),\ldots,Y_r(z_e)$ are linearly independent. Since $Y_\alpha(gz_e)=T_{z_e}\Phi_{g}[Y_\alpha(z_e)]$ for every $g\in G$, one obtains that $Y_1\wedge\ldots \wedge Y_r\neq 0$ on $\mathcal{O}_{z_e}$. Since $\mathfrak{g}_{\mu_e}$ is abelian, for every $g_{\mu_e}\in G_{\mu_e}$, one has 
\begin{equation}\label{Eq:Mid}
Y_\alpha(g g_{\mu_e} z_e)=T_{g_{\mu_e}z_e}\Phi_g\circ T_{z_e}\Phi_{g_{\mu_e}}[(\xi^{\mu_e}_\alpha)_P(z_e)]=T_{g_{\mu_e} z_e}\Phi_{g}(\xi^{\mu_e}_\alpha)_P(g_{\mu_e} z_e)=({\rm Ad}_g(\xi^{\mu_e}_\alpha))_P(gg_{\mu_e} z_e),
\end{equation}
for $\alpha=1,\ldots,r.$ Note indeed that ${\rm Ad}_g(\xi^{\mu_e}_\alpha)$, with $\alpha=1,\ldots,r$, is a basis of the Lie algebra $\mathfrak{g}_{{\bf J}(g g_{\mu_e} z_e)}$. Moreover,
  $X_{h}(t,z)=\sum_{\alpha=1}^rf_\alpha(t,z)Y_\alpha(z)$  on every  $z\in G_{\mu'_e}z'_e$ for a unique set of functions $f_1(t,z),\ldots,f_r(t,z)$. If $z_e':=gz_e$ and since  $G_{\mu_e}$ is abelian, then $G_{\mu'_e}=gG_{\mu_e}g^{-1}$ is abelian too. Then, $T_{z'_e}\Phi_{g_{\mu'_e} }({\rm Ad}_{g}(\xi^{\mu_e}_\alpha))_{P}( z'_e)=({\rm Ad}_{g}(\xi^{\mu_e}_\alpha))_{P}( g_{\mu'_e} z'_e)$.   By using (\ref{Eq:Mid}), we obtain
\begin{multline}\label{Eq:Mid2}
X_{h}(t,g_{\mu'_e} z'_e)=T_{z_e'}\Phi_{g_{\mu'_e}}X_{h}(t,z'_e)=\sum_{\alpha=1}^rf_\alpha(t,z'_e)T_{z'_e}\Phi_{g_{\mu'_e} }({\rm Ad}_{g}(\xi^{\mu_e}_\alpha))_{P}( z'_e)\\=\sum_{\alpha=1}^rf_\alpha(t,z'_e)({\rm Ad}_{g}(\xi^{\mu_e}_\alpha))_{P}( g_{\mu'_e} z'_e)=\sum_{\alpha=1}^rf_\alpha(t,z_e')Y_\alpha(g_{\mu_e'}z_e'),
\end{multline}
for every $g_{\mu_e'}\in G_{\mu_e'}$. In particular, we have used in the last equality that $g^{-1}G_{\mu'_e}g=G_{\mu_e}$ and then $g^{-1}g_{\mu_e'}g=g'_{\mu_e}$ for some $g'_{\mu_e}\in G_{\mu_e}$. Thus, 
\begin{multline*}
Y_\alpha(g_{\mu_e'}z_e')=Y(g_{\mu_e'}gz_e)=Y(gg^{-1}g_{\mu_e'}gz_e)=Y(gg'_{\mu_e}z_e)=({\rm Ad}_g(\xi^{\mu_e}_\alpha))_P(gg'_{\mu_e}z_e)\\=({\rm Ad}_g(\xi^{\mu_e}_\alpha))_P(g_{\mu'_e}gz_e)=({\rm Ad}_g(\xi^{\mu_e}_\alpha))_P(g_{\mu'_e}z'_e).
\end{multline*}
From (\ref{Eq:Mid2}) and the fact that $X_h(t,g_{\mu_e'}z')=\sum_{\alpha=1}^rf_\alpha(t,g_{\mu_e'}z')Y_\alpha(g_{\mu_e'}z')$, one has that $f_\alpha(t,z'_e)=f_\alpha(t,g_{\mu_e'} z'_e)$ for every $g_{\mu'_e}\in G_{\mu'_e}$ and $\alpha=1,\ldots,r$. Consequently, one obtains that
\[
X_{h}(t,z)=\sum_{\alpha=1}^rf_\alpha(t,z)Y_\alpha(z),\qquad \forall z\in \mathcal{O}_{z_e},\quad \forall t\in \mathbb{R},
\]
for some functions $f_1,\ldots,f_r$ on $\mathbb{R}\times\mathcal{O}_{z_e}$ whose values on  each subset $G_{\mu'_e}z'_e$, with $z'_e\in \mathcal{O}_{z_e}$ and $\mu'_e={\bf J}(z_e')$, depend only on time. The vector fields $Y_1,\ldots,Y_r$ are tangent to the submanifolds $G_{\mu'_e} z_e'$, where they close an abelian Lie algebra.   Since the functions $f_1,\ldots,f_r$ are just $t$-dependent on the submanifolds $G_{\mu_e'}z_e'$, they become first integrals of the vector fields in $\langle Y_1,\ldots,Y_r\rangle$. Therefore, $X_{h}|_{\mathcal{O}_z}$ becomes a foliated Lie system with an abelian Vessiot--Guldberg Lie algebra isomorphic to $\mathfrak{g}_{\mu_e}$.
\end{proof}

\section{Stability on the reduced space}
Theorem \ref{RelativeEquilibrum} characterises the relative equilibrium points of $G$-invariant Hamiltonian systems as the extrema of the Hamiltonian subject to the constraint of the constant momentum map. Then,  $h_{\xi,t}:=h_t-\langle \mathbf{J}-\mu_{e},\xi(t)\rangle$ is to be optimised and $\xi(t)\in \mathfrak{g}$ is a Lagrange multiplier depending on time. 

The study of the stability of equilibrium points in ${\bf J}^{-1}(\mu_e)/G_{\mu_e}$ for non-autonomous Hamiltonian systems requires the use of $t$-dependent Lyapunov stability analysis. This is more complicated than studying the stability of autonomous Hamiltonian systems, which frequently relies on searching a minimum for the Hamiltonian of the system   \cite{MS88}, although this condition is not necessary \cite[p. 207]{AM78}. To tackle the study of non-autonomous Hamiltonians, we will use Theorem \ref{Th:BasicTheoremOfLyapunov} and a more general approach, which easily retrieves the standard results used in the energy-momentum method for analysing the stability of reduced autonomous Hamiltonian systems.

Let $z_e$ be a relative equilibrium point of $(P,\omega,h,{\bf J},\Phi)$. Let us analyse the function $h_{z_e}:\mathbb{R}\times P\rightarrow \mathbb{R}$ given by
\[
h_{z_e}(t,z):=h(t,z)-h(t,z_e),\qquad \forall (t,z)\in \mathbb{R}\times P.
\]
Then, $h_{z_e}(t,z_e)=0$ for every $t\in \mathbb{R}$. If $z(t)$ is the particular solution to our $G$-invariant Hamiltonian system $(P,\omega,h,{\bf J},\Phi)$ with initial condition $z$ at $t_0$, then
\[
\frac{d}{dt}\bigg|_{t=t_0}h_{z_e}(t,z(t)):=\frac{d}{dt}\bigg|_{t=t_0}h(t,z(t))-\frac{d}{dt}\bigg|_{t=t_0}h(t,z_e).
\]
Recall that the time derivative of a Hamiltonian function $h$ along the solutions of its Hamilton equations is given by
\[
\frac{dh}{dt}=\frac{\partial h}{\partial t}+ \{h_t,h_t\}=\frac{\partial h}{\partial t}.
\]
Thus,
\[
\frac{d}{dt}\bigg|_{t=t_0}h_{z_e}(t,z(t)):=\frac{\partial h}{\partial t}(t_0,z)-\frac{\partial h}{\partial t}(t_0,z_e)=\frac{\partial h_{z_e}}{\partial t}(t_0,z).
\]
Note that $h_{z_e}(t,gz)=h_{z_e}(t,z)$ for every $g\in G$ and every $(t,z)\in \mathbb{R}\times P$, i.e. $h_{z_e}(t,z)$ is $G$-invariant. Then, we can define a function $H_{z_e}:\mathbb{R}\times P_{\mu_e}\rightarrow \mathbb{R}$ of the form
\[
H_{z_e}(t,[z]):=h_{z_e}(t,z),\qquad \forall z\in {\bf J}^{-1}(\mu_e),\quad \forall t\in \mathbb{R},
\]
where $[z]$ stands for the equivalence class of $z\in {\bf J}^{-1}(\mu_e)$ in ${\bf J}^{-1}(\mu_e)/G_{\mu_e}$. Note that $H_{z_e}(t,[z])-k_{\mu_e}(t,[z])$ depends only on time. Hence, $H_{z_e}$ has an equilibrium point in $[z_e]$. 
Moreover,
\[
\frac{d}{dt}\bigg|_{t=t_0}H_{z_e}(t,[z(t)])=\frac{\partial h_{z_e} }{\partial t}(t_0,z),\qquad \forall t_0\in \mathbb{R},\qquad \forall [z]\in {\bf J}^{-1}(\mu_e)/ G_{\mu_e}.
\]
Let us use $H_{z_e}$ to study the stability of $[z_e]$ in $P_{\mu_e}$. In particular, we will study the conditions on $h$ to ensure that $H_{z_e}$ gives rise to different types of stable equilibrium points at $[z_e]$. With this aim, consider a coordinate system $\{x_1,\ldots,x_n\}$ on an open neighbourhood $U$ of $[z_e]\in P_{\mu_e}$ such that $x_i([z_e])=0$ for $i=1,\ldots, n$. Let $\alpha=(\alpha_1,\ldots,\alpha_n)$, with $\alpha_1,\ldots,\alpha_n\in \mathbb{N}\cup \{0\}$, be a multi-index with $n:=\dim {\bf J}^{-1}(\mu_e)/G_{\mu_e}$. Let $|\alpha|:=\sum_{i=1}^n\alpha_i$ and $D^\alpha:=\partial^{\alpha_1}_{x_1}\cdots \partial^{\alpha_n}_{x_n}$.

\begin{lemma}\label{Lemm:MegaLemma} Let us define the $t$-dependent parametric family of $n\times n$  matrices $M(t)$ with entries
\[
[M(t)]_i^j:=\frac12\frac{\partial^2H_{z_e}}{\partial x_i\partial x_j}(t,[z_e]),\qquad \forall t\in \mathbb{R},\qquad i,j=1,\ldots,n,
\]
and let ${\rm spec}(M(t))$ stand for the spectrum of  the matrix $M(t)$ at $t\in \mathbb{R}$. Assume that there exists a constant $\lambda$ such that $0<\lambda<\inf_{t\in I_{t^0}}\min {\rm spec}(M(t))$ for some $t^0\in\mathbb{R}$. Suppose also that there exists a real constant $c$ such that
\[
c\geq \frac 16\sup_{t\in I_{t^0}}\max_{|\alpha|=3}\max_{[y]\in \mathcal{B}}|D^\alpha H_{z_e}(t,[y])|
\]
for a certain compact neighbourhood $\mathcal{B}$ of $[z_e]$. Then, there exists an open neighbourhood $\mathcal{U} $  of $[z_e]$ where the function $H_{z_e}:\mathbb{R}\times \mathcal{U}\rightarrow \mathbb{R}$ is lpdf from $t^0$. If there exists additionally a constant $\Lambda$ such that $\sup_{t\in I_{t^0}}\max {\rm spec}(M(t))< \Lambda$, then $H_{z_e}:\mathbb{R}\times \mathcal{U}\rightarrow \mathbb{R}$ is a decrescent function from $t^0$. 
\end{lemma}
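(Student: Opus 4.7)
The plan is a straightforward Taylor-expansion argument in the coordinates $\{x_1,\ldots,x_n\}$, using the hypotheses to dominate the cubic remainder by the quadratic part uniformly in $t\in I_{t^0}$ on a small enough neighbourhood, and then transferring the resulting coordinate estimates to estimates in terms of the Riemannian distance $d$.

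First I would show that $(dH_{z_e,t})_{[z_e]}=0$ for every $t\in\mathbb{R}$, which removes the linear term in the expansion. This follows from the observation recorded just before the statement of the lemma: $H_{z_e}(t,[z])-k_{\mu_e}(t,[z])$ depends only on $t$, so $dH_{z_e,t}=dk_{\mu_e,t}$ at each point of $P_{\mu_e}$; and $[z_e]$ is an equilibrium of the reduced Hamiltonian system $k_{\mu_e}$ by Proposition \ref{Prop:SingP}, so non-degeneracy of $\omega_{\mu_e}$ forces $(dk_{\mu_e,t})_{[z_e]}=0$. Taylor's formula with Lagrange remainder in the given chart then gives
$$H_{z_e}(t,[y])=x^{\top}M(t)\,x+R_3(t,[y]),\qquad R_3(t,[y])=\!\!\sum_{|\alpha|=3}\tfrac{1}{\alpha!}D^\alpha H_{z_e}(t,[\zeta])\,x^\alpha,$$
with $x:=(x_1([y]),\ldots,x_n([y]))$ and $[\zeta]$ on the coordinate segment from $[z_e]$ to $[y]$.

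Next, the spectral hypothesis $0<\lambda<\inf_{t\in I_{t^0}}\min\operatorname{spec} M(t)$ yields $x^{\top}M(t)x\geq \lambda\|x\|^2$, while the bound on $c$ together with $|x^\alpha|\leq \|x\|^3$ for $|\alpha|=3$ produces an estimate $|R_3(t,[y])|\leq C\|x\|^3$, with $C$ depending only on $c$ and $n$, valid for every $t\in I_{t^0}$ and every $[y]$ inside a compact coordinate neighbourhood contained in $\mathcal{B}$. Restricting to the coordinate ball $\|x\|\leq \lambda/(2C)$ gives $H_{z_e}(t,[y])\geq (\lambda/2)\|x\|^2$. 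Since the topologies induced by $d$ and by $\|x\|$ on a coordinate chart coincide, I can find constants $0<a\leq b$ and an open neighbourhood $\mathcal{U}$ of $[z_e]$ with $a\|x\|\leq d([y],[z_e])\leq b\|x\|$ for every $[y]\in\mathcal{U}$. Then $\alpha(r):=(\lambda/(2b^2))\,r^2$ verifies Definition \ref{lpdf}, so $H_{z_e}$ is lpdf from $t^0$. For the decrescent part, the additional hypothesis $\max\operatorname{spec}M(t)<\Lambda$ gives $x^{\top}M(t)x\leq \Lambda\|x\|^2$ and hence $H_{z_e}(t,[y])\leq (\Lambda+Cr_0)\|x\|^2$ on a coordinate ball of radius $r_0$; setting $\beta(r):=((\Lambda+Cr_0)/a^2)\,r^2$ supplies the bound required by Definition \ref{DecrescentFunction}.

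The main obstacle is to keep all these estimates genuinely uniform in $t\in I_{t^0}$. The hypotheses have been tailored for this: the spectral bounds on $M(t)$ are uniform, and the assumption on $c$ controls the cubic remainder uniformly on $\mathcal{B}$. What must be tracked carefully is that $\mathcal{U}$ is chosen small enough that the coordinate segment used in Taylor's theorem lies inside $\mathcal{B}$, and that the comparison constants $a,b$ between $\|x\|$ and $d$, which depend only on the time-independent Riemannian metric and the chosen chart, can be taken valid on all of $\mathcal{U}$. Both requirements are routine once the order of quantifiers is fixed, but the uniformity in $t$ is the entire content of the lemma and should not be lost in the bookkeeping.
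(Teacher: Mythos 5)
Your proposal is correct and follows essentially the same route as the paper: expand $H_{z_e}(t,\cdot)$ to second order at the critical point $[z_e]$, bound the quadratic form below (resp. above) by $\lambda\|x\|^2$ (resp. $\Lambda\|x\|^2$) uniformly in $t$ via the spectral hypotheses, and use the uniform bound $|D^\alpha H_{z_e}|\le 6c$ on $\mathcal{B}$ to dominate the cubic remainder by a fraction of the quadratic term on a sufficiently small coordinate ball. Your absorption of the remainder through the Lagrange form and your explicit two-sided comparison $a\|x\|\le d([y],[z_e])\le b\|x\|$ are, if anything, slightly tighter than the paper's treatment (which handles the remainder via an intersection of sign-dependent neighbourhoods and leaves the passage from the coordinate norm to the Riemannian distance implicit), but the underlying argument is the same.
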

\begin{proof}
Since $z_e$ is a relative equilibrium point of $(M,\omega,h,{\bf J},\Phi)$, then $H_{z_e}(t,\cdot)$ has a critical point at $[z_e]$ for every $t\in \mathbb{R}$.
By the Taylor expansion of $H_{z_e}(t,\cdot)$ around $[z_e]$ and the fact that $z_e$ is a relative equilibrium point of each $h_{z_e}(t,\cdot)$ for $t\in \mathbb{R}$, one has that
\[
H_{z_e}(t,[z])=\frac{1}{2}\sum_{i,j=1}^n\frac{\partial^2H_{z_e}}{\partial x_i\partial x_j}(t,[z_e])x_ix_j+R_t([z]),\qquad [z]\in U,\quad t\in \mathbb{R}, 
\]
where $R_t([z])$ reads for the third-order remainder function for $H_{z_e}(t,[z])$ at a fixed $t\in \mathbb{R}$ around $[z_e]$. It is immediate that the coefficients of the quadratic part of the Taylor expansion match the matrix $M(t)$ in the coordinates $\{x_1,\ldots,x_n\}$. Since $M(t)$ is symmetric, it can be  diagonalised via an orthogonal transformation $O_t$ for each $t\in \mathbb{R}$. Let $\lambda_1(t),\ldots,\lambda_n(t)$ be the (possibly repeated) $n$ eigenvalues of $M(t)$ and let ${\bf w}=(w_1,\ldots,w_n)^T$ be the coordinate vector corresponding to ${\bf z}=(x_1,\ldots,x_n)^T$ in the diagonalising basis induced by $O_t$. In order words, ${\bf w}=O_t{\bf z}$. Although it is not necessary for our purposes, note that $O_t$ can be obtained by finding an orthogonal basis for each eigenvalue space for $M(t)$ and each time $t$. Then, ${\bf z}^TM(t){\bf z}={\bf w}^TD(t){\bf w}$, where $D(t)={\rm diag}(\lambda_1(t),\ldots,\lambda_n(t))$. Thus, ${\bf w}^TD(t){\bf w}=\sum_{i=1}^n\lambda_i(t)w_i^2$. Then,
\[
\frac{1}{2}\sum_{i,j=1}^n\frac{\partial^2H_{z_e}}{\partial x_i\partial x_j}(t,[z_e])x_ix_j= {\bf z}^TM(t){\bf z}={\bf w}^TD(t){\bf w}\geq \lambda (t)\|{\bf w}\|^2,
\]
where $\lambda(t):={\rm min}_{i=1,\ldots,n}\lambda_i(t)$ for each $t\in \mathbb{R}$. By our assumption on the existence of $\lambda>0$ and since $O_t$ is orthogonal, one gets that
\[
\frac{1}{2}\sum_{i,j=1}^n\frac{\partial^2H_{z_e}}{\partial x_i\partial x_j}(t,[z_e])x_ix_j\geq  \lambda (t)\|{\bf z}\|^2\geq  \lambda \|{\bf z}\|^2
\]
on $U$. Recall that the third-order Taylor remainder $R_t([z])$ around $[z_e]$ can be written as
\[
R_t([z])=\sum_{|\beta|=3}B_\beta(t,[z]){\bf z}^\beta,\qquad {\bf z}^\beta:=x_1^{\beta_1}\cdot\ldots\cdot x_n^{\beta_n},
\]
on points $[z]$ of the open coordinate subset $U$, $t\in \mathbb{R}$, and for certain functions $B_\beta:\mathbb{R}\times U\rightarrow \mathbb{R}$. Note that $R_t([z])$ is not a third-order polynomial due to the fact that the functions $B_\beta(t,[z])$ depend on the coordinates of $[z]$. Moreover, $R_t([z])$ can be bounded by a third-order polynomial in the coordinates of $[z]$ for each fixed time $t$ on an open $O_t$, but the open subsets $U_t$ and the coefficients of the polynomials used to bound $R_t([z])$ depend on $t$. This leads to problems since, for instance, to bound $R_t([z])$ for every $t\in \mathbb{R}$, one  will have to restrict to $\bigcap_{t\in I_{t^0}} U_t$, which may give rise to a single point.  Let us use then another, more appropriate but complicated, approach to bound all $R_t([z])$ for $t\in I_{t^0}$. 

The  $B_\beta$ are known to be bounded by 
\[
|B_\beta(t,[z])|\leq \frac{1}{3!}\max_{|\alpha|=3}\max_{y\in \mathcal{C}}|D^\alpha H_{z_e}(t,[y])|,\qquad \forall [z]\in \mathcal{C}
\]
on any compact neighbourhood $\mathcal{C}$ of $[z_e]$ for each $t\in\mathbb{R}$. By our assumptions, there exists a constant $c>0$ satisfying
\begin{equation}\label{Eq:Con0}
c\geq \frac{1}{3!}\max_{|\alpha|=3}\max_{y\in \mathcal{B}}|D^\alpha H_{z_e}(t,[y])|,\qquad \forall t\in I_{t^0},
\end{equation}
for some compact neighbourhood $\mathcal{B}$ of $[z_e]$. Let us prove that  
\[
\frac{1}{2}\sum_{i,j=1}^n\frac{\partial^2H_{z_e}}{\partial x_i\partial x_j}(t,[z_e])x_ix_j+R_t([z])-\frac 12\lambda\|{\bf z}\|^2
\]
is bigger or equal to zero for every $t\in  I_{t^0}$ and every $[z]\in \mathcal{U}$ for a certain open neighbourhood $\mathcal{U}$ of $[z_e]$. By our general assumptions, $\lambda< \inf_{t\in I_{t^0}} \lambda(t)$. Note that $\lambda_i(t)-\lambda\geq \lambda(t)-\lambda$ and $\lambda(t)-\lambda$ is larger than a certain properly chosen $\lambda'>0$ and every $t\in I_{t^0}$. Then,
\[
\frac{1}{2}\sum_{i,j=1}^n\frac{\partial^2H_{z_e}}{\partial x_i\partial x_j}(t,[z_e])x_ix_j-\lambda\|{\bf z}\|^2= {\bf w}^T{\rm diag}(\lambda_1(t)-\lambda,\ldots,\lambda_n(t)-\lambda){\bf w}\geq  \lambda'\|{\bf w}\|^2= \lambda'\|{\bf z}\|^2.
\]
Then, the first bracket in the following expression 
\[
\left(\frac{1}{2}\sum_{i,j=1}^n\frac{\partial^2H_{z_e}}{\partial x_i\partial x_j}(t,[z_e])x_ix_j- \lambda\|{\bf z}\|^2- \lambda'\|{\bf z}\|^2\right)+\left( \lambda'\|{\bf z}\|^2+R_t([z])\right).
\]
is larger or equal to zero on $U$. Let us prove the same for the second bracket on a neighbourhood of $[z_e]$. Note that
\[
|R_t([z])|\leq \sum_{|\beta|=3}|B_{\beta}(t,[z])||x_1|^{\beta_1}\cdot\ldots\cdot |x_n|^{\beta_n}\leq c\sum_{|\beta|=3}|x_1|^{\beta_1}\cdot\ldots\cdot|x_n|^{\beta_n},\qquad \forall t\in I_{t^0}
\]
on $\mathcal{B}$. The function
\[
 \lambda'\|{\bf z}\|^2-c\sum_{|\beta|=3}\lambda_\beta {\bf z}^\beta,
\]
where the $\{\lambda_\beta\}$ is any set of constants such that $\lambda_\beta\in \{\pm 1\}$ for every multi-index $\beta$ with $|\beta|=3$,
admits a minimum at $[z_e]$ as follows from standard differential calculus arguments. As a consequence, the above function is bigger or equal to zero on a neighbourhood $U_{\{\lambda_\beta\}}$ of zero. Considering the intersection of all the possible open subsets $U_{\{\lambda_\beta\}}$ for every set of constants $\lambda_\beta$, we obtain an open neighbourhood $\mathcal{U}$ of $[z_e]$. Assume that $[z]\in U$ is such that
\[
0> \lambda'\|{\bf z}\|^2-c\sum_{|\beta|=3}|x_1|^{\beta_1}\cdot \ldots\cdot |x_n|^{\beta_n} 
\]
Then,
\[
0> \lambda'\|{\bf z}\|^2-c\sum_{|\beta|=3}{\rm sgn}\left(\prod_{i=1}^nx^{\beta_i}_i\right) {\bf z}^\beta,
\]
where ${\rm sgn}(a)$ is the sign of the constant $a$. Then, $[z]$ cannot belong to $\mathcal{U}$. In other words, 
\begin{equation}\label{UpperBound}
 \lambda'\|{\bf z}\|^2-c\sum_{|\beta|=3}|x_1|^{\beta_1}\cdot \ldots\cdot |x_n|^{\beta_n} \geq 0
\end{equation}
on $\mathcal{U}$. Since $|R_t([z])|\leq c\sum_{|\beta|=3}|x_1|^{\beta_1}\cdot \ldots\cdot |x_n|^{\beta_n}$ on $\mathcal{U}$ and $t\in I_{t^0}$, then
\[
 \lambda'\|{\bf z}\|^2+R_t([z])\geq 0
\]
for every $[z]\in \mathcal{U}$ and $t\in I_{t^0}$. Finally, one gets that  
\[
H_{z_e}(t,[z])\geq \lambda \|{\bf z}\|^2,\qquad \forall [z]\in \mathcal{U},\qquad \forall t\in I_{t^0}. 
\]
Hence, the restriction of $H_{z_e}:\mathbb{R}\times P_{\mu_e}\rightarrow \mathbb{R}$ to $I_{t^0}\times \mathcal{U}$ is a lpdf function.

Now, the orthogonal change of variables $O_t$ allows us to write
\[
\frac{1}{2}\sum_{i,j=1}^n\frac{\partial^2H_{z_e}}{\partial x_i\partial x_j}(t,[z_e])x_ix_j={\bf z}^TM(t){\bf z}= {\bf w}^TD(t){\bf w}\leq   \Lambda (t)\|{\bf w}\|^2= \Lambda(t)\|{\bf z}\|^2,
\]
for $\Lambda(t):={\rm max}_{i=1,\ldots,n}\lambda_i(t)$ and every $t\in \mathbb{R}$. By assumption, $\Lambda> \Lambda(t)$ for every $t\in I_{t^0}$. Hence,
\[
\frac{1}{2}\sum_{i,j=1}^n\frac{\partial^2H_{z_e}}{\partial x_i\partial x_j}(t,[z_e])x_ix_j\leq  \Lambda\|{\bf z}\|^2,\qquad \forall t\in I_{t^0}.
\]
Recall the expression (\ref{UpperBound}) for every $t\in I_{t^0}$ and $[z]\in \mathcal{U}$.  Then, one has that
\[
H_{z_e}(t,[z])\leq  \Lambda \|{\bf z}\|^2+ \lambda'\|{\bf z}\|^2
\]
and $H_{z_e}$ is decrescent on $I_{t^0}\times \mathcal{U}$.
\end{proof}

It is worth noting that the eigenvalues of $M(t)$ depend on the chosen coordinate system around $[z_e]$. Choosing an appropriate coordinate system, one may simplify $M(t)$ at certain values of $t$ by writing $M(t)$ in a canonical form. Nevertheless, the simplification of $M(t)$ at every time $t\in I_{t^0}$ for a certain coordinate system around $[z_e]$ will be evidently impossible in most cases. One may still use $t$-dependent changes of variables to simplify $M(t)$ at every $t$ simultaneously, but finding such a $t$-dependent coordinate system may be difficult and it may be incompatible with the symplectic formalism, which concerns only time-independent changes of variables. We therefore restrict ourselves to determining a condition on a particular coordinate system.

By using the above lemma, we obtain the immediate theorem.

\begin{theorem}\label{Th:StabilityCon} Let assume that there exist $\lambda,c>0$ and an open neighbourhood $U$ of $[z_e]$ so that
\[
\lambda< {\rm min}({\rm spec}(M(t))),\qquad c\geq \frac{1}{3!}\max_{ |\alpha|=3}\sup_{[x]\in U}|D^\alpha H_{z_e}(t,[x])|,\qquad
\frac{\partial H_{z_e}}{\partial t}\bigg|_{U}\leq 0,
\]
for every $t\in I_{t^0}$, then $[z_e]$ is a stable point of the Hamiltonian system $k_{\mu_e}$ on ${\bf J}^{-1}(\mu_e)/G_{\mu_e}$ from $t^0$. If there exists $\Lambda$ such that $\max ({\rm spec}(M(t)))<\Lambda$ for every $t\in I_{t^0}$, then $[z_e]$ is uniformly  stable from $t^0$.
\end{theorem}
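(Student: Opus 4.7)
The plan is to reduce the statement directly to the basic manifold Lyapunov theorem (Theorem \ref{Th:BasicTheoremOfLyapunov}) using $H_{z_e}$ as the Lyapunov candidate and invoking Lemma \ref{Lemm:MegaLemma} to extract the positivity and (in the second part) decrescence properties. Since $h$ is smooth and $G$-invariant, $H_{z_e}$ is a well-defined $C^1$ function on $\mathbb{R}\times P_{\mu_e}$, vanishing at $(t,[z_e])$ for all $t\in I_{t^0}$, so it is a legitimate candidate. The whole strategy is to package the three hypotheses so they feed straight into parts 1 and 2 of Theorem \ref{Th:BasicTheoremOfLyapunov}.

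First I would verify the lpdf property. The hypotheses $\lambda<\min\operatorname{spec}(M(t))$ and $c\geq \tfrac{1}{3!}\max_{|\alpha|=3}\sup_{[x]\in U}|D^\alpha H_{z_e}(t,[x])|$ for all $t\in I_{t^0}$ are exactly the assumptions of Lemma \ref{Lemm:MegaLemma}. Applying that lemma gives an open neighbourhood $\mathcal{U}\subset U$ of $[z_e]$ and a continuous, strictly increasing $\alpha:\bar{\mathbb{R}}\to\mathbb{R}$ with $\alpha(0)=0$ (e.g.\ $\alpha(r)=\lambda r^2$ after using the local equivalence of the Riemannian distance and the coordinate norm near $[z_e]$, as discussed in Section 2) such that $H_{z_e}(t,[z])\geq \alpha(d([z],[z_e]))$ for all $(t,[z])\in I_{t^0}\times\mathcal{U}$, confirming that $H_{z_e}$ is lpdf from $t^0$.

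Next I would compute the Lyapunov derivative $\dot{H}_{z_e}$ along the flow of the reduced Hamiltonian $k_{\mu_e}$. The computation immediately preceding Lemma \ref{Lemm:MegaLemma} already shows, via $\{h_t,h_t\}=0$ and $G$-invariance of $h_{z_e}$, that
\[
\dot{H}_{z_e}(t,[z])=\frac{\partial H_{z_e}}{\partial t}(t,[z]),\qquad \forall (t,[z])\in I_{t^0}\times P_{\mu_e}.
\]
The hypothesis $\partial H_{z_e}/\partial t|_U\leq 0$ therefore yields $\dot{H}_{z_e}(t,[z])\leq 0$ for all $t\in I_{t^0}$ and $[z]\in \mathcal{U}$. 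Combined with the lpdf property from the previous step, part 1 of Theorem \ref{Th:BasicTheoremOfLyapunov} applies and gives that $[z_e]$ is a stable equilibrium point of $k_{\mu_e}$ from $t^0$.

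For the uniform stability statement, the only additional ingredient is the upper bound $\max\operatorname{spec}(M(t))<\Lambda$ for $t\in I_{t^0}$, which is precisely the extra hypothesis in Lemma \ref{Lemm:MegaLemma} ensuring that $H_{z_e}$ is also decrescent from $t^0$ on $I_{t^0}\times\mathcal{U}$. With $H_{z_e}$ lpdf, decrescent, $C^1$, and $\dot{H}_{z_e}\leq 0$ on $I_{t^0}\times\mathcal{U}$, part 2 of Theorem \ref{Th:BasicTheoremOfLyapunov} directly gives uniform stability of $[z_e]$. The only minor point requiring care—and the closest thing to an obstacle—is the passage between the coordinate norm used in Lemma \ref{Lemm:MegaLemma} and the Riemannian distance appearing in Definitions \ref{lpdf} and \ref{DecrescentFunction}; this is resolved by the local topological equivalence between these two on a compact neighbourhood of $[z_e]$, which was established in Section 2 and allows the bounding functions $\alpha,\beta$ to be taken as suitable monomials in $d([z],[z_e])$.
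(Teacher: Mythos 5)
Your proposal is correct and follows essentially the same route as the paper: invoke Lemma \ref{Lemm:MegaLemma} to obtain the lpdf (and, under the extra hypothesis, decrescent) property of $H_{z_e}$, use the computation $\dot{H}_{z_e}=\partial H_{z_e}/\partial t\leq 0$ established before the lemma, and then apply points 1 and 2 of Theorem \ref{Th:BasicTheoremOfLyapunov}. Your additional remark on reconciling the coordinate norm with the Riemannian distance is a sensible elaboration of a point the paper leaves implicit, but it does not change the argument.
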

\begin{proof}  
By Lemma \ref{Lemm:MegaLemma} and our given assumptions, $H_{z_e}(t,[z])$ is a locally positive definite $C^1$-function. By Theorem \ref{Th:BasicTheoremOfLyapunov} (point 1.) and  $\partial H_{z_e}/\partial t\leq 0$, we obtain that $[z_e]$ is stable from $t^0$. If additionally $\Lambda$ exists, then again Theorem \ref{Th:BasicTheoremOfLyapunov} (point 2.) shows that $[z_e]$ is uniformly stable from $t^0$. 
\end{proof}

For geometrical reasons, let us consider the following corollary, which follows from Theorem \ref{Th:StabilityCon} by assuming an stronger condition on the derivatives of $H_{z_r}$. More specifically,  Corollary \ref{Cor::StablePoint} provides conditions that can be proved to hold independently of the chosen coordinate system.

\begin{corollary}\label{Cor::StablePoint}
If there exist $\lambda,c>0$ and an open neighbourhood $U$ of $[z_e]$ such that
\[
\lambda <\min\left({\rm spec}\left(M(t)\right)\right),\qquad c\geq \frac{1}{3!}\max_{1\leq |\alpha|\leq 3}\sup_{[x]\in U}\left|D^\alpha H_{z_e}(t,[x])\right|,\qquad \frac{\partial H_{z_e}}{\partial t}\bigg|_{U}\leq 0,
\]
for every $t\in I_{t^0}$, then $[z_e]$ is a uniformly  stable point of the Hamiltonian system $k_{\mu_e}$ on $\mathbf{J}^{-1}(\mu_e)/G_{\mu_e}$ from $t^0$.
\end{corollary}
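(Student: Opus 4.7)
The plan is to derive Corollary \ref{Cor::StablePoint} from Theorem \ref{Th:StabilityCon} by showing that the extra content of the strengthened hypothesis, which bounds $D^\alpha H_{z_e}$ for every multi-index with $1\le|\alpha|\le 3$ rather than only $|\alpha|=3$, automatically supplies the one ingredient missing from the first half of Theorem \ref{Th:StabilityCon}, namely a uniform upper bound $\Lambda$ on $\max({\rm spec}(M(t)))$ for $t\in I_{t^0}$.

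First, I would observe that the restriction of the assumption to multi-indices with $|\alpha|=3$ is exactly the third-order hypothesis of Theorem \ref{Th:StabilityCon}; combined with the lower spectral bound on $M(t)$ by $\lambda$ and the sign condition $\partial H_{z_e}/\partial t|_U\le 0$, this immediately yields via Theorem \ref{Th:StabilityCon} that $[z_e]$ is stable from $t^0$. Thus, only the uniformity needs to be gained.

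To produce $\Lambda$, I would extract from the $|\alpha|=2$ case of the strengthened hypothesis the pointwise bound
\[
\left|\frac{\partial^2H_{z_e}}{\partial x_i\partial x_j}(t,[x])\right|\le 6c,\qquad \forall [x]\in U,\quad \forall t\in I_{t^0}.
\]
Since $[z_e]\in U$, evaluation there gives $|[M(t)]_i^j|\le 3c$ uniformly in $t\in I_{t^0}$, and because $M(t)$ is a real symmetric $n\times n$ matrix, its spectral radius is dominated by its maximal row-sum norm, whence $\max({\rm spec}(M(t)))\le 3nc$ for every $t\in I_{t^0}$. Choosing $\Lambda:=3nc+1$, the second clause of Theorem \ref{Th:StabilityCon} applies and delivers the required uniform stability of $[z_e]$ as a point of $k_{\mu_e}$ on $\mathbf{J}^{-1}(\mu_e)/G_{\mu_e}$.

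There is no real analytic obstacle, since the hard work has already been done in Lemma \ref{Lemm:MegaLemma} and Theorem \ref{Th:StabilityCon}; the only point worth verifying is the passage from a bound on second derivatives to a bound on the spectral radius, which is a standard matrix inequality. The conceptual content of the corollary, as the text preceding it stresses, is that a simultaneous bound on \emph{all} derivative orders from $1$ to $3$ transforms, by the multivariate chain rule, into a bound of the same qualitative type under any smooth change of chart (with altered constants), whereas an isolated third-order bound would not. This is why the hypothesis is phrased with $1\le|\alpha|\le 3$: under this wording the assumptions become essentially chart-independent, which is the geometric improvement over Theorem \ref{Th:StabilityCon} advertised in the statement.
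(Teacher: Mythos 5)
Your proposal is correct and follows essentially the same route as the paper: the paper's own justification (given in the remark immediately after the corollary) likewise extracts from the $|\alpha|=2$ part of the hypothesis a uniform bound on the entries of $M(t)$, deduces an upper bound $\Lambda$ on $\max({\rm spec}(M(t)))$ (the paper via the estimate $v^TM(t)v\leq 6cn^2\|v\|^2$, you via the equivalent Gershgorin/row-sum bound), and then invokes the second clause of Theorem \ref{Th:StabilityCon}. Your closing remarks on chart-independence also match the paper's subsequent discussion.
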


Note that the existence of $c$ in Corollary \ref{Cor::StablePoint} implies that $\max({\rm spec}(M(t)))$ for every $t\in I_{t^0}$ is bounded from above:

\[
v^TM(t)v\leq \sum_{i,j=1}^n|v_iv_jM^j_i(t)|\leq \sum_{i,j=1}^n|v_i||v_j||M^i_j(t)|\leq 6c \sum_{i,j=1}^n\|v\|^2=6cn^2 \|v\|^2,\qquad \forall v\in \mathbb{R}^n.
\]
Hence, $v^TM(t)v<\Lambda v^Tv$ for $v\in \mathbb{R}^n\backslash\{0\}$ and $\Lambda> 6cn^2$.

Note that previous results use a distance on an open coordinate neighbourhood of $[z_e]$ induced by a standard norm in $\mathbb{R}^n$. As the topology induced by this norm on the open neighbourhood of $[z_e]$ is the same as the one induced by any other one Riemannian metric,  our results concerning the stability of $[z_e]$ are independent of the used metric.

Let us now prove that the conditions contained in Corollary \ref{Cor::StablePoint} have a geometric meaning: if they hold in a coordinate system on a neighbourhood of $[z_e]$, then they hold in any other such a coordinate system for, eventually, other values of the constants $\lambda,c$. Moreover,  our comments partially apply to Theorem \ref{Th:StabilityCon}. First, the condition in the time-derivative of $H_{z_e}$ is defined in a coordinate-free manner relative to coordinates on $P_{\mu_e}$. Meanwhile, the remaining conditions in Corollary \ref{Cor::StablePoint} require a more detailed analysis.  
\begin{lemma}\label{Lem:Coorl} If the $t$-dependent matrix $M(t)$, which is defined in a local coordinate system $\{x_1,\ldots,x_n\}$ on an open neighbourhood of an equilibrium point $[z_e]\in P_{\mu_e}$, satisfies that $0< \lambda<{\rm inf}_{t\in I_{t^0}}\min{\rm spec}\,M(t) $  for some $\lambda$ (resp. ${\rm sup}_{t\in I_{t^0}}\max {\rm spec}\,M(t)< \Lambda$ for some $\Lambda$), then $M_{\mathcal{B}'}(t)$, defined as $M(t)$ but in  another coordinate system $\mathcal{B}':=\{\tilde{x}_1,\ldots,\tilde{x}_n\}$ on another neighbourhood in $P_{\mu_e}$ of $[z_e]$, satisfies that $0< \lambda'<{\rm inf}_{t\in I_{t^0}}\min {\rm spec}\,M_{\mathcal{B}'}(t)$  for some $\lambda'$  (resp. ${\rm sup}_{t\in I_{t^0}}\max{\rm spec}\,M_{\mathcal{B}'}(t)<\Lambda'$ for some $\Lambda'$).  
\end{lemma}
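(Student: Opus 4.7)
The plan is to exploit the fact, established in the proof of Lemma \ref{Lemm:MegaLemma}, that $[z_e]$ is a critical point of $H_{z_e}(t,\cdot)$ for every $t\in \mathbb{R}$. This implies that the matrix $M(t)$ represents, at $[z_e]$, the well-defined Hessian bilinear form on $T_{[z_e]}P_{\mu_e}$, and hence transforms as a $(0,2)$-tensor under coordinate changes. Concretely, if $\tilde{x}=\phi(x)$ is the transition map between the two coordinate systems around $[z_e]$ (with both coordinate origins at $[z_e]$) and $J^i_j:=(\partial \phi^i/\partial x^j)([z_e])$, the chain rule produces
\[
\frac{\partial^2 H_{z_e}}{\partial x^j\partial x^k}(t,[z_e])=\sum_{i,l}J^i_j J^l_k\,\frac{\partial^2 H_{z_e}}{\partial \tilde{x}^i\partial \tilde{x}^l}(t,[z_e])+\sum_i \frac{\partial H_{z_e}}{\partial \tilde{x}^i}(t,[z_e])\,\frac{\partial^2\phi^i}{\partial x^j\partial x^k}([z_e]),
\]
where the last sum vanishes by the critical-point condition $(dH_{z_e,t})_{[z_e]}=0$ guaranteed by Theorem \ref{RelativeEquilibrum}. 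In matrix form this reads $M(t)=J^T M_{\mathcal{B}'}(t)\,J$, or equivalently $M_{\mathcal{B}'}(t)=A^T M(t)\,A$ with $A:=J^{-1}$, for every $t\in \mathbb{R}$. The crucial point is that $A$ is a fixed invertible matrix independent of $t$.

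With this transformation law at hand, the spectral bounds follow from elementary linear algebra applied uniformly in $t$. For any nonzero $v\in\mathbb{R}^n$,
\[
v^T M_{\mathcal{B}'}(t)\,v=(Av)^T M(t)(Av)\geq \bigl(\min{\rm spec}\,M(t)\bigr)\|Av\|^2>\lambda\,\sigma_{\min}(A)^2\|v\|^2,\qquad \forall t\in I_{t^0},
\]
where $\sigma_{\min}(A)>0$ is the smallest singular value of $A$. By the Rayleigh-quotient characterisation, $\min{\rm spec}\,M_{\mathcal{B}'}(t)>\lambda\,\sigma_{\min}(A)^2$ for every $t\in I_{t^0}$, so any $\lambda'$ with $0<\lambda'<\lambda\,\sigma_{\min}(A)^2$ fulfils the hypothesis in the new coordinates. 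The upper-bound case is symmetric: from $(Av)^T M(t)(Av)\leq \Lambda\|A\|^2\|v\|^2$ one concludes $\max{\rm spec}\,M_{\mathcal{B}'}(t)\leq \Lambda\|A\|^2$ uniformly in $t\in I_{t^0}$, so any $\Lambda'>\Lambda\|A\|^2$ works.

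The only step that really requires care is the tensorial transformation law, since a generic coordinate change would produce an extra term coupling the second derivatives of the transition map to the gradient of $H_{z_e}(t,\cdot)$; this is precisely where the critical-point property of $[z_e]$ is used, and without it the lemma would fail. Once that identity is secured, the rest of the argument is purely linear-algebraic and the uniformity of the bounds in $t$ is automatic because the transformation is effected by a single, $t$-independent invertible matrix $A$.
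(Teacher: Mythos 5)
Your proof is correct and follows essentially the same route as the paper's: both use the critical-point property of $[z_e]$ to obtain the $t$-independent congruence $M_{\mathcal{B}'}(t)=A^TM(t)A$ and then bound $(Av)^T(Av)$ above and below by constant multiples of $v^Tv$ (the paper via compactness of the unit sphere, you via singular values, which is the same estimate). Your explicit chain-rule justification of the transformation law is a welcome elaboration of a step the paper merely asserts.
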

\begin{proof} Since every symmetric metric can be diagonalised orthogonally into a diagonal matrix, $D(t)$, via a $t$-dependent orthogonal matrix $O_t$, the condition for $M(t)$  amounts to the fact that 
\begin{equation}\label{con}
v^TM(t)v=v^TO^T_tD(t)O_tv> \lambda v^TO^T_tO_tv= \lambda v^Tv,\qquad \forall v\in \mathbb{R}^n\backslash\{0\},\qquad \forall t\in I_{t^0}.
\end{equation}
Let us consider $M_{\mathcal{B}'}(t)$. Since $[z_e]$ is an equilibrium point of $H_{z_e}$, there exists an invertible $n\times n$  time-independent matrix $A$ such that
\[
M_{\mathcal{B}'}(t)=A^TM(t)A,\qquad \forall t\in \mathbb{R}.
\]
Then,
\[
v^TM_{\mathcal{B}'}(t)v=(Av)^TM(t)Av>  \lambda(Av)^TAv,\qquad \forall v\in \mathbb{R}^n\backslash\{0\},\qquad  \forall t\in I_{t^0}.
\]

Since $A$ is invertible, the positive function $f:v\in S^{n-1}\mapsto  (Av)^T(Av)\in \mathbb{R}$ on the ball $S^{n-1}=\{v\in \mathbb{R}^n:||v||:=\sqrt{v^Tv}=1\}$, which is compact, reaches a maximum and a minimum $M_S,m_S>0$, respectively. Then, $(Av)^T(Av)\geq  m_S v^Tv$ for every $v\in \mathbb{R}^n$. Thus,
\[
v^TM_{\mathcal{B}'}(t)v> \lambda m_Sv^Tv,\qquad \forall v\in \mathbb{R}^n\backslash\{0\},\qquad  \forall t\in I_{t^0}.
\]
Similarly, we have that $(Av)^T(Av)\leq M_Sv^Tv$ for every $v\in\mathbb{R}^n$. Then, the existence of $\Lambda$ leads to
\[
v^TM_{\mathcal{B}'}(t)v<\Lambda M_Sv^T v,\qquad \forall{v}\in \mathbb{R}^n\backslash\{0\}, \qquad\forall t\in I_{t^0}.
\]
Choosing $\lambda'=\lambda m_S$ and $\Lambda'=\Lambda M_S$, the lemma follows.
\end{proof}

Note the that the condition for $c$ in Corollary \ref{Cor::StablePoint} is independent of the chosen coordinate system. More specifically, the condition on a new coordinate system also holds by choosing a new $c'$ and restricting to a new open subset of $[z_e]$ where the previous condition and the new coordinate systems are defined.

\section{Stability, reduced space, and relative equilibrium points}

The main idea of the energy-momentum method is to determine some properties of $h$ on a neighbourhood of a relative equilibrium point $z_e$ in $P$ that ensure a certain type of stability at an equilibrium point of $k_{\mu_e}$  in $P_{\mu_e}$. In particular, we hereafter   give conditions on the functions $h^t_{\mu_e}:z\in   {\bf J}^{-1}(\mu_e)\mapsto h(t,z)\in \mathbb{R}$, and $\partial h^t_{\mu_e}/\partial t$ with $t\in I_{t^0}$, to ensure that the conditions in Theorem \ref{Th:StabilityCon} and/or Corollary \ref{Cor::StablePoint} hold. Instead of inspecting  $M(t)$,  we will search for conditions on the functions $h_{\xi,t}$ for $t\in I_{t^0}$, which is more practical as the latter are not defined on the quotient of a submanifold of $P$ and therefore are available without making additional computations. Note that the ideas used to prove Proposition \ref{Stwi} and Corollary \ref{2varvanish} below are a  generalisation of the $t$-independent formulation of the energy-momentum method in \cite{MS88}. Before we proceed to the following proposition, let us define $(\delta^2f)(X,Y):=\iota_Yd(\iota_Xdf)$ for every $X,Y\in \mathfrak{X}(P)$ and $f\in C^\infty(P)$. If $f$ is such that $df_p=0$ for a certain $p\in P$, then  $[\delta^2f(X,Y)](p)$ depends only on the values of $X,Y$ at $p$, which gives rise to a bilinear map on $T_pP$ of the form
\[
(\delta^2f)_p(v,w):=(\iota_Yd(\iota_Xdf))(p),\qquad \forall v,w\in T_pP,
\]
for $X,Y\in \mathfrak{X}(P)$ such that $X(p)=v$ and $Y(p)=w$. Moreover, $(\delta^2f)_p$ becomes then symmetric.

\begin{proposition}\label{Stwi}
Let $z_{e}\in P$ be a relative equilibrium point for $(P,\omega,h,\Phi,\mathbf{J})$. Then, 
\begin{equation}\label{2var}
(\delta^{2}h_{\xi,t})_{z_{e}}((\eta_{P})_{z_{e}}, v_{z_e})=0, \quad\forall\eta\in \mathfrak{g},\quad \forall v_{z_e}\in T_{z_e}\mathbf{J}^{-1}(\mu_e),\quad\forall t\in \mathbb{R}.
\end{equation}
\end{proposition}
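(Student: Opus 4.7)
My plan is to reduce the statement to a computation of $\delta^{2}h_{\xi,t}$ on specific extensions of the tangent vectors, and then exploit the definition of the momentum map together with the orthogonality relation $T_{z_e}\mathbf{J}^{-1}(\mu_e)=(T_{z_e}Gz_e)^{\perp_\omega}$ given by Proposition \ref{2.2}.

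First, by Theorem \ref{RelativeEquilibrum}, $z_e$ is a critical point of $h_{\xi,t}$ for each fixed $t\in\mathbb{R}$, so $(dh_{\xi,t})_{z_e}=0$. Hence the remark preceding the proposition applies and $(\delta^{2}h_{\xi,t})_{z_e}$ is a well-defined symmetric bilinear form on $T_{z_e}P$ whose value on a pair of tangent vectors can be computed from any vector field extensions. Accordingly, I take $X:=\eta_P$ (the fundamental vector field of $\eta\in\mathfrak{g}$, which already evaluates to $(\eta_P)_{z_e}$ at $z_e$) and $Y$ to be an arbitrary vector field on $P$ with $Y_{z_e}=v_{z_e}$, and unwind the formula $\delta^{2}h_{\xi,t}(X,Y)=\iota_{Y}d(\iota_{X}dh_{\xi,t})=Y\bigl(X(h_{\xi,t})\bigr)$.

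Next I compute the inner contraction $X(h_{\xi,t})=\eta_P(h_t)-\eta_P(J_{\xi(t)})+\eta_P\langle\mu_e,\xi(t)\rangle$ as a function on $P$. The first term vanishes because $h_t$ is $G$-invariant, and the last term vanishes because $\langle\mu_e,\xi(t)\rangle$ is a constant (in $p$) at each fixed $t$. For the middle term, the Lie algebra morphism property derived in the paper, namely $\xi_{P}J_{\nu}=J_{[\nu,\xi]}$, gives $\eta_P(J_{\xi(t)})=J_{[\xi(t),\eta]}$. Therefore $X(h_{\xi,t})=-J_{[\xi(t),\eta]}$ as a function on $P$, and so
\[
(\delta^{2}h_{\xi,t})_{z_e}\bigl((\eta_P)_{z_e},v_{z_e}\bigr)=-\bigl(v_{z_e}\bigr)\bigl(J_{[\xi(t),\eta]}\bigr)=-\langle dJ_{[\xi(t),\eta]},v_{z_e}\rangle.
\]

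Finally, by the defining property of the equivariant momentum map, $dJ_{[\xi(t),\eta]}=\iota_{[\xi(t),\eta]_P}\omega$, so the right-hand side equals $-\omega_{z_e}\bigl([\xi(t),\eta]_P(z_e),v_{z_e}\bigr)$. Since $[\xi(t),\eta]_P(z_e)\in T_{z_e}(Gz_e)$ and, by Proposition \ref{2.2}, $v_{z_e}\in T_{z_e}\mathbf{J}^{-1}(\mu_e)=(T_{z_e}Gz_e)^{\perp_\omega}$, this symplectic pairing vanishes, which yields \eqref{2var}. The only delicate point is the bookkeeping for the $t$-dependence of $\xi(t)$ and $\mu_e$, but both appear purely as parameters (not differentiated by $X$ or $Y$), so they cause no difficulty; the substantive input of the proof is simply the combination of the $G$-invariance of $h$, the momentum-map bracket identity, and the symplectic orthogonality from Proposition \ref{2.2}.
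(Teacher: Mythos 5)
Your argument is correct and is essentially the paper's proof: both reduce the claim to the identity $\iota_{\eta_P}dh_{\xi,t}=J_{[\eta,\xi(t)]}$ (you obtain it infinitesimally from $G$-invariance and the bracket identity $\xi_PJ_\nu=J_{[\nu,\xi]}$, while the paper differentiates the group-level relation $h_{\xi,t}(gp)=h(t,p)-\langle\mathbf{J}(p),{\rm Ad}_{g^{-1}}\xi(t)\rangle+\langle\mu_e,\xi(t)\rangle$), and then kill the derivative of this function along $v_{z_e}$. The only cosmetic difference is in the last step, where the paper uses $v_{z_e}\in\ker T_{z_e}\mathbf{J}$ directly and you pass through $dJ_\nu=\iota_{\nu_P}\omega$ and the symplectic orthogonality $T_{z_e}\mathbf{J}^{-1}(\mu_e)=(T_{z_e}Gz_e)^{\perp_\omega}$, which Proposition \ref{2.2} shows to be the same condition.
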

\begin{proof}
The $G$-invariance of $h:\mathbb{R}\times P\rightarrow \mathbb{R}$ and  the equivariance condition for ${\bf J}$ yields 
\[
h_{\xi,t}(gp)=h(t,gp)-\langle\mathbf{J}(gp), \xi(t)\rangle + \langle\mu_{e}, \xi(t)\rangle=h(t,p)-\langle {\rm  Ad}_{g^{-1}}^{*}(\mathbf{J}(p)),\xi(t)\rangle+\langle\mu_{e}, \xi(t)\rangle\]
and  
\[h_{\xi,t}(gp)=h(t,p)-\langle \mathbf{J}(p),{\rm Ad}_{g^{-1}}(\xi(t))\rangle+\langle \mu_{e}, \xi(t)\rangle,
\]
for any $g\in G$ and $p\in P$. Substituting $g:=\exp(s\eta)$, with $\eta\in \mathfrak{g}$, and differentiating with respect to the parameter $s$, one obtains 
\[
(\iota_{\eta_P}d h_{\xi,t})(p)=-\left\langle \mathbf{J}(p),\frac{d}{ds}\bigg|_{s=0}{\rm Ad}_{\exp(-s\eta)}(\xi(t))\right\rangle=\langle \mathbf{J}(p),[\eta,\xi(t)]\rangle.
\]
Taking variations relative to $p\in P$ above, evaluating at $z_{e}$, and since $(d h_{\xi,t})_{z_{e}}=0$ because $z_e$ is a critical point of $h_{\xi,t}$, one has that \[(\delta^{2}h_{\xi,t})_{z_{e}}((\eta_{P})_{z_{e}},v_{z_e})=\langle T_{z_{e}}{\bf J}(v_{z_e}), [\eta, \xi(t)]\rangle,
\]
which vanishes if $T_{z_{e}}{\bf J}(v_{z_e})=0$, i.e. if $v_{z_e}\in \mathrm{ker}[T_{z_{e}}{\bf J}]=T_{z_{e}}{\bf J}^{-1}(\mu_{e})$.
\end{proof}
Propositions \ref{Stwi} and  \ref{2.2} yield the following.
\begin{corollary}\label{2varvanish}
   The mapping $(\delta^{2}h_{\xi,t})_{z_{e}}$ vanishes identically on $T_{z_e}(G_{\mu_e}z_e)$ for every $t\in \mathbb{R}$.
\end{corollary}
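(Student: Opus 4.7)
The plan is to combine Propositions \ref{Stwi} and \ref{2.2} in an essentially immediate way. Fix $t\in \mathbb{R}$ and take arbitrary tangent vectors $u,v\in T_{z_e}(G_{\mu_e}z_e)$. I would like to show that $(\delta^2 h_{\xi,t})_{z_e}(u,v)=0$, using only the hypothesis that $z_e$ is a relative equilibrium point (so that Proposition \ref{Stwi} is available, and, in particular, $(dh_{\xi,t})_{z_e}=0$, which makes $(\delta^2 h_{\xi,t})_{z_e}$ a well-defined symmetric bilinear form on $T_{z_e}P$).

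The key step is the observation, from Proposition \ref{2.2}(1), that
\[
T_{z_e}(G_{\mu_e}z_e)=T_{z_e}(G z_e)\cap T_{z_e}(\mathbf{J}^{-1}(\mu_e)).
\]
Both $u$ and $v$ therefore lie simultaneously in the orbit tangent space and in the kernel of $T_{z_e}\mathbf{J}$. In particular, since $u\in T_{z_e}(Gz_e)$, there exists $\eta\in \mathfrak{g}$ with $u=(\eta_P)_{z_e}$, while $v\in T_{z_e}(\mathbf{J}^{-1}(\mu_e))$. Feeding this pair into Proposition \ref{Stwi} yields
\[
(\delta^{2}h_{\xi,t})_{z_{e}}(u,v)=(\delta^{2}h_{\xi,t})_{z_{e}}((\eta_{P})_{z_{e}},v)=0,
\]
which, since $u$ and $v$ were arbitrary elements of $T_{z_e}(G_{\mu_e}z_e)$, is precisely the claim.

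There is essentially no obstacle in this argument; it is a direct syllogism from the two cited propositions. The only point worth being careful about is that Proposition \ref{Stwi}, as stated, has its first slot occupied by a fundamental vector field and its second by a vector in $T_{z_e}\mathbf{J}^{-1}(\mu_e)$. In our case both vectors satisfy both conditions simultaneously, so it does not matter in which slot we place them; in any event, symmetry of $(\delta^2 h_{\xi,t})_{z_e}$ (valid because $z_e$ is a critical point of $h_{\xi,t}$, as noted just before Proposition \ref{Stwi}) allows one to swap the arguments freely. This completes the sketch.
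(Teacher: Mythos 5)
Your argument is correct and coincides with the paper's own proof: both identify $T_{z_e}(G_{\mu_e}z_e)$ with $T_{z_e}(Gz_e)\cap\ker[T_{z_e}\mathbf{J}]$ via Proposition \ref{2.2} and then apply \eqref{2var} with one argument realised as a fundamental vector field and the other viewed as an element of $T_{z_e}\mathbf{J}^{-1}(\mu_e)$. No further comment is needed.
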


\begin{proof}
Proposition \ref{2.2} shows that $T_{z_{e}}(G_{\mu_{e}}  z_{e})=T_{z_{e}}(G  z_{e})\cap\mathrm{ker}[T_{z_{e}}{\bf J} ]$. Since $T_{z_{e}}(G_{\mu_{e}} z_{e})\subset T_{z_{e}}(G z_{e})$, the result follows from (\ref{2var}) by taking $v_{z_e}:=(\xi_{P})_{z_{e}}$, with $\xi\in \mathfrak{g}_{\mu_{e}}$.
\end{proof}

Recall that we assume that $G_{\mu_e}$ acts freely and properly on ${\bf J}^{-1}(\mu_e)$. Consider a set of coordinates $\{z_1,\ldots,z_q\}$ on an open $\mathcal{A}\subset {\bf J}^{-1}(\mu_e)$ containing $z_e$.  Let $\{\pi^*_{\mu_e}x_1,\ldots,\pi^*_{\mu_e}x_n\}$ be the coordinates on $\mathcal{A}$ given by the pullback to $\mathcal{A}$ of certain coordinates $\{x_1\ldots,x_n\}$ on $\mathcal{O}:=\pi_{\mu_e} (\mathcal{A})$ \footnote{To simplify the notation, we will write $\{x_1,\ldots,x_n\}$ for a set of coordinates on a certain neighbourhood of $[z_e]$ and their pull-backs to $\mathbf{J}^{-1}(\mu_e)$ via $\pi_{\mu_e}$.} and let $\{y_{1},\ldots,y_{s}\}$ be additional coordinates giving rise to a coordinate system $\{z_1,\ldots,z_q\}$ on $\mathcal{A}$. Due to the $G_{\mu_e}$-invariance of $h_{\mu_e}:=h\circ \iota_{\mu_e}:{\bf J}^{-1}(\mu_e)\rightarrow \mathbb{R}$, one has that there exists $c$ such that 
\[
c\geq \frac{1}{3!}\max_{3= |\vartheta|}\sup_{z\in \mathcal{A}}|D^\vartheta h_{\mu_e}(t,y)|,\qquad \forall t\in I_{t^0},
\]
where $\vartheta$ is a multi-index $\vartheta:=(\vartheta_1,\ldots,\vartheta_q)$, 
if and only if 
\begin{equation}\label{Rel2}
c\geq \frac{1}{3!}\max_{3= |\alpha|}\sup_{x\in \mathcal{O}}|D^\alpha H_{z_e}(t,x)|,\qquad \forall t\in I_{t^0},
\end{equation}
for $\mathcal{O}$, which is an open neighbourhood of $[z_e]$ because $\pi_{\mu_e}$ is an open mapping. Indeed, since $h_{\mu_e}$ is constant on the submanifolds where $x_1,\ldots,x_n$ take constant values,  $h_{\mu_e}(t,x_1,\ldots,x_n,y_1,\ldots,y_s)-h(t,z_e)=H_{z_e}(t,x_1,\ldots,x_n)$ and (\ref{Rel2}) follows. 

Consider again the coordinate system $\{z_1,\ldots,z_q\}$ on ${\bf J}^{-1}(\mu_e)$. We write $[\widehat{M}(t)]$ for the  $t$-dependent $q\times q$ matrix 
\[
[\widehat{M}(t)]_i^j:=\frac{\partial^2h_{\mu_e}}{\partial z_i\partial z_j}(t,z_e),\qquad i,j=1,\ldots,q.
\]

Lemma \ref{Lem:Coorl} tells us that,  geometrically, the existence of $\lambda$ and $\Lambda$ amounts to the fact that the $t$-dependent bilinear symmetric form $K(t):T_{[z_e]}P_{\mu_e}\times T_{[z_e]}P_{\mu_e}\rightarrow \mathbb{R}$ given by
\[
K(t)=\frac 12\sum_{i,j=1}^n\frac{\partial^2H_{z_e}}{\partial x_i\partial x_j}(t,[z_e])dx_i|_{[z_e]}\otimes dx_j |_{[z_e]}
\]
satisfies that
\begin{equation}\label{Eq:Comen}
K(t)(w,w)> \lambda (w|w)_{\mathcal{B}},\qquad \forall w\in T_{[z_e]}P_{\mu_e}\backslash\{0\},\quad\forall t\in I_{t^0},
\end{equation}
where $(\cdot |\cdot )_{\mathcal{B}}$ is the Euclidean product in $T_{[z_e]}P_{\mu_e}$ satisfying that $\{\partial_{x_1},\ldots,\partial _{x_n}\}$ is an orthonormal basis. In fact,  if $v$ is the column vector describing the coordinates of $w\in T_{[z_e]}P_{\mu_e}$ in the chosen orthonormal basis, then  (\ref{Eq:Comen}) can be rewritten as
\[
K(t)(w,w)=v^TM(t)v> \lambda v^Tv=\lambda (w|w)_{\mathcal{B}}, \qquad \forall w\in T_{[z_e]}P_{\mu_e}\backslash\{0\},\quad \forall t\in I_{t^0}.
\]
Note that, for any other inner product $(\cdot| \cdot)_{\mathcal{B}'}$ on $T_{[z_e]}P_{\mu_e}$, there exists $m_i,m_s>0$ such that $m_s(w|w)_{\mathcal{B}'}\geq (w|w)_{\mathcal{B}}\geq m_i(w|w)_{\mathcal{B}'}$ for all $w\in T_{[z_e]}P_{\mu_e}$. Hence,  if condition (\ref{Eq:Comen}) holds for an inner product in $T_{[z_e]}P_{\mu_e}$, it is also satisfied for any other inner product in $T_{[z_e]}P_{\mu_e}$ with another positive $\lambda$. A similar reasoning can be applied to the relation $\Lambda  (w|w)_\mathcal{B}> K(t)(w,w)$ for some $\Lambda>0$, for all $t\in I_{t^0}$ and every $w\in T_{[z_e]}P_{\mu_e}\backslash\{0\}$.

The reason to introduce the inner product $(\cdot|\cdot)_\mathcal{B}$ is theoretical and practical. To effectively determine whether the $t$-dependent matrix $M(t)$ has eigenvalues that can be bounded from below simultaneously for every time $t\in I_{t^0}$, we plan to use the eigenvalues of the matrix representation of $K(t)$ and $(\cdot|\cdot)_\mathcal{B}$, which are geometric objects. The fact that $(\cdot|\cdot)_{\mathcal{B}}$ may be chosen arbitrary simplifies to verify the condition.

Let us show how condition (\ref{Eq:Comen}) can be checked via an object defined on the space $\mathbf{J}^{-1}(\mu_e)$. Since $h_{\mu_e}$ has a critical point at each relative equilibrium point $z_e\in {\bf J}^{-1}(\mu_e)$, there exists a $t$-dependent bilinear symmetric function  $\widehat{M}(t):T_{z_e}{\bf J}^{-1}(\mu_e)\times T_{z_e}{\bf J}^{-1}(\mu_e)\rightarrow \mathbb{R}$ of the form
\[
\widehat{M}(t):=\frac 12 \sum_{i,j=1}^q\frac{\partial^2h_{\mu_e}}{\partial z_i\partial z_j}(t,z_e)dz_i|_{z_e}\otimes dz_j|_{z_e},\qquad \forall t\in I_{t^0},
\]
where $\mathcal{B}=\{z_1,\ldots,z_q\}$ is any coordinate system in an open neighbourhood of $z_e\in {\bf J}^{-1}(\mu_e)$.

Let us consider the coordinate system $\{x_1,\ldots,x_n,y_1,\ldots,y_s\}$ on the open neighbourhood $z_e$ in $\mathbf{J}^{-1}(\mu_e)$ defined above. In this coordinate system, we obtain
\[
\frac{\partial^2h_{\mu_e}}{\partial x_k\partial y_j}(t,z_e)=\frac{\partial^2h_{\mu_e}}{\partial y_i\partial y_j}(t,z_e)=0, \qquad i,j=1,\ldots,s,\qquad k=1,\ldots,n,\qquad \forall t\in \mathbb{R}.
\]
In the chosen coordinate system, one sees that $\pi_{\mu_e}^*K(t)=\widehat{M}(t)$ and $T_{z_e}(G_{\mu_e}z_e)\subset\ker \widehat{M}(t)$ for every $t\in \mathbb{R}$. The latter relation holds in any other coordinate system. Hence,  $K(t)$ can be considered as the induced bilinear form by $\widehat{M}(t)$ on  $S_{z_e}:=T_{z_e}{\bf J}^{-1}(\mu_e)/T_{z_e}(G_{\mu_e}z_e)\simeq T_{[z_e]}P_{\mu_e}$. Thus,  the conditions for $M(t)$  can be tested straightforwardly via an object in ${\bf J}^{-1}(\mu_e)$, namely $\widehat{M}(t)$. Note also that if $\ker \widehat M(t)$ has dimension bigger than $\dim T_{z_e}(G_{\mu_e}z_e)$, the conditions of Lemma \ref{Lemm:MegaLemma} do not hold.
Corollary \ref{Cor::StablePoint} and the previous remarks give rise to the following theorem.

\begin{theorem}\label{Th:StabilityCon2} Let us assume that there exist $\lambda,c>0$ and an open coordinate neighbourhood $\mathcal{A}\subset {\bf J}^{-1}(\mu_e)$ of $z_e$ so that
\begin{equation}\label{eq:Cond}
\lambda< {\rm min}({\rm spec}([\widehat{M}(t)]|_{S_{z_e}}),\quad c\geq \frac{1}{3!}\max_{1\leq |\vartheta|\leq 3}\sup_{y\in \mathcal{A}}|D^\vartheta h_{\mu_e}(t,y)|,\quad
\frac{\partial h_{\mu_e}}{\partial t}\bigg|_{\mathcal{A}}\leq 0,
\end{equation}
for every $t\in I_{t^0}$, then $[z_e]$ is a uniformly  stable point of the Hamiltonian system $k_{\mu_e}$ on ${\bf J}^{-1}(\mu_e)/G_{\mu_e}$ from $t^0$.

\end{theorem}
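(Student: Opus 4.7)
The approach is to deduce Theorem \ref{Th:StabilityCon2} from Corollary \ref{Cor::StablePoint} by translating each of the three hypotheses in \eqref{eq:Cond}, stated on $\mathcal{A}\subset{\bf J}^{-1}(\mu_e)$, into the analogous hypothesis on $U:=\pi_{\mu_e}(\mathcal{A})\subset P_{\mu_e}$. The main bridge is the adapted chart $\{x_1,\ldots,x_n,y_1,\ldots,y_s\}$ constructed just before the theorem, in which $x_1,\ldots,x_n$ are pull-backs of local coordinates around $[z_e]$ via $\pi_{\mu_e}$ and $y_1,\ldots,y_s$ parametrise the $G_{\mu_e}$-orbits; such a chart exists on a sufficiently small open subset of $\mathcal{A}$ because $G_{\mu_e}$ acts freely and properly on ${\bf J}^{-1}(\mu_e)$.

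By the $G_{\mu_e}$-invariance of $h_{\mu_e}$, one has $h_{\mu_e}(t,x,y)-h(t,z_e)=H_{z_e}(t,x)$ in this chart, so every partial of $h_{\mu_e}$ involving some $y_j$ vanishes identically. Consequently the Hessian $[\widehat{M}(t)]$ decomposes into the block form $\mathrm{diag}(M(t),0)$, and since $T_{z_e}(G_{\mu_e}z_e)$ is spanned by $\partial_{y_1},\ldots,\partial_{y_s}|_{z_e}$ by construction, the quotient $S_{z_e}\simeq T_{[z_e]}P_{\mu_e}$ carries $\partial_{x_1},\ldots,\partial_{x_n}$ as a basis and $[\widehat{M}(t)]|_{S_{z_e}}=M(t)$. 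Hence the first inequality in \eqref{eq:Cond} is exactly the spectral hypothesis of Corollary \ref{Cor::StablePoint}, and by Lemma \ref{Lem:Coorl} its validity is independent of the coordinate system chosen on $P_{\mu_e}$. For the third-order bound, any nonzero partial $D^\vartheta h_{\mu_e}$ with $|\vartheta|\leq 3$ reduces to a pure $x$-derivative $D^\alpha H_{z_e}$, so the supremum appearing in \eqref{eq:Cond} dominates the supremum demanded by Corollary \ref{Cor::StablePoint} with the same constant $c$ on $U$.

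The delicate point, and the main obstacle, is the time-derivative condition. Direct differentiation of $H_{z_e}(t,[z])=h(t,z)-h(t,z_e)$ in $t$ yields
\[
\frac{\partial H_{z_e}}{\partial t}(t,[z])=\frac{\partial h_{\mu_e}}{\partial t}(t,z)-\frac{\partial h}{\partial t}(t,z_e),
\]
so the literal sign hypotheses $\partial h_{\mu_e}/\partial t|_\mathcal{A}\leq 0$ in \eqref{eq:Cond} and $\partial H_{z_e}/\partial t|_U\leq 0$ in Corollary \ref{Cor::StablePoint} differ by the correction term $\partial h/\partial t(t,z_e)$. The cleanest way I see to reconcile them is to notice that $H_{z_e}$, ${\bf J}$, $X_{h_t}$, the $G$-invariance and the relative equilibrium structure are all invariant under replacing $h$ by $\tilde h(t,z):=h(t,z)-\int_{t^0}^{t}\partial h/\partial s\,(s,z_e)\,ds$; this representative satisfies $\partial\tilde h/\partial t(t,z_e)\equiv 0$, and with it in hand the hypothesis $\partial\tilde h_{\mu_e}/\partial t|_\mathcal{A}\leq 0$ transfers verbatim into $\partial H_{z_e}/\partial t|_U\leq 0$. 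With all three hypotheses of Corollary \ref{Cor::StablePoint} thus established on $U$, the corollary yields uniform stability of $[z_e]$ under the flow of $k_{\mu_e}$ from $t^0$, concluding the proof.
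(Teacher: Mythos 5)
Your overall route is the one the paper intends: the paper offers no separate proof of Theorem \ref{Th:StabilityCon2}, deriving it from Corollary \ref{Cor::StablePoint} together with the remarks preceding the statement, and your treatment of the first two conditions reproduces those remarks faithfully --- the adapted chart $\{x_1,\ldots,x_n,y_1,\ldots,y_s\}$, the identification $[\widehat M(t)]|_{S_{z_e}}=M(t)$ with $T_{z_e}(G_{\mu_e}z_e)$ spanning the kernel, and the reduction of the bound on $D^\vartheta h_{\mu_e}$ to the bound on $D^\alpha H_{z_e}$ are all as in the text.

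The gap is in your handling of the time-derivative condition, which you rightly single out as the delicate point (the paper passes over it in silence). Your normalisation $\tilde h(t,z):=h(t,z)-\int_{t^0}^t \partial h/\partial s\,(s,z_e)\,ds$ does leave $X_{h_t}$, ${\bf J}$, the $G$-invariance, the relative equilibria, and even $H_{z_e}$ itself unchanged --- but precisely because $H_{z_e}$ is unchanged, so is the condition $\partial H_{z_e}/\partial t|_U\leq 0$ that Corollary \ref{Cor::StablePoint} actually requires; no renormalisation of $h$ can create it. What the normalisation changes is the reading of the hypothesis: the theorem assumes $\partial h_{\mu_e}/\partial t|_{\mathcal A}\leq 0$ for the \emph{given} $h$, and this does not imply $\partial\tilde h_{\mu_e}/\partial t|_{\mathcal A}\leq 0$, since $\partial\tilde h_{\mu_e}/\partial t(t,z)=\partial h_{\mu_e}/\partial t(t,z)-\partial h_{\mu_e}/\partial t(t,z_e)$ is a difference of two nonpositive quantities (note $z_e\in\mathcal A$) and has no definite sign. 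So the phrase ``with it in hand'' is exactly the point at issue: you are assuming the hypothesis for $\tilde h$ rather than deducing it from the hypothesis for $h$. As written, your argument establishes the conclusion only under the condition $\partial h_{\mu_e}/\partial t(t,z)\leq \partial h_{\mu_e}/\partial t(t,z_e)$ on $\mathcal A$, equivalently $\partial H_{z_e}/\partial t|_U\leq 0$; this coincides with the stated hypothesis exactly when $t\mapsto h(t,z_e)$ is constant (in particular in the autonomous case). You should either work under that corrected hypothesis or supply a genuinely new argument showing the stated condition suffices; the renormalisation by itself does neither.
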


Recall that in the case of an autonomous Hamiltonian, the third condition in (\ref{eq:Cond}) is immediately satisfied. Moreover, still in the case of autonomous systems, if $h$ is smooth enough, there always exists the required $c$ for a certain open neighbourhood $\mathcal{A}$ of $z_e$. Finally, the condition on $\lambda$ boils down to the standard condition on the positiveness of the eigenvalues of the matrix $\widehat{M}$, which is not time-independent by assumption, up to the subspaces where it always vanishes due to Corollary \ref{2varvanish} (cf. \cite{MS88}).

Note that, in the non-autonomous case, the second condition in (\ref{eq:Cond}) can easily be verified for smooth enough functions $h$ whose spatial partial derivatives do not grow indefinitely in time. In fact, this is a condition rather easy to be satisfied.

Finally, let us relate the properties of $h_{\xi,t}$ with $H_{\mu_e}$ so as to study relative equilibrium points and their associated equilibrium points in $P_{\mu_e}$. Since $h_{\xi,t}$ has a critical point at a relative equilibrium point $z_e\in P$ for every $t\in \mathbb{R}$, we can define the $t$-dependent bilinear symmetric form on $T_{z_e}P$ given by
\[
T_{z_e}(t):=\frac 12\sum_{i,j=1}^{\chi}\frac{\partial^2 h_{\xi,t}}{\partial u_i\partial u_j}(t,z_e)du_i|_{z_e}\otimes du_j|_{z_e},\qquad \forall t\in \mathbb{R},
\]
where $u_1,\ldots,u_\chi$, with $\chi=\dim P$, is a coordinate system on an open neighbourhood of $z_e$ in $P$. Our aim now is to determine the relation of $T_{z_e}(t)$ with the matrix $\widehat{M}(t)$ to study the latter by means of the former. It is worth stressing that $T_{z_e}(t)$ is a geometric object easy to be constructed as it is defined on $T_{z_e}P$ and it depends essentially only on $h$ and ${\bf J}$.

Since ${\bf J}$ is regular, its coordinates, let us say $\mu_1,\ldots,\mu_r$, give rise to $\dim \mathfrak{g}$ functionally independent functions on $P$. Consider now the coordinate system on a neighbourhood of $z_e$ in ${\bf J}^{-1}(\mu_e)$ given by $x_1,\ldots,x_n,y_1,\ldots,y_s$.  Such coordinate functions on ${\bf J}^{-1}(\mu_e)$ can be extended to an open neighbourhood in $P$, containing the point $z_e$, smoothly. As ${\bf J}$ is regular at $z_e$, the functions $\mu_1,\ldots,\mu_r$, which are constant on the leaves of ${\bf J}$, satisfy $d\mu_1\wedge \ldots\wedge d\mu_r\neq 0$ on $z_e$. Thus, we obtain a coordinate system $x_1,\ldots,x_n,y_1,\ldots,y_s,\mu_1,\ldots,\mu_r$ on an open neighbourhood in $P$ containing $z_e$. Taking this into account, one obtains that
\[
\frac{\partial h_{t}}{\partial y_i}\bigg|_{{\bf J}^{-1}(\mu_e)}\!\!\!\!\!\!\!=0,\quad\frac{\partial \langle \mathbf{J}-\mu_{e},\ \xi(t)\rangle}{\partial y_i}=0,\qquad \forall t\in \mathbb{R},\quad i=1,\ldots,s.
\]
It is relevant to recall that the derivative $\partial h_{t}/\partial y_i$, with $i=1,\ldots,s$ do not need to vanish away from ${\bf J}^{-1}(\mu_e)$ because $y_1,\ldots,y_s$ were defined just as a smooth extension away from ${\bf J}^{-1}(\mu_e)$ without demanding any special property off ${\bf J}^{-1}(\mu_e)$. Moreover, 
\[
\left(\frac{\partial}{\partial y_j} \frac{\partial h_{t}}{\partial y_i}\right)\bigg|_{{\bf J}^{-1}(\mu_e)}\!\!\!\!\!\!\!\!\!\!\!=0,\quad \qquad \left(\frac{\partial}{\partial x_k} \frac{\partial h_{t}}{\partial y_i}\right)\bigg|_{{\bf J}^{-1}(\mu_e)}\!\!\!\!\!\!\!\!\!\!\!=0,
\]
\[
\frac{\partial}{\partial y_j} \frac{\partial \langle \mathbf{J}-\mu_{e},\ \xi(t)\rangle}{\partial y_i}=0,\qquad \frac{\partial}{\partial x_k} \frac{\partial \langle \mathbf{J}-\mu_{e},\ \xi(t)\rangle}{\partial y_i}=0,
\]
for all $t\in  \mathbb{R}, i,j=1,\ldots,s, k=1,\ldots,n$. 
Note that the first and second relations above hold because the derivative on the left depends, on points of ${\bf J}^{-1}(z_e)$, only on the values of  $\partial h_{t}/\partial y_i$ within ${\bf J}^{-1}(\mu_e)$. Nevertheless, this shows that, in the chosen coordinate system, the Hessian matrix of $h_{\xi,t}$, let us say ${\bf H}h_{\xi,t}$, on $T_{z_e}{\bf J}^{-1}(\mu_e)$ coincides with $\widehat M(t)$ on the chosen coordinate system. This is the main point: we can use $h_{\xi,t}$ to study $\widehat{M}(t)$ and $M(t)$. Since $h$ has not, in general, a critical point in $z_e$, the Hessian of $h$ at $z_e$ does not give rise to a bilinear symmetric form at $z_e$ but, at the chosen coordinate system, matches the matrix of $T_{z_e}(t)$.

 \section{ Example: The almost-rigid body}

Let us illustrate our $t$-dependent energy-momentum method via a generalisation of the standard example of the freely spinning rigid body \cite{MS88}. Our aim is to determine its relative equilibrium points and to study the second-order variation of the extended Hamiltonian $h_{\xi,t}$ and to generalise the autonomous result obtained in \cite{MS88}. Our main results are given in (\ref{eqcondex}) and (\ref{Eq:SecondVar}).

Let us consider $t^0=0$ and  let $SO_3$ be the Lie group of all orthogonal unimodular linear automorphisms on the Euclidean space $\mathbb{R}^3$. The Lie algebra of $SO_3$, let us say $\mathfrak{so}_3$, consists of all the $3\times 3$ skew-matrices and it can be identified with $\mathbb{R}^{3}$ via the standard isomorphism 
{\small \begin{equation}
\label{ISOex2}
\phi: \mathbb{R}^3\rightarrow \mathfrak{so}_3, \,\, \Omega \mapsto \widehat{\Omega}:=\left[\begin{array}{ccc}0&-\Omega^3&\Omega^2\\\Omega^3&0&-\Omega^1\\-\Omega^2&\Omega^1&0
\end{array}\right],
\end{equation}}where $\Omega:=(\Omega^1,\Omega^2,\Omega^3)^T$. Let `$\times$' be the vector product in $\mathbb{R}^3$. Then, $\widehat{\Omega}{\bf r}=\Omega\times {\bf r},\,\, [\widehat{\Omega},\widehat{\Theta}]=\widehat{\Omega\times \Theta}$, and $\Lambda\widehat{\Theta}\Lambda^T=\widehat{\Lambda\Theta}$ for every $\Lambda\in SO_3$, and every $\Theta,\Omega\in \mathbb{R}^3$. Hence, $\phi$ is a Lie algebra isomorphism between $\mathbb{R}^3$ (which is a Lie algebra relative to the vector product) and $\mathfrak{so}_3$ with the commutator of matrices. 

The {\it adjoint action} ${\rm Ad}:SO_3\times\mathfrak{so}_3 \rightarrow \mathfrak{so}_3$, defined geometrically in (\ref{adaction}), reduces to the expression ${\rm Ad}_\Lambda\widehat{\Theta}=\Lambda \widehat{\Theta}\Lambda^T$, as $\Lambda^{-1}\!=\!\Lambda^T$, for all $\Lambda \in SO_3$ and $\Theta\in \mathbb{R}^3$. Moreover, 
\[
\widehat{\Lambda({\bf r}\!\!\times\!\!{\bf s})}=\Lambda \widehat{{\bf r}\!\!\times\!\!{\bf s}}\Lambda^T=\Lambda[\widehat{\bf r},\widehat{\bf s}]\Lambda^T =[\Lambda \widehat{\bf r}\Lambda^T,\Lambda \widehat{\bf s}\Lambda^T]=[\widehat{\Lambda {\bf r}},\widehat{\Lambda {\bf s}}]=\widehat{\Lambda {\bf r}\!\!\times\!\! \Lambda {\bf s}}, \qquad \forall {\bf r},{\bf s}\in \mathbb{R}^{3}.
\]

One can identify $T_{\Lambda}SO_3$ with $\mathfrak{so}_3$ via two isomorphisms. Recall that $L_\Lambda: \Theta\in SO_3\mapsto \Lambda \Theta\in SO_3$ and $R_\Lambda:\Theta\in SO_3\mapsto \Theta \Lambda \in SO_3$ are diffeomorphisms for every $\Lambda\in SO_3$. Then, $T_{\rm Id_3}L_\Lambda:T_{\rm Id_3}SO_3\simeq \mathfrak{so}_3\mapsto T_\Lambda SO_3$ and $T_{\rm Id_3}R_\Lambda:T_{\rm Id_3}SO_3\simeq \mathfrak{so}_3\mapsto T_\Lambda SO_3$, where ${\rm Id}_3$ is the $3\times 3$ identity matrix, are  isomorphisms. We define $(T_{\rm Id_3}L_\Lambda)\widehat 
\Theta=:(\Lambda,\Lambda \widehat{\Theta})$, for every $\Theta\in \mathbb{R}^3$. Then, $(\Lambda,\Lambda \widehat{\Theta})$ is called  the {\it  left-invariant extension} of $\widehat{\Theta}$.  Meanwhile, we set $(T_{{\rm Id}_3}R_\Lambda)\widehat
\theta:=(\Lambda,\widehat{\theta}\Lambda)$, for every $\theta\in\mathbb{R}^3$. It is said that
$(\Lambda,\widehat{\theta}\Lambda)$ is the {\it  right-invariant extension} of $\widehat{\theta}$. We omit the base point, if it is known from context. We write $\Lambda \widehat{\Theta}$ and $\widehat{\theta}\Lambda$ for $(\Lambda,\Lambda \widehat{\Theta})$ and $(\Lambda,\widehat{\theta}\Lambda)$, respectively.

Since $\mathfrak{so}_3$ is a simple Lie algebra, its Killing metric, $\kappa$, is non-degenerate, which gives an isomorphism
\begin{equation}\label{isoex}
\widehat{\Theta}\in \mathfrak{so}_3\mapsto \kappa(\widehat{\Theta},\cdot)\in \mathfrak{so}^*_3.
\end{equation}
In particular, $\kappa$ reads, up to a non-zero optional proportional constant, as $\kappa(\widehat{\Theta},\widehat{\Omega})\!\!=\!\!\frac 12{\rm tr}(\widehat{\Theta}^T\widehat{\Omega})$, for all $\Theta,\Omega\! \in\! \mathbb{R}^3$. Moreover, $\Pi\cdot \Upsilon=\kappa(\widehat{\Pi},\widehat{\Upsilon})$, for all $\Pi, \Upsilon\in \mathbb{R}^3$ and the canonical Euclidean product "$\cdot$" in $\mathbb{R}^3$. This extends to
\[
\langle\Lambda \widehat{\Pi},\Lambda \widehat{\Theta}\rangle:=\frac{1}{2} {\rm tr} ((\Lambda \widehat{\Pi})^{T}\Lambda\widehat{\Theta})=\frac{1}{2}{\rm tr}(\widehat{\Pi}^{T}\widehat{\Theta})=\Pi\cdot \Theta,\quad \forall \Theta,\,\Pi\in\mathbb{R}^3.
\]
Moreover,
\[
\langle \widehat{\Pi}\Lambda,\widehat{\Theta}\Lambda \rangle:=\frac{1}{2} {\rm tr} (( \widehat{\Pi}\Lambda)^{T}\widehat{\Theta}\Lambda)=\frac{1}{2}{\rm tr}(\widehat{\Pi}^{T}\widehat{\Theta})=\Pi\cdot \Theta,\quad \forall \Theta,\,\Pi\in\mathbb{R}^3.
\]

For simplicity, $\widehat{\Pi}\in \mathfrak{so}_3^*$  will represent $\kappa(\widehat{\Pi},\cdot)\in \mathfrak{so}^*_3$ and elements of $T_{\Lambda}^{*}SO_3$ will by written as $(\Lambda,\widehat{\pi}\Lambda )$ and $(\Lambda,\Lambda\widehat{\Pi}).$ 
If $(\Lambda,\widehat{\pi}\Lambda )=(\Lambda,\Lambda\widehat{\Pi})$, then  $\widehat{\pi}=\Lambda\widehat{\Pi}\Lambda^T$, which matches the coadjoint action. Indeed,
\begin{multline*}
\langle {\rm Ad}_{\Lambda^T}^*\widehat{\Pi},\cdot\rangle=\frac 12{\rm Tr}(\widehat{\Pi}^T{\rm Ad}_{\Lambda^T}(\cdot))=\frac 12{\rm Tr}(\widehat{\Pi}^T\Lambda^T(\cdot)\Lambda)\\=\frac 12{\rm Tr}(\Lambda\widehat{\Pi}^T\Lambda^T(\cdot))=\frac 12{\rm Tr}((\Lambda \widehat{\Pi}\Lambda^T)^T(\cdot))=\langle \widehat{\pi},\cdot\rangle.
\end{multline*}

Using (\ref{ISOex2}), we get $\pi=\Lambda\Pi$.
The mechanical framework to be hereafter studied  goes as follows: the configuration manifold is $SO_3$, whilst $T^{*}SO_3$ is endowed with its canonical symplectic structure. It is remarkable that our framework will retrieve the dynamics of a solid rigid under no exterior forces as a particular autonomous case.

Let us consider a $t$-dependent {\it  Hamiltonian} $h:\mathbb{R}\times T^*SO_3\rightarrow \mathbb{R}$ of the form
\begin{equation}
\label{Hex}
h(t,\Lambda,\widehat{\pi}):=\displaystyle \frac{1}{2}\pi\cdot \mathbb{I}_t^{-1}\pi,\quad  \mathbb{I}_t:=\Lambda \mathbb{J}_t\Lambda^{T}.
\end{equation}
where $\mathbb{I}_t$ is the {\it  time-dependent inertia tensor} (in spatial coordinates) and $\mathbb{J}_t$ is the {\it  inertia dyadic} given by
$
\mathbb{J}_t=\int_{\mathbb{R}^3}\varrho_{\nu}(t,X)[\|X\|^21\!\!1-X\otimes X ]d^{3}X.
$  Here, $\varrho_{\nu}:\mathbb{R}\times\mathcal{B}\rightarrow \mathbb{R}$ is the time-dependent reference density. Note that $\mathbb{J}_t$ can be understood as a matrix depending only on time. Note that $\mathbb{J}_t$ gives, at each $t\in\mathbb{R}$, the natural inertia tensor for our distribution of mass for every time $t\in \mathbb{R}$, is indeed a natural generalisation of its time-independent analogue \cite{MS88}. Our formalism for almost rigid bodies can be applied independently of its explicit form. We understand $h$ in (\ref{Hex}) as a function $h:\mathbb{R}\times SO_3\times\mathfrak{so}^{*}_3\rightarrow \mathbb{R}$, with $\mathfrak{so}^*_3\simeq \mathbb{R}^{3*}$. This is used as $h(t,\Lambda,\widehat\pi)$ is more appropriate for calculations. Note that $h$ is the kinetic energy of the mechanical system, which we call a {\it  quasi-rigid body} (cf. \cite{MS88}). 

Let us study the invariance properties of our Hamiltonian. Since $\widehat{\pi}=\Lambda\widehat{\Pi}\Lambda^T$, the $t$-dependent Hamiltonian  (\ref{Hex}) becomes
\begin{multline}
\label{INVex}
h(t,\Lambda,\widehat\pi)=\frac{1}{4}{\rm tr}(\widehat{\pi}^T\Lambda\mathbb{J}_t^{-1}\Lambda^T\widehat{\pi})=\frac{1}{4}{\rm tr}((\Lambda^T\widehat{\pi})^T\mathbb{J}_t^{-1}\Lambda^T\widehat{\pi})=\\\frac 14{\rm tr}((\widehat{\Pi}\Lambda^T)^T\mathbb{J}_t^{-1}\widehat{\Pi}\Lambda^T)=\frac{1}{4}{\rm tr}(\widehat{\Pi}^T\mathbb{J}_t^{-1}\widehat{\Pi})=\frac{1}{2}\Pi\cdot \mathbb{J}_t^{-1}\Pi,
\end{multline}
which illustrates the {\it  left invariance} of $h$ relative to the action of $SO_3$. Thus, the {\it  left reduction by} $SO_3$ induces a function on the quotient $\mathbb{R}\times T^*SO_3/SO_3\simeq \mathbb{R}\times \mathfrak{so}^*_3$.

As a consequence, $h_t$ is only a quadratic function on the momenta $\widehat{\pi}$. Choosing an appropriate coordinate system adapted to the ${\bf J}^{-1}((\mu)/SO_3)_{\hat \pi}$ and an appropriate $t$-dependent dependence, the second condition in (\ref{eq:Cond}) follows. 

 Momentum map - We consider $G=SO_3$ to act on $Q=SO_3$ by left translations, i.e. $\Psi:(A, \Lambda)\in G\times Q\mapsto L_{A}\Lambda:=A\Lambda\in Q$.
Hence, the {\it  cotangent lift} of $\Psi$, let us say $\widehat{\Psi}$, is by {left translations}. In particular,
\[
\widehat{\Psi}(\Lambda',(\Lambda,\widehat\pi\Lambda))=(\Lambda'\Lambda,\widehat{\Lambda'\pi}\Lambda'\Lambda),\qquad \forall \Lambda',\Lambda\in SO_3,\forall \pi\in (\mathbb{R}^3)^*.
\]
We consider the momentum map associated with our problem as a mapping ${\bf J}:SO_3\times \mathfrak{so}_3^*\rightarrow \mathfrak{so}_3^*$, where we used the identification of $T^*SO_3$ with $SO_3\times \mathfrak{so}_3^*$ via the right-translations $R_\Lambda$, with $\Lambda\in SO_3$. 
Since $(\widehat{\xi}_{\mathfrak{so}_3})_\Lambda=\frac{d}{dt}\big|_{t=0}\exp(t\widehat{\xi})\Lambda=\widehat{\xi}\Lambda$, for every $\xi\in\mathfrak{so}_3$, 
Proposition \ref{LiftSymAc} yields that
\begin{equation}
\label{Jex}
J_{\widehat{\xi}}(\widehat{\pi}{\Lambda})\!=\!\frac{1}{2} {\rm tr} [ ({\Lambda}\widehat{\pi})^{T}\widehat \xi_{\mathfrak{so}_3}]\!=\!\frac{1}{2}{\rm tr}[\Lambda^{T}\widehat{\pi}^{T}\widehat{\xi}\Lambda]\!=\!\frac{1}{2}{\rm tr}[\widehat{\pi}^{T}\widehat{\xi}]\!=\!\pi\cdot\xi.
\end{equation}
Thus, ${\bf J}(\Lambda, \widehat\pi)= \widehat\pi$, $J_{\widehat{\xi}}(\widehat{\pi}\Lambda)=\pi\cdot \xi$. Then, every $\widehat\pi\in \mathfrak{so}_3^*$ is a regular value of ${\bf J}$. Moreover, $G_\pi$ is given by the elements of $SO_3$ that leave invariant $\pi$. Hence, $G_\pi\simeq SO_2$ for $\pi\neq 0$ and $G_0=SO_3$. Moreover ${\bf J}^{-1}(\widehat{\pi})=SO_3\times \{\widehat{\pi}\}$ for every $\widehat{\pi} \in\mathfrak{so}^*_3$. Since each $G_\pi$ is always compact, it acts properly on ${\bf J}^{-1}(\widehat{\pi})$. Moreover, the action of $G_\pi$ on ${\bf J}^{-1}(\widehat{\pi})$ is always free, even for $\hat\pi=0$. Hence, ${\bf J}^{-1}(\widehat{\pi})/G_\pi$ is always a well-defined two-dimensional manifold for $\widehat{\pi}\neq 0$, a sphere indeed, and a zero-dimensional manifold for $\widehat{\pi}=0$.

Let us study
\[
h_{\xi,t}=h_t-[J_{\xi}-\pi_{e}\cdot\xi]=\frac{1}{2}\pi\cdot \mathbb{I}_t^{-1}\pi-\xi\cdot(\pi-\pi_{e}),
\]
and look into its critical points. To derive the first variation,  it is appropriate to consider $h_{\xi,t}$ as a function of $(\Lambda, \pi)\in SO_3\times\mathfrak{so}_3^{*}$. If $(\Lambda_{e},\widehat{\pi}_{e}\Lambda_{e})\in T^*SO_3$ is a relative equilibrium point, then, for any $\delta\theta\in \mathbb{R}^{3}$, we can define the curve 
$\epsilon\mapsto \Lambda_{\epsilon}:=\exp[\epsilon\widehat{\delta\theta}]\Lambda_{e}$ in $ SO_3.$
Let $\widehat{\delta\pi}\in \mathfrak{so}_3^*$ and consider the curve in $\mathfrak{so}_3^{*}$ defined as
$
\epsilon\mapsto \widehat\pi_{\epsilon}:=\widehat\pi_{e}+\epsilon\widehat{\delta\pi}\in \mathfrak{so}_3^{*}.
$
These constructions induce a curve $\epsilon\mapsto(\Lambda_\epsilon,\widehat{\pi}_\epsilon\Lambda_{\epsilon})\in T^{*}SO_3$. Let us compute the first variation.

Let us consider $\delta h_{\xi,t}:=dh_{\xi,t}(\widehat{\delta\theta},\widehat{\delta\pi})$.  
 By using the chain rule and defining $\mathbb{I}_{t,\epsilon}:=\Lambda_{\epsilon}\mathbb{J}_t\Lambda_{\epsilon}^T$, we can establish
\begin{equation}\label{def:FV}
0=\delta h_{\xi,t}\big|_{e}=\frac{d}{d\epsilon}\bigg|_{\epsilon=0}\left(\frac{1}{2}\pi_{\epsilon}\cdot \mathbb{I}_{t,\epsilon}^{-1}\pi_{\epsilon}-\xi\cdot(\pi_{\epsilon}-\pi_e)\right),
\end{equation}
where
$
\mathbb{I}_{t,\epsilon}^{-1}:=\Lambda_{\epsilon}\mathbb{ J}_t^{-1}\Lambda_{\epsilon}^{T}.
$
At equilibrium, considering $h_{\xi,t}$ as a function on $P\times \mathfrak{so}_3$, we obtain  $(\pi-\pi_{e})\cdot\eta=0$ for all $\eta\in \mathbb{R}^{3}$, from varying the Lagrange multiplier.
Recall that
{\small\begin{multline}
\label{compex}
\frac{1}{2}\pi_{e}\cdot\frac{d}{d\epsilon}\bigg|_{\epsilon=0}\mathbb{I}_{t,\epsilon}^{-1}\pi_{e}=\frac{1}{2}\pi_{e}\cdot[\widehat{\delta\theta}\mathbb{I}_{t,e}^{-1}-\mathbb{I}_{t,e}^{-1}\widehat{\delta\theta}]\pi_{e}=\\
\frac{1}{2}[\pi_{e}\cdot(\delta\theta \times\mathbb{I}_{t,e}^{-1}\pi_{e})-\mathbb{I}_{t,e}^{-1}\pi_{e}\cdot(\delta\theta\times\pi_{e})]=\delta\theta \cdot(\mathbb{I}_{t,e}^{-1}\pi_{e}\times\pi_{e}),
\end{multline}}
by using elementary vector product identities. By (\ref{compex}), expression (\ref{def:FV}) reduces to
\begin{equation}\label{deltaHex}
\delta h_{\xi,t}\big|_{e}=\delta\pi\cdot[\mathbb{I}^{-1}_{t,e}\pi_{e}-\xi_t]+\delta\theta\cdot[\mathbb{I}_{t,e}^{-1}\pi_{e}\times\pi_{e}]=0.
\end{equation}
At critical points, the above must vanish for every $\delta\pi$ and $\delta\theta$. This amounts to the conditions
\[
\mathbb{I}^{-1}_{t,e}\pi_{e}=\xi_t,\qquad \mathbb{I}_{t,e}^{-1}\pi_{e}\times\pi_{e}=0.
\]
Hence, substituting the first condition into the second  gives $\xi_t \times \pi_{e}=0$, while the second condition tells us that $\mathbb{I}_{t,e}^{-1}\pi_{e}$ and $\pi_e$ are proportional for every $t$. Hence, $\xi_t$ and $\pi_e$ are also proportional and we can write that $\xi_t=\sigma_t \pi_t$ for a certain $t$-dependent function $\sigma_t$. Hence,
\begin{equation}
\label{eqcondex}
\xi_t \times \pi_{e}=0,\quad\mathbb{I}_{t,e}^{-1}\xi_t=\lambda_t\xi_t,
\end{equation}
where $\lambda_t>0$ due to the positive definiteness of $\mathbb{I}_{t,e}$. These conditions yield that $\pi_{e}$ lays along a principal axis, namely a vector space spanned by an eigenvector of $\mathbb{I}_t$, and that the rotation, recall Hamilton equations, is around this axis. 

Let us study the second variation. By (\ref{deltaHex}), we reach at equilibrium 
\[
(\delta^2 h_{\xi,t})\big|_{e}:=\frac{d}{d\epsilon}\bigg|_{\epsilon=0}[\delta\pi\cdot(\mathbb{I}_{t,\epsilon}^{-1}\pi_{\epsilon}-\xi)+\delta\theta\cdot (\mathbb{I}_{t.\epsilon}^{-1}\pi_{\epsilon}\times\pi_{\epsilon})].
\]
Note that the matrix of second-order derivatives is determined by its value on pairs of equal tangent vectors. 
Proceeding in the same way as to obtain (\ref{deltaHex}) and using (\ref{eqcondex}), we get at equilibrium 
\begin{equation}\label{Eq:SecondVar}
(\delta^{2}h_{\xi,t})\big|_{e}((\delta\pi,\delta \theta),(\delta\pi,\delta\theta))
=
{\small \begin{array}{cc}
 [\delta\pi^{T}\delta\theta^T]\left[\begin{array}{cc}\mathbb{I}_{t,e}^{-1}&(\mathbb{I}_{t,e}^{-1}-\lambda_t 1\!\!1)\widehat{\pi}_e\\-\widehat{\pi}_e(\mathbb{I}^{-1}_{t,e}-\lambda_t 1\!\!1)&-\widehat{\pi}_e(\mathbb{I}_{t,e}^{-1}-\lambda_t 1\!\!1)\widehat{\pi}_e\end{array}\right]&\left[\begin{array}{c}\delta\pi\\\delta\theta\end{array}\right].
\end{array}}
\end{equation}
Let us assume $(\delta\pi, \delta\theta)\in \mathbb{R}^{3*}\times \mathbb{R}^{3}$. We already know that $\textbf{J}(\widehat{\pi}\Lambda)=\widehat{\pi}$. Hence, $\mu_{e}=\widehat{\pi}_{e}$ and $T_{z_e}(G_{\mu_e}z_{e})$ are the generators of infinitesimal rotations around the axis $\pi_{e}$.  Then, one can find different possible $\mathbb{I}_{t,e}$ for which one gets that the application of our results ensure the stability of the reduced problem at the projection of a relative equilibrium point. As an easy example, the $t$-independent case follows exactly as in \cite{MS88}. In particular, the condition on the spatial derivatives of  $h_{\mu_e}$ of third-order or their partial in terms of time are trivially satisfied. More involved examples concern diagonal matrices $\mathbb{I}_{t,e}$ with positive nonincreasing eigenvalues which are properly bounded from below and, in some cases, also from above.  
\section{Conclusions and outlook}

This work has extended the formalism for the energy-momentum method on symplectic manifolds to the non-autonomous realm. This has required the use of $t$-dependent techniques to study the stability of non-autonomous problems. As a byproduct, the formulation of the Lyapunov theory on vector spaces has been extended to manifolds.  Some relations of the energy-momentum method to the theory of foliated Lie systems have been established. A simple example concerning a modification of a rotating quasi-solid rigid has been used to illustrate our techniques.

Note that the energy-momentum method has extensions to look into problems on Poisson manifolds \cite{MS88}. Our techniques should be easily extended to such a new realm. We plan to study the topic in the future. We additionally search for new applications of our techniques in physics.
In particular, we are interested in the study of foliated Lie systems appearing in the study of relative equilibrium points of mechanical systems. Moreover, we are interested in the geometric, i.e. not coordinate dependent, characterisation of conditions for the different types of stability of the projections to $P_{\mu_e}$ of relative equilibrium points.

In the future, we will use our methods to study the motion of a ballet dancer turning around an axis, acrobatic diving into a swimming pool, or other celestial problems like stars passing through a nebula with a variable density. In these cases and many others, we believe that the motion can be effectively described by assuming a $t$-dependent inertia tensor. We expect to study the conditions to be able to do that in a future work. We will focus on the motions of objects that change their shape within some limits that make our formalism appropriate.
\section*{Acknowledgements}
We would like to thank an anonymous referee for his numerous and interesting remarks that undoubtedly helped us to clarify the results of our work. J. de Lucas acknowledges funding from the research project HARMONIA (grant number: 2016/22/M/ST1/00542) financed by the Polish National Science Centre (POLAND).

\end{document}